\newcommand{\Tr}{{\rm Tr}}
\newcommand{\rmd}{\textrm{d}}
\newcommand{\JE}{J^{\textrm{E}}}
\newcommand{\JM}{J^{\textrm{M}}}
\newcommand{\Rend}{R_{\text{endable}}}
\theoremstyle{definition}
\newtheorem{thm}{Theorem}
\newtheorem{lemma}[thm]{Lemma}
\newcommand{\beq}{\begin{eqnarray}}
\newcommand{\eeq}{\end{eqnarray}}
\DeclareMathOperator{\Inn}{Inn}
\DeclareMathOperator{\Aut}{Aut}
\DeclareMathOperator{\Out}{Out}
\numberwithin{equation}{section}
\date{\today}
\title{
Non-Invertible Global Symmetries \\ and Completeness of the Spectrum
}
\institution{AMHERST}{\centerline{${}^{1}$Department of Physics, University of Massachusetts, Amherst, MA 01003 USA}}
\institution{HARVARD}{\centerline{${}^{2}$Department of Physics, Harvard University, Cambridge, MA 02138 USA}}
\institution{BERKELEY}{\centerline{${}^{3}$Department of Physics, University of California, Berkeley, CA 94720, USA}}
\institution{PRINCETON}{\centerline{${}^{4}$School of Natural Sciences, Institute for Advanced Study, Princeton, NJ 08540, USA}}
\authors{Ben Heidenreich\worksat{\AMHERST}\footnote{e-mail: {\tt bheidenreich@umass.edu}}, Jacob McNamara\worksat{\HARVARD}\footnote{e-mail: {\tt jmcnamara@g.harvard.edu}}, Miguel Montero\worksat{\HARVARD}\footnote{e-mail: {\tt mig.montero.m@gmail.com}}, Matthew Reece\worksat{\HARVARD}\footnote{e-mail: {\tt mreece@g.harvard.edu}}, Tom Rudelius\worksat{\BERKELEY,\PRINCETON}\footnote{e-mail: {\tt rudelius@berkeley.edu}}, and Irene Valenzuela\worksat{\HARVARD}\footnote{e-mail: {\tt ivalenzuela@g.harvard.edu}}}
\begin{document}
\hfill ACFI-T21-03 \vspace{-1.5\baselineskip} 
\abstract{
It is widely believed that consistent theories of quantum gravity satisfy two basic kinematic constraints: they are free from any global symmetry, and they contain a complete spectrum of gauge charges. For compact, abelian gauge groups, completeness follows from the absence of a 1-form global symmetry. However, this correspondence breaks down for more general gauge groups, where the breaking of the 1-form symmetry is insufficient to guarantee a complete spectrum. We show that the correspondence may be restored by broadening our notion of symmetry to include non-invertible topological operators, and prove that their absence is sufficient to guarantee a complete spectrum for any compact, possibly disconnected gauge group. In addition, we prove an analogous statement regarding the completeness of \emph{twist vortices}: codimension-2 objects defined by a discrete holonomy around their worldvolume, such as cosmic strings in four dimensions. We discuss how this correspondence is modified in various, more general contexts, including non-compact gauge groups, Higgsing of gauge theories, and the addition of Chern-Simons terms. Finally, we discuss the implications of our results for the Swampland program, as well as the phenomenological implications of the existence of twist strings.
}
\maketitle

\setcounter{tocdepth}{2}
\tableofcontents

\section{Introduction}
\label{sec:intro}

Although the landscape of quantum gravity theories may be vast, certain features seem to be universally true of all such theories. One such feature is the absence of global symmetries, including $p$-form global symmetries, for which the charged operators are supported on manifolds of dimension $p$ \cite{Gaiotto:2014kfa}. Another such feature is completeness of the spectrum---the presence of particles (or multiparticle states) transforming in every representation of the gauge group \cite{polchinski:2003bq}. Compelling evidence for the absence of global symmetries in quantum gravity has been given in \cite{banks:1988yz, kallosh:1995hi, Banks:2010zn, Harlow:2018tng, Harlow:2020bee, Chen:2020ojn,Belin:2020jxr, Yonekura:2020ino}, while arguments for completeness of the spectrum---also known as the ``Completeness Hypothesis''---were provided in \cite{Banks:2010zn, Harlow:2018tng}.

It has often been remarked that the absence of global symmetries and completeness of the spectrum are related to one another. Although the motivation for these two conjectures comes from gravity, their relationship is a purely field theory statement that can be studied in the context of effective quantum field theories, without committing to a particular UV completion.\footnote{Statements similar to ours have been discussed in the framework of algebraic QFT in~\cite{Casini:2020rgj}. In some cases, the underlying assumptions made there are significantly stronger than ours. Our arguments apply quite generally, for example, to EFTs (independent of their UV completion or lack thereof) or to TQFTs.} In what follows, we will make this relationship precise in the pure gauge theory of a connected, compact gauge group $G$: such a theory has a 1-form ``electric'' global symmetry, with symmetry group given by the center $Z(G)$, whose charged operators are Wilson lines. Such a symmetry is characterized by the presence of topological codimension-2 operators $U_g$, each labeled by an element $g \in Z(G)$, which fuse according to the group multiplication law:
\begin{equation}
U_g \times U_{g'} = U_{g''} \,,~~~~g'' = g g'.
\label{introfuse}
\end{equation}
This 1-form symmetry is explicitly broken to a subgroup in the presence of charged matter, and we show that it is broken completely if and only if the spectrum is complete. Thus, in such a theory, absence of the 1-form electric symmetry is in 1-1 correspondence with completeness of the spectrum. Note that this statement applies to both quantum field theories as well as quantum gravities.

However, this correspondence between the absence of global symmetries and completeness does not hold in general, as pointed out in reference~\cite{Harlow:2018tng}: 
a finite, nonabelian gauge group $G$, such as $S_4$, may have a trivial center, so it does not have a 1-form electric symmetry even if its spectrum is incomplete. Nonetheless, there is a generalization of the no global symmetries--completeness correspondence that applies to finite gauge groups \cite{Rudelius:2020orz}: completeness of the spectrum is equivalent to the absence of certain codimension-2 topological operators, known as Gukov-Witten operators \cite{Gukov_2006,Gukov_2010}, which are labeled by conjugacy classes of $G$. If $G$ is abelian, these Gukov-Witten operators generate the 1-form electric global symmetry. But if $G$ is nonabelian, they satisfy a more complicated fusion algebra than the one of \eqref{introfuse}, and in particular not every topological Gukov-Witten operator will have an inverse. In this sense, we might say that completeness of the spectrum is in 1-1 correspondence with the absence of \emph{non-invertible} 1-form electric global symmetries, which are characterized by the presence of topological, non-invertible codimension-2 operators.

In this paper, we will see that a similar story applies to \emph{all} compact gauge groups: completeness of the gauge theory spectrum is equivalent to the absence of (possibly non-invertible) 1-form electric global symmetries. When $G$ is connected, finite and abelian, or a simple direct product of a connected group with a finite abelian group, the electric 1-form symmetry is an ordinary, invertible symmetry. But when $G$ is finite and nonabelian, a nontrivial semidirect product of an abelian group with a connected group, or a product of a nonabelian finite group with a connected group, the electric 1-form symmetry will be non-invertible.

One example of the latter case is $O(2) \simeq U(1) \rtimes \mathbb{Z}_2$ gauge theory, which can be thought of as $U(1)$ gauge theory with the $\mathbb{Z}_2$ charge conjugation symmetry gauged. We will study this theory in detail, and we will see that the pure gauge theory has a non-invertible 1-form symmetry with a continuous family of non-invertible topological codimension-2 operators. In the presence of charged matter, this continuous non-invertible global symmetry may be broken to a discrete, non-invertible ``subgroup.'' If the spectrum is complete, it is broken entirely, and there are no topological Gukov-Witten operators whatsoever. For general compact gauge groups, we prove the following statement:
\begin{statement}[Electric Completeness vs.~Topological Gukov-Witten Operators]
Consider a gauge theory with compact gauge group $G$ coupled to a set of matter fields transforming in representations of $G$. Then the theory is electrically complete (i.e., states exist transforming in all possible representations of $G$) if and only if there are no topological Gukov-Witten operators in the theory.
\label{stat:ElectricCompleteness}
\end{statement}
This correspondence breaks down in the case of noncompact gauge groups, however: $\mathbb{R}$ gauge theory may have an incomplete (albeit dense) spectrum of states without any topological Gukov-Witten operators.

One objection to discussing completeness for nonabelian gauge groups is the possibility of confinement. In a confining gauge theory, all asymptotic particle states in flat space are neutral under the gauge symmetry, and so the sense in which charged states exist must be clarified. Relatedly, the statement of completeness relies on ``the'' gauge group, which is not a duality-invariant notion. In reference \cite{Harlow:2018tng}, this was resolved by restricting to completeness for ``long-range gauge symmetry,'' a distinct physical concept that is manifestly duality invariant, but which excludes confining gauge theories. Since we would like to discuss confining gauge theories, we need a different definition.

Our definition of completeness is that a charged state ``exists'' if the corresponding Wilson line operator may end on a charged point operator, which we think of as creating the charged state (see \cite{Rudelius:2020orz} for related discussion). Equivalently, the charged state exists in the defect Hilbert space on $S^{d - 1}$ with an insertion of a defect Wilson line at a point. In a confining theory, the energy of such a state will diverge as the volume of $S^{d - 1}$ grows, but the existence of such a state on a finite volume sphere shows that the theory kinematically includes charged states, even if they are dynamically confined in the IR. This definition may be made duality-invariant by requiring such states not only for Wilson lines, but for every line operator in the theory, a notion referred to as \emph{total completeness} \cite{Rudelius:2020orz}.

A similar story plays out on the magnetic side of things: continuous gauge groups feature 't Hooft operators of dimension $d-3$, whose topological classes are labeled by elements of $\pi_1(G)/\pi_0(G)$, where an element of $\pi_0(G)$ acts on a path in $\pi_1(G)$ via conjugation.\footnote{Note that $\pi_0(G)$ is a group, with the group structure descending from the multiplication in $G$.} If the action of $\pi_0(G)$ is trivial, then there is a magnetic $(d-3)$-form symmetry with group $\pi_1(G)^\vee$, the Pontryagin dual group of $\pi_1(G)$. This symmetry group is broken in the presence of dynamical, magnetically charged objects of dimension $d-3$ (monopoles, in four dimensions), and it is broken completely if and only if the magnetic spectrum is complete. If $\pi_0(G)$ acts nontrivially on $\pi_1(G)$, however---as in the case of $O(2)$ gauge theory---the magnetic $(d-3)$-form symmetry will be non-invertible.

In addition, whenever $\pi_0(G)$ is nontrivial, there will be a $(d-2)$-form, possibly non-invertible global symmetry generated by topological Wilson lines, which are in 1-1 correspondence with representations of $\pi_0(G)$. In this paper, we will not prove any general statement regarding the magnetic non-invertible $(d-3)$-form symmetry, but instead focus primarily on the latter $(d-2)$-form symmetry generated by the Wilson lines. The charged operators under this $(d-2)$-form symmetry are the Gukov-Witten operators themselves, and the symmetry will be broken in the presence of certain dynamical $(d-2)$-dimensional objects, which we will refer to as ``twist vortices.'' The global symmetry is broken entirely if and only if the spectrum of twist vortices is complete. Again, we prove a general statement:
\begin{statement}[Twist Vortex Completeness vs.~Topological Wilson Operators]
Consider a gauge theory with compact gauge group $G$ coupled to a set of dynamical twist vortices, which give rise to a holonomy when a charged probe particle circles them. Then the spectrum of twist vortices is complete (i.e., multi-vortex states of all allowed charges exist) if and only if there are no topological Wilson line operators in the theory.\label{stat:TwistVortexCompleteness}\end{statement}

In quantum gravity, it is natural to suspect that all such global symmetries---including non-invertible global symmetries---are broken. 1-form electric global symmetries may be broken simply by adding weakly-coupled charged matter in every representation of the gauge group, while the $(d-2)$-form and $(d-3)$-form magnetic global symmetries require the existence of magnetically-charged objects of appropriate dimension. One way to produce these objects within Lagrangian effective field theory is to un-Higgs the gauge group $G$ to a larger gauge group $\tilde{G}$. If $\tilde{G}$ is connected, then $\pi_0(\tilde{G})$ is trivial, and the $(d-2)$-form (possibly non-invertible) global symmetry will be broken completely. If $\tilde{G}$ is simply connected, then $\pi_1(\tilde{G})$ is also trivial, and the $(d-3)$-form magnetic global symmetry will also be broken completely. 

The relationship between completeness and the absence of 1-form electric global symmetries is modified in the presence of nontrivial Chern-Simons terms. These terms mix various electric and magnetic higher-form global symmetries into a single structure known as a ``higher-group'' global symmetry \cite{Cordova:2018cvg, Hidaka:2020iaz, Hidaka:2020izy}. In simple examples, such as $U(1)$ gauge theory with a BF-coupling or a $\theta F \wedge F$ coupling, we will see that this mixing can destroy the simple relationship between completeness (or magnetic completeness) of the spectrum and the absence of electric $1$-form global symmetries (or magnetic $(d-3)$-form global symmetries). However, the relationship between the endability of all extended operators and the absence of any topological operator persists in the presence of these Chern-Simons terms.

The remainder of the paper is structured as follows: in Section \ref{sec:Top}, we review the notion of a $p$-form global symmetry, topological operators, and the linking and fusion of such operators in pure gauge theories. In Section \ref{sec:Connected}, we establish a relationship between completeness/magnetic completeness and the absence of ordinary, invertible electric/magnetic global symmetries for compact, connected gauge groups, illustrating our results in $U(1)$ and $SU(N)/\mathbb{Z}_K$ gauge theory. In Section \ref{sec:Discrete}, we review the relationship between completeness and the absence of \emph{non-invertible} global symmetries for finite gauge groups, illustrating our results in $S_3$ gauge theory. In Section \ref{sec:Disconnected}, we extend this result to general compact, disconnected gauge groups, and we examine the non-invertible global symmetries of $O(2)$ gauge theory in depth. In Section \ref{sec:Noncompact}, we consider noncompact examples of $\mathbb{R}$ gauge theory and $\mathbb{Z}$ gauge theory, showing that the relationship between completeness and the absence of global symmetries breaks down in these cases. In Section \ref{sec:Higgsing}, we examine how (non-invertible) electric and magnetic global symmetries behave under Higgsing, and we show how these symmetries can be broken by un-Higgsing to a larger gauge group. In Section \ref{sec:CS}, we study how Chern-Simons terms affect our results. Finally, we end in Section \ref{sec:CONC} with conclusions and remarks on how our results fit into the larger Swampland program, as well as the phenomenological implications of twist vortices.

\section{Topological Operators in Gauge Theories}\label{sec:Top}

In this section, we review aspects of topological operators and higher-form symmetries, both in the general setting and in the special case of gauge theories. In particular, we review the notions of Wilson operators and Gukov-Witten operators in gauge theories, and derive the conditions under which these operators are topological in pure gauge theory.

\subsection{Higher-form Symmetries and Topological Operators}
\label{sec:generalTop}

As discussed in \cite{Gaiotto:2014kfa}, a $p$-form global symmetry is characterized by the presence of ``charge operators'' $U_g(\mathcal{M}^{(d-p-1)})$, also known as symmetry generators, each of which carries support on a closed manifold $\mathcal{M}^{(d-p-1)}$ of codimension $p+1$ and is labeled by an element $g$ of the symmetry group. These operators fuse according to the group multiplication law, namely,
\begin{equation}
U_g(\mathcal{M}^{(d-p-1)}) \times U_{g'}(\mathcal{M}^{(d-p-1)}) = U_{g''}(\mathcal{M}^{(d-p-1)}),~~~~g'' = g g'.
\label{eq:fusion}
\end{equation}
In particular, every symmetry generator $U_g$ has an inverse, given by $U_g^{-1} = U_{g^{-1}}$, such that $U_g \times U_{g^{-1}} =1$, the identity operator. Thus, we say that these symmetry generators are \emph{invertible}. They are also \emph{topological} in the sense that small, continuous deformations of $\mathcal{M}^{(d-p-1)}$ do not affect any physical observables provided $\mathcal{M}^{(d-p-1)}$ does not cross any charged operators in the deformation process.

Symmetries are associated with Ward identities. In the language of these symmetry generators, the Ward identity says that a symmetry generator $U_g(S^{(d-p-1)})$ supported on a sphere surrounding a charged operator $V(\mathcal{C}^{(p)})$ in some representation $\rho_V$ of the global symmetry is equal to the charged operator itself multiplied by $\rho_V(g)$, the action of $g$ in the representation $\rho_V$:
\begin{equation}
U_g(S^{d-p-1}) V(\mathcal{C}^{(p)}) =  \rho_V(g) V(\mathcal{C}^{(p)}) .
\label{eq:Ward}
\end{equation}
When the global symmetry group is abelian, which is necessarily the case when $p \geq 1$, $\rho_V(g)$ is simply a phase. The Ward identity can be understood as the result of shrinking the sphere $S^{d-p-1}$ to zero size on the operator $V$; such a shrinking is allowed due to the topological nature of the operator.

Not all topological operators are associated with global symmetries of the type considered above. Rather than satisfying a fusion algebra of the form \eqref{eq:fusion}, a more general topological operator of dimension $d-p-1$ may satisfy a fusion algebra of the form
\begin{equation}
T_a(\mathcal{M}^{(d-p-1)})  \times T_b(\mathcal{M}^{(d-p-1)})  = \sum_c N^c_{ab} T_c(\mathcal{M}^{(d-p-1)}) \,,
\label{fusion_category_fusion}
\end{equation}
for some integer coefficients $N^c_{ab}$. In particular, if $T_a$ and $T_b$ are symmetry generators, the sum on the right-hand side is given by a single term with coefficient 1, but for a more general topological operator this is not the case. Relatedly, a general topological operator $T_a$ is not necessarily invertible: there may not exist any topological operator $T_b$ such that $T_a \times T_b$ is equal to the identity. In this paper, we will refer to a set of such non-invertible topological operators which is closed under fusion, along with their associated fusion algebra, as a \emph{non-invertible global symmetry}. This notion has appeared before in the context of 2-dimensional CFTs (see, e.g., \cite{Verlinde:1988sn, Moore:1988qv, Frohlich:2006ch, Davydov:2010rm, Bhardwaj:2017xup, Chang:2018iay}) as well as more general condensed matter systems (see, e.g., \cite{Ji:2019jhk, Kong:2020jne, Kong_2020,Komargodski:2020mxz}) under various other names, including \emph{algebraic higher symmetry}, \emph{categorical symmetry}, and \emph{fusion category symmetry}, referring to the more complicated fusion algebra \eqref{fusion_category_fusion}. 

These more general topological operators may link with a ``charged'' operator $V(\mathcal{C}^{(p)})$ of dimension $p$, generalizing the Ward identity of \eqref{eq:Ward}:
\begin{equation}
T_a(S^{d-p-1}) V(\mathcal{C}^{(p)}) = B_V(a) V(\mathcal{C}^{(p)}) .
\label{eq:linking}
\end{equation}
Here, $B_V(a)$ is a linking coefficient.\footnote{In principle, one might imagine that $B_V(a)$ could be a matrix, but in every case we study it is simply a number.} We say that the operators $T_a$ and $V$ {\em link nontrivially} when $B_V(a) \neq B_1(a)$ where $B_1(a)$ is the linking coefficient with the identity operator (in particular, $B_V(a) = 0$ is nontrivial linking!). Like the equations above, this should be understood as an operator equation, valid within general correlation functions provided that there are no other operator insertions that link nontrivially with $S^{d-p-1}$ or $\mathcal{C}^{(p)}$. As before, this can be understood as the result of shrinking the sphere $S^{d-p-1}$ to zero size on the operator $V$. Note that \eqref{eq:linking} will lead to constraints on correlation functions of charged operators, similar to the case of an ordinary symmetry, although the non-invertibility and more complicated fusion algebra \eqref{fusion_category_fusion} will lead to qualitative differences.

We will refer to $B_1(a)$ as the \emph{quantum dimension} of the operator $T_a$, $\text{dim}(T_a)$.\footnote{For operators of higher dimension, there are multiple notions of quantum dimension one could construct, corresponding to different topologies of the support of the operator, and which do not obviously agree. We thank Meng Cheng, Ryan Thorngren, Xiao-Gang Wen, and Xueda Wen for discussions on this point.} This nomenclature originates from the literature on nonabelian anyons in $(2+1)$d TQFTs (see, e.g., the review \cite{Nayak:2008zza}). Anyons are inserted by line operators and, because a system with $n$ anyons is protected by a gap, the dimension of the Hilbert space $d_n$ is well defined. The quantum dimension is the asymptotic value of $d_n/n$ as $n\rightarrow\infty$, and it is not necessarily an integer.  In this paper, we consider an analogous concept in $d$ dimensions.

Quantum dimensions multiply under fusion and sum under addition of surfaces:
\begin{align}
\text{dim}(T_a \times T_b) &= \text{dim}(T_a) \times \text{dim}(T_b) \label{eq:qdimfuse}\\
\text{dim}(T_a + T_b) &= \text{dim}(T_a) + \text{dim}(T_b).
\end{align}
Note that the trivial surface operator has quantum dimension 1, and in fact so does any invertible operator, including charge operators for $p$-form symmetries.

So far, we have been considering the charged operator $V(\mathcal{C}^{(p)})$ defined on a manifold $\mathcal{C}^{(p)}$ without boundary. Suppose now, however, that $V(\mathcal{C}^{(p)})$ may be defined on a manifold $\mathcal{C}^{(p)}$ with boundary. This is not always possible in a given theory; if it is, we say that the operator $V$ is \emph{endable}. If a topological operator $T_a(S^{d-p-1})$ surrounds an endable operator $V(\mathcal{C}^{(p)})$ supported on some manifold $\mathcal{C}^{(p)}$ with boundary, $T_a(S^{d-p-1})$ may be either (a) shrunk to a point, yielding a factor of $B_V(a)$; or (b) unlinked from $V$ and then shrunk to a point (see Figure \ref{fig:toplinking}), yielding a factor of $B_1(a)$. This implies $B_V(a) = B_1(a)$, or in other words, any endable operator must link trivially with any topological operator. If $T_a = U_g$ is an invertible operator, we learn that any endable operator of dimension $p$ cannot carry charge under a $p$-form global symmetry. This relationship between endability of $V$ and the topological nature of $T_a$ will show up repeatedly in what follows.

\begin{figure}
\centering
\includegraphics[width=90mm]{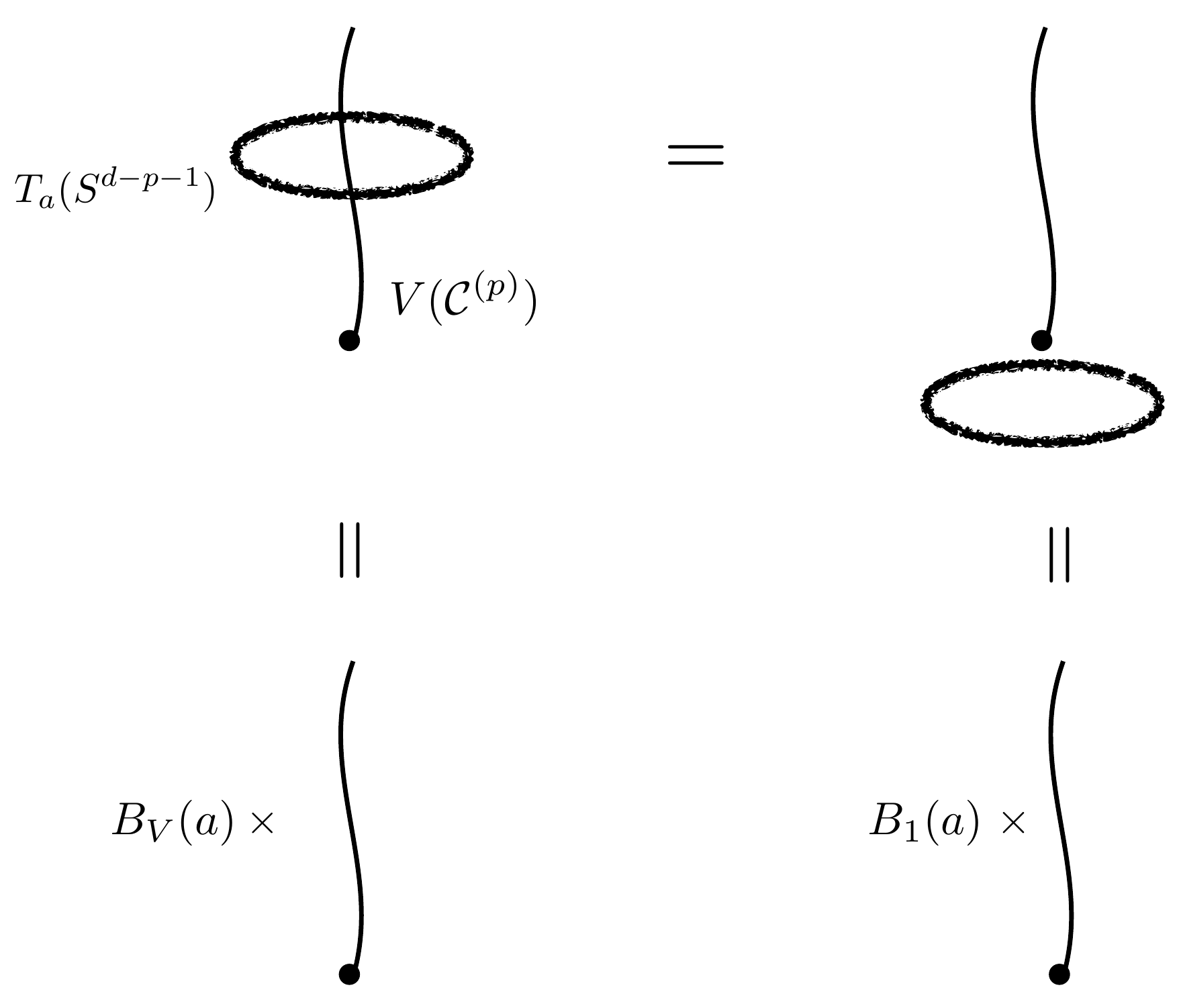}
\caption{Endable operators and topological operators. If a dimension $p$ charged operator $V(\mathcal{C}^{(p)})$ can end, a topological operator $T_{a}(S^{d-p-1})$ surrounding it may be shrunk to a point, yielding a factor of $B_V(a)$ (left), or it may be unlinked, yielding a trivial factor $B_1(a)$. This means that if $T_a$ does not satisfy $B_V(a)=B_1(a)$, then it cannot be topological whenever $V(\mathcal{C}^{(p)})$ is endable.}
\label{fig:toplinking}
\end{figure}

Finally, note that the property of being endable is closed under fusion: given two endable operators $V_1(\mathcal{C}^{(p)})$, $V_2(\mathcal{C}^{(p)})$ supported on a manifold with boundary $\mathcal{C}^{(p)}$, we may fuse them to form a new endable operator $V_3(\mathcal{C}^{(p)})$. Similarly, the property of being topological is also closed under fusion: if two surfaces $T_a$ and $T_b$ are topological, their fusion $T_a \times T_b$ will be as well.

\subsection{Wilson Operators and Gukov-Witten Operators}
\label{sec:WandGW}

We will illustrate the above abstract discussion with examples arising in gauge theories: Wilson operators and Gukov-Witten operators. By gauge theory, we mean a quantum field theory defined by a path integral over connections for some Lie group $G$,\footnote{For us, Lie groups include discrete groups, as they are Lie groups of dimension zero. For discrete groups, the Yang-Mills action is trivial, and the path integral is simply a count of $G$ bundles modulo gauge.} with the Yang-Mills action (in particular, some of our arguments must be modified for Chern-Simons theories). Wilson and Gukov-Witten operators may or may not be endable, and they may or may not be topological. In this subsection, we will review the definitions of these operators. In the remainder of this section, we will characterize the conditions under which they are endable and topological in pure gauge theories.

In a gauge theory, there is a natural set of line operators, the \emph{Wilson lines}, defined by the path-ordered exponential 
\begin{equation} 
W_\rho(\gamma) = \text{Tr}_{\rho}{\rm P} \left[ \exp\left(i \oint_{\mathcal{C}} A\right) \right],
\end{equation}
where $\rho$ is a representation of $G$ and $\mathcal{C}$ is an arbitrary closed curve. These lines have a natural physical interpretation as the worldvolume of a very massive probe (i.e., non-dynamical) particle charged under $G$ in the representation $\rho$. 

Wilson line operators may or may not be topological. For example, all of them are topological if $G$ is discrete, while none of them are topological if $G$ is compact and connected. A Wilson line is called \emph{endable} if it may be defined on an open curve with endpoints. A Wilson line in representation $\rho$ can end in a local charged operator in the same representation.\footnote{Such a charged operator is only well-defined when a Wilson line is attached, so strictly speaking it is not a genuine local operator at all. When we say a local charged operator, we are using a common abuse of terminology that simply indicates that the end of the Wilson line is pointlike.} This operator creates a charged particle (or collection of particles), so a Wilson line can end if there are states (particles) in the gauge theory transforming in the same representation $\rho$.

Two Wilson lines supported on the same curve $\gamma$ fuse according to the tensor product operation:
\begin{align}
\rho \otimes \mu = \bigoplus_{i} \nu_i ~~\Leftrightarrow ~~ W_\rho \times W_\mu = \sum_i W_{\nu_i}.
\label{Wilsonfusion}
\end{align}
In particular, Wilson line operators with $\dim(\rho) \neq 1$ are not invertible. Physically, this fusion can be understood by thinking of the Wilson line $W_\rho$ as a probe particle in the representation $\rho$. A pair of such probe particles in the representations $\rho$ and $\mu$ together form a multiparticle state in the representation $\rho \otimes \mu$, which by \eqref{Wilsonfusion} can be decomposed into a sum over states in the representations $\nu_i$.

Strictly speaking, \eqref{Wilsonfusion} should only be taken literally for topological Wilson lines, giving a well-defined fusion algebra in the sense of Section \ref{sec:generalTop}. By contrast, non-topological Wilson lines are not closed under fusion; there are short-distance singularities, and additional operators may appear when fusing two Wilson lines.\footnote{We thank Shu-Heng Shao for discussions on this point.} In spite of these subtleties, the OPE of two non-topological Wilson lines labeled by $\rho$ and $\mu$ must include Wilson lines labeled by every representation in the right-hand side of \eqref{Wilsonfusion}, as the Wilson lines labeled by $\{\nu_i\}$ taken together insert a complete set of states in the full reducible representation $\rho\otimes\mu$.\footnote{In general, the endability of a direct sum of operators only implies the endability of at least one of the summands. However, in the case of fusion \eqref{Wilsonfusion}, this complete set of states ensures that each summand $W_{\nu_i}$ is endable if each of $W_\rho$ and $W_\mu$ are endable.} Because of this, we will refer to \eqref{Wilsonfusion} as the fusion structure even for non-topological Wilson lines, as our interest is in the completeness of the charge spectrum, a kinematic question that isn't affected by the non-universal features of the Wilson line OPE.

Finally, another way to define a codimension-$k$ operator in a gauge theory is to excise some $k$-dimensional locus from spacetime and specify boundary conditions for the gauge field on the rest of the geometry. For instance, one could excise a codimension-2 manifold $\mathcal{M}^{(d-2)}$, and specify a choice of connection on the transverse $S^1$. $G$-connections on $S^1$ modulo gauge equivalence are classified by elements of $G$ (which specify the holonomy around the circle) modulo the conjugacy action of $G$ on itself (which implements gauge transformations). This is the same as the set of conjugacy classes of $G$. The resulting set of codimension-2 operators $T^{\text{GW}}_a(\mathcal{M}^{(d-2)})$, where $a$ is a conjugacy class in $G$,  are called \emph{Gukov-Witten operators} \cite{Gukov_2006, Gukov_2010}.  Just like Wilson lines, Gukov-Witten operators admit a simple physical interpretation: they correspond to insertions of probe (non-dynamical) \emph{vortices}, codimension-2 objects defined by the nontrivial gauge holonomy around their worldvolume.\footnote{For example, in $d = 4$ these are strings, while in $d = 10$ they are $7$-branes, such as D7-branes in Type IIB string theory.} Outside a vortex, the gauge field is locally pure gauge, but globally this is not the case.

For completeness, we describe the fusion of Gukov-Witten operators supported on the same codimension-2 manifold, following reference \cite{Dijkgraaf:1989hb}. Suppose we want to compute the fusion of Gukov-Witten operators labeled by two conjugacy classes $a$ and $b$. If we have two representatives $g \in a$, $g' \in b$, then the conjugacy class $c = [gg']$ of the product is an allowed fusion channel for the Gukov-Witten operators. Summing over all representatives, we obtain the fusion rule
\begin{equation}\label{GW_operator_fusion}
T^{\text{GW}}_a \times T^{\text{GW}}_b = \sum_{c} N_{ab}^c T^{\text{GW}}_{c},
\end{equation}
where the fusion coefficient $N_{ab}^c$ is the number of $G$-orbits in the set,
\beq
S = \left\{ (g, g', g'') \in a \times b \times c\ |\ g g' = g'' \right\},
\eeq
of ways of multiplying elements of $a$ and $b$ to obtain an element of $c$. As with \eqref{Wilsonfusion}, for non-topological Gukov-Witten operators we must only interpret \eqref{GW_operator_fusion} as describing a universal part of the Gukov-Witten operator OPE. We show in Section \ref{sec:topGW} that topological Gukov-Witten operators always correspond to conjugacy classes with finitely many elements, so the fusion coefficients $N_{ab}^c$ are finite integers when the conjugacy classes $a, b, c$ correspond to topological Gukov-Witten operators.

\subsection{Topological Operators in Pure Gauge Theories}

In this subsection, we will explain how to characterize which Wilson lines and Gukov-Witten operators are topological in pure gauge theory, i.e., in the absence of charged particles or twist vortices. We will see that this is correlated with which Gukov-Witten and Wilson lines operators (respectively) are endable. In the rest of the paper, starting in Section \ref{sec:Connected}, we will explain how this story changes in the presence of dynamical charged states. We also recommend that readers less familiar with the subject first take a look at the examples in Sections \ref{sec:Connected}, \ref{sec:Discrete}, and \ref{sec:Disconnected} if the following discussion is too abstract.

\subsubsection{Topological Gukov-Witten Operators and Endable Wilson Lines}
\label{sec:topGW}

Gukov-Witten operators are sometimes, but not always, topological. A familiar case in which they are topological is in free $U(1)$ gauge theory, where the Gukov-Witten operators generate a 1-form electric symmetry. On the other hand, in pure $SU(2)$ gauge theory, the only nontrivial topological Gukov-Witten operator is the one generating the $\mathbb{Z}_2$ 1-form symmetry; the rest are not topological. In this subsection, we ask: under what conditions is a Gukov-Witten operator in a pure gauge theory topological? The answer can be understood from the endability of the Wilson lines with which Gukov-Witten operators link.

Let us begin by assuming that a Gukov-Witten operator $T^{\text{GW}}_a(S^{d-2})$ is topological, and understanding how to specialize the general linking equation~\eqref{eq:linking} to this context. In the Euclidean picture, the Aharonov-Bohm phase is encoded in the linking properties of the corresponding Wilson line $W_\rho(\gamma)$ and the Gukov-Witten operator $T^{\text{GW}}_a(\mathcal{M}^{(d-2)})$. Specifically, if $S^{d-2}$ is a $(d-2)$-sphere that links the curve $\gamma$, then 
\begin{equation}
 T^{\text{GW}}_a(S^{d-2}) W_\rho(\gamma)= \frac{\chi_{\rho}(a)}{\chi_{\rho}(1)} \text{size}(a) \,  W_\rho(\gamma),
 \label{GWlink}
 \end{equation}
where we have defined the \emph{character} of the conjugacy class $a$ in representation $\rho$,
\begin{equation} \chi_{\rho}(a):= \text{Tr}(\rho(g_1)).\label{ABph1}\end{equation}

\begin{figure}
\begin{center}
\includegraphics[width=78mm]{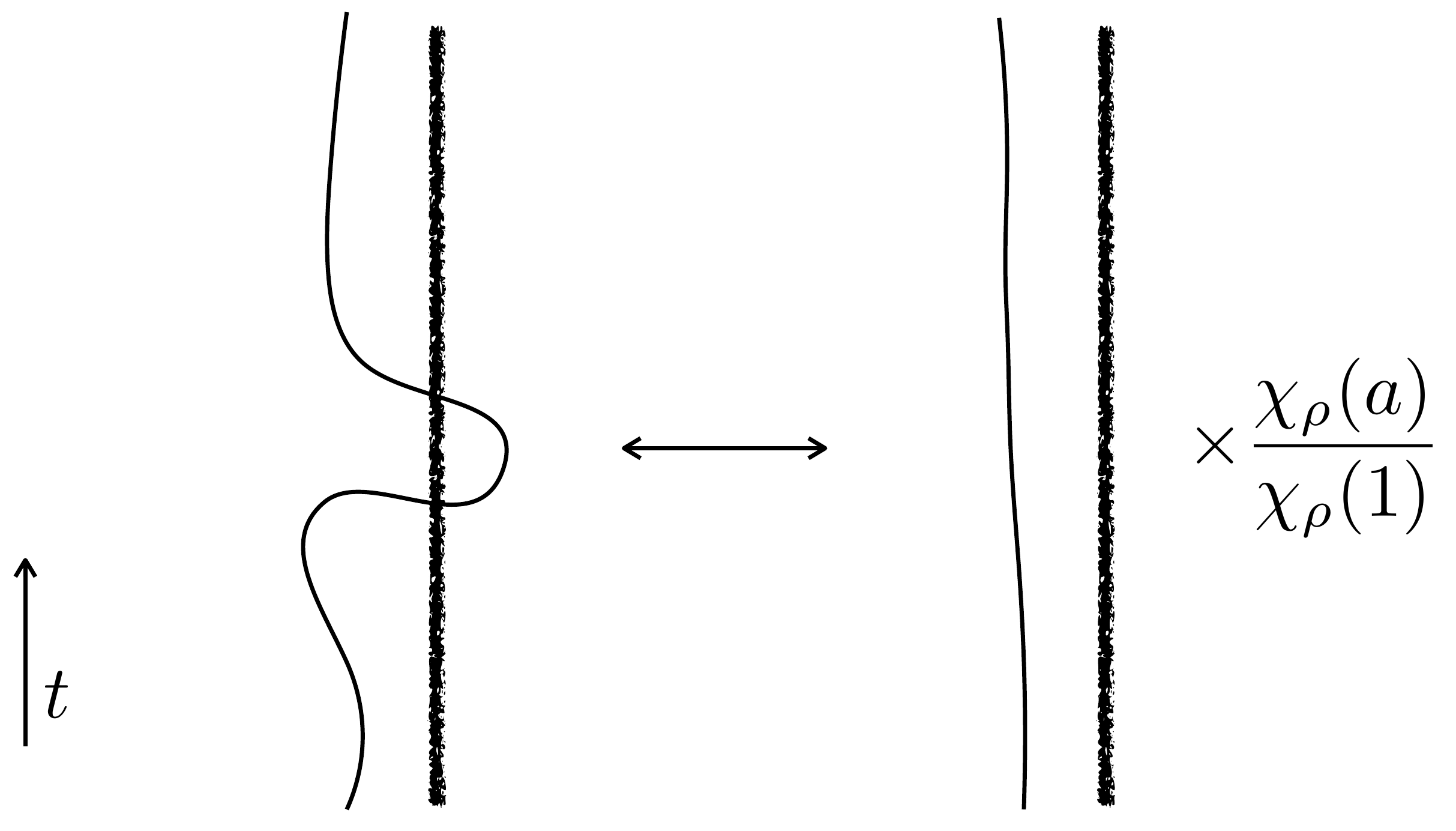}
\caption{Cross-section of a particle worldline going around a vortex; the time direction is shown, and the $(d - 3)$ spatial directions along which the vortex extends are suppressed. The fact that the particle winds around the vortex means that the corresponding Wilson line and Gukov-Witten operators are linked, in the Euclidean picture. The linking property \eqref{eq:linking} of topological and charged operators encodes the fact that the correlator on the left and right pictures differ by a multiplication by the character of the representation, which is the Aharonov-Bohm factor.}
\label{f1GW}
\end{center}
\end{figure}

In the Lorentzian picture, this equation can be understood as a generalization of the Aharanov-Bohm effect: the holonomy carried by the probe vortex (i.e., the Gukov-Witten operator) can be measured via an Aharonov-Bohm experiment in which a probe particle in a representation $\rho$ (i.e., a Wilson line) is moved adiabatically around the vortex. This process is illustrated in Figure \ref{f1GW}.
By taking the probe particle to be in a fully mixed state in its gauge indices, one gets a gauge-invariant characterization of the holonomy around the vortex as \cite{Alford:1992yx}
\begin{equation}
\frac{1}{\text{dim}(\rho)} \text{Tr}(\rho(g_1)) := \frac{\chi_{\rho}(a)}{\chi_{\rho}(1)},
\label{nonab}
\end{equation}
which matches the first factor on the right-hand side of \eqref{GWlink}.\footnote{In the case of $U(1)$ gauge theory, and with a magnetic flux specified by $\theta\in U(1)$, the character $\chi_{q}(\theta)=\mathrm{e}^{iq\theta}$ is the usual Aharonov-Bohm phase; equation \eqref{nonab} is the natural nonabelian analog.} The second factor size$(a)$ is the quantum dimension of the Gukov-Witten operator, as defined in Section \ref{sec:generalTop}, associated with the process of shrinking the sphere $S^{d-2}$ in the absence of a Wilson line. This shrinking process produces a local operator that commutes with all other operators of the theory, so by Schur's lemma, it must be proportional to the identity. The constant of proportionality is given by number of elements in the conjugacy class $a$, i.e., $\text{dim}(T_a^{\rm GW}) := \text{size}(a)$. From this, we recover \eqref{eq:linking} with linking coefficient given by 
 \begin{equation} 
B_{\rho}(a)=\frac{\chi_{\rho}(a)}{\chi_{\rho}(1)}\, \text{size}(a)\,,
   \label{eq:linkingBrho}
\end{equation}
and we see that the linking of $T^{\text{GW}}_a(\mathcal{M}^{(d-2)})$ with $W_\rho(\gamma)$ is nontrivial when
\begin{equation}
\chi_\rho(a) \neq \chi_\rho(1) := \dim(\rho)\,.
\label{eq:nontriviality}
\end{equation}
A Gukov-Witten operator that is topological cannot link with an endable Wilson line, by the argument illustrated in Fig.~\ref{fig:toplinking} and discussed in Section~\ref{sec:generalTop}. Thus, if $\chi_\rho(a) \neq \chi_\rho(1)$ for an endable Wilson line $W_\rho$, we conclude that $T^{\text{GW}}_a$ cannot be topological.

To better understand how to interpret the condition~\eqref{eq:nontriviality} for nontrivial linking, it is useful to establish a lemma, which will be used several times in the remainder of the paper.
\begin{lemma}
Let $G$ be a compact Lie group, and $\rho$ a (complex) representation of $G$. Then, $\chi_{\rho}([g]) = \chi_\rho(1)$ if and only if $\rho(g)$ is the identity for all $g \in [g]$.
\label{lemma1}
\end{lemma}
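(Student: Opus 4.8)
The plan is to prove the two directions separately, with the nontrivial (``only if'') direction resting on the fact that every finite-dimensional representation of a compact Lie group is unitarizable. The reverse direction is immediate: if $\rho(g) = \mathbf{1}$ for every $g \in [g]$, then evaluating on any representative gives $\chi_\rho([g]) = \Tr(\rho(g)) = \Tr(\mathbf{1}) = \dim(\rho) = \chi_\rho(1)$. Note also that the trace is a class function, so $\chi_\rho([g])$ is well-defined independently of the choice of representative; I will use this at the end to promote a statement about a single representative to the whole conjugacy class.

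For the forward direction, I would first invoke compactness: averaging an arbitrary Hermitian inner product over $G$ against the normalized Haar measure produces a $G$-invariant inner product, with respect to which every $\rho(g)$ is unitary (the Weyl unitarian trick). Fixing a representative $g \in [g]$, the operator $\rho(g)$ is then normal, hence diagonalizable, with eigenvalues $\lambda_1, \dots, \lambda_n$, where $n = \dim(\rho)$, each of modulus $|\lambda_i| = 1$.

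The heart of the argument is then an elementary strict-convexity statement. The hypothesis $\chi_\rho([g]) = \chi_\rho(1)$ reads $\sum_{i=1}^{n} \lambda_i = n$. Since each $\lambda_i$ lies on the unit circle, $\mathrm{Re}(\lambda_i) \leq 1$, with equality if and only if $\lambda_i = 1$. Taking real parts of $\sum_i \lambda_i = n$ yields $\sum_i \mathrm{Re}(\lambda_i) = n$, which forces $\mathrm{Re}(\lambda_i) = 1$, and therefore $\lambda_i = 1$, for every $i$. A diagonalizable operator all of whose eigenvalues equal $1$ is the identity, so $\rho(g) = \mathbf{1}$.

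Finally, I would extend this to the whole class: for any $g' = h g h^{-1} \in [g]$ we have $\rho(g') = \rho(h)\rho(g)\rho(h)^{-1} = \rho(h)\,\mathbf{1}\,\rho(h)^{-1} = \mathbf{1}$. The main (and essentially only) subtlety is the appeal to unitarizability, which is precisely where compactness of $G$ enters; for a noncompact group an element with $\Tr(\rho(g)) = \dim(\rho)$ need not act trivially, as a nontrivial unipotent shows. Everything after unitarization is routine linear algebra together with the strict convexity of the closed unit disk.
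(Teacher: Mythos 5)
Your proof is correct and follows essentially the same route as the paper's: use compactness to unitarize $\rho$, observe the eigenvalues of $\rho(g)$ lie on the unit circle, and note that their sum can equal $\dim(\rho)$ only if every eigenvalue is $1$. If anything, your version is slightly more careful than the paper's (which loosely calls the eigenvalues ``roots of unity'' --- true only when $g$ has finite order --- where modulus one is what is actually needed, exactly as you state), and you make the real-part/strict-convexity step and the extension to the full conjugacy class explicit.
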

 \begin{proof}
Clearly, if $\rho(g)$ is the identity for all $g \in [g]$, i.e., $\rho(g) = \rho(1) = I$, then $\chi_{\rho}([g]) =  \Tr_\rho(I) = \chi_\rho(1)$. Conversely, since $G$ is assumed compact, any complex representation can be chosen to be unitary. This means that the eigenvalues $\lambda_i, i = 1,...,$ dim$(\rho)$ of $\rho(g)$ must be roots of unity. Since $\chi_\rho(1) = \text{dim}(\rho)$, we see that $\sum_i \lambda_i = \chi_{\rho}([g])$ can be no larger than dim$(\rho)$, with equality if and only if $\lambda_i = 1$ for all $i$. Thus, $\rho(g) = I$.
 \end{proof}
 
 In pure $G$ gauge theory, the Wilson line in the adjoint representation is endable, as it may end on the field strength $F_{\mu \nu}$, which transforms in the adjoint. By considering multiple insertions of $F_{\mu \nu}$, we further obtain multiparticle states in tensor powers of the adjoint representation, and any Wilson line corresponding to a representation which appears in such a tensor power is thus endable. In fact, these are precisely the endable Wilson lines in pure $G$ gauge theory: 
 \begin{statement}[Endable Wilson Line Operators in Pure Gauge Theory] Consider a pure $G$ gauge theory. The endable Wilson line operators are precisely those corresponding to representations $\rho$ built from the adjoint under taking tensor products and sub-representations.
\label{stat:endableW}
\end{statement}
Hence, any Gukov-Witten operator that links nontrivially with an adjoint Wilson line cannot be topological. So, we have reduced the problem of classifying topological Gukov-Witten operators in pure gauge theory to that of finding conjugacy classes $[g]$ for which $\chi_{\text{adj}}([g]) \neq  \chi_{\text{adj}}(1)$.

By Lemma \ref{lemma1}, we see that the topological Gukov-Witten operators in pure $G$ gauge theory correspond precisely to conjugacy classes $[g]$ which act trivially on the adjoint representation.\footnote{More precisely, Lemma \ref{lemma1} establishes that a trivial action of the adjoint representation is a \emph{necessary} condition for a Gukov-Witten operator to be topological. It is well-known that this condition is \emph{sufficient} when $G$ is connected or finite: in Section \ref{ssec:Coulomb}, we will argue that this is sufficiently more generally by showing that the general case can be reduced to these special cases.} This is in turn equivalent to the statement that $g g_0 g^{-1} = g_0$ for all $g_0$ in the \emph{identity component} $G_0$, i.e., the connected component of $G$ containing the identity. (Note that if this holds for one representative $g \in [g]$, it holds for all such representatives.) The set of such $g$ is known as the {\em centralizer} of $G_0$ in $G$, denoted $Z_G(G_0)$.

To summarize:

\begin{statement}[Topological Gukov-Witten Operators in Pure Gauge Theory]
Consider pure $G$ gauge theory, with $G$ compact. The topological Gukov-Witten operators are precisely those corresponding to conjugacy classes contained in the centralizer $Z_G(G_0)$ of the identity component $G_0$ of the group $G$.
\label{stat:topGW}
\end{statement}

From what we have said so far, one might worry that the quantum dimension of a topological Gukov-Witten operator could be infinite, as the number of elements in a conjugacy class can be infinite. However, we now show that for a compact gauge group, the conjugacy classes corresponding to topological Gukov-Witten operators are of finite size, hence these surfaces have finite quantum dimension. In fact, the converse is true as well: every conjugacy class of finite size corresponds to a topological Gukov-Witten operator.

Suppose $T^{\text{GW}}_{a}(\mathcal{M}^{(d-2)})$ is topological. Fixing a representative $g$ of the conjugacy class $a = [g]$, we can write $g_1 = h_1 g h_1^{-1}$, $g_2 = h_2 g h_2^{-1}$ for any $g_1, g_2 \in a$. If $h_1$ and $h_2$ lie in the same connected component $G_1 \subset G$ then $h_1 h_2^{-1}$ lies in the identity component $G_0$. Since
\begin{equation}
g_1 = (h_1 h_2^{-1}) g_2 (h_1 h_2^{-1})^{-1}
\end{equation}
and $g_1, g_2 \in Z_G(G_0)$ per Statement~\ref{stat:topGW}, we then obtain $g_1 = g_2$. Thus, the number of distinct elements in $[g]$ is no greater than the number of connected components, which is finite for any compact group. In particular, this shows that the factor of $\text{size}(a)$ in \eqref{eq:linkingBrho} is finite. Conversely, if a conjugacy class contains finitely many elements, then it must be invariant under conjugation by any $h \in G_0$ by continuity. This implies that the conjugacy class is contained in $Z_G(G_0)$, and thus the corresponding Gukov-Witten operator is topological.

\subsubsection{Topological Wilson Lines and Endable Gukov-Witten Operators}
\label{sec:topW}

Just as we did for Gukov-Witten operators, we should ask which Wilson lines in a pure gauge theory are topological, as well as which Gukov-Witten operators are endable. In order to address both of these questions, we introduce the following lemma, which is a simple consequence of the Ambrose-Singer Theorem \cite{AmbroseSinger}.

\begin{lemma}
\label{ambrose_singer}
Let $A$ be a connection on a principal $G$ bundle over a manifold $X$, and suppose $\gamma : S^1 \to X$ is a contractible closed curve in $X$. Then the holonomy of $A$ around $\gamma$ is contained in the identity component $G_0$ of $G$.
\end{lemma}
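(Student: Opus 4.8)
The plan is to reduce the statement to the standard structural fact that the \emph{restricted holonomy group}---the group generated by parallel transport around contractible loops---is a connected subgroup of $G$. First I would dispose of a basepoint ambiguity: the holonomy around $\gamma$ is only defined once we fix a point in the fiber over $\gamma(0)$, and changing this choice conjugates the holonomy. Since $G_0$ is a normal subgroup and conjugation preserves connected components, the assertion ``holonomy $\in G_0$'' is independent of these choices, so the conclusion is well posed before we compute anything.

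Next, because $\gamma$ is contractible, it extends to a map $\bar\gamma : D^2 \to X$ of the disk with $\bar\gamma|_{\partial D^2} = \gamma$. I would pull the bundle and connection back along $\bar\gamma$; since $D^2$ is contractible, the pulled-back bundle is trivial, and in a global trivialization the connection becomes a $\mathfrak{g}$-valued $1$-form $\tilde A$ on the disk, with the holonomy of $A$ around $\gamma$ equal to the holonomy of $\tilde A$ around $\partial D^2$. I would then consider the one-parameter family of concentric loops $c_r = \partial D^2_r$ (boundaries of the disks of radius $r \le 1$ about the center) and the corresponding holonomies $h(r) \in G$. Parallel transport depends continuously (indeed smoothly) on the loop, so $r \mapsto h(r)$ is a continuous path in $G$. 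As $r \to 0$ the loop shrinks to the center point and $h(r)$ tends to the identity $1 \in G$; continuity then exhibits $h(1) = \mathrm{Hol}(\gamma)$ as the endpoint of a path in $G$ beginning at the identity, so $\mathrm{Hol}(\gamma) \in G_0$.

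Equivalently, and more invariantly, one can package these observations as the statement that the restricted holonomy group is the connected Lie subgroup of $G$ whose Lie algebra is spanned by the curvature (this is precisely the content of Ambrose--Singer); being connected and containing $1$, it lies inside $G_0$, and $\mathrm{Hol}(\gamma)$ lies in it because $\gamma$ is contractible. I expect the only real subtlety to be the small-loop limit $h(r) \to 1$: one needs to know that the holonomy around a small loop differs from the identity by a term controlled by the integral of the curvature over the enclosed region, which vanishes as the enclosed area tends to zero. This is exactly the input supplied by Ambrose--Singer (or, concretely, by a non-abelian Stokes estimate). Everything else---the extension of $\gamma$ to the disk, the triviality of the pullback bundle over the contractible $D^2$, and the conjugation-invariance of the conclusion---is routine.
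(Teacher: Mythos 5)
Your proof is correct, and it takes a more self-contained route than the paper, which disposes of the lemma in one line by citing the Ambrose--Singer theorem: holonomy around a contractible loop is generated by the curvature, which is valued in the Lie algebra of $G$, hence lies in $G_0$. Your main argument---extend $\gamma$ to a map $\bar\gamma : D^2 \to X$, trivialize the pulled-back bundle over the contractible disk, and track the holonomies $h(r)$ of the concentric loops as a continuous path in $G$ from the identity to $\mathrm{Hol}(\gamma)$---proves exactly the connectedness statement needed without invoking the full strength of Ambrose--Singer (which pins down the Lie algebra of the restricted holonomy group, far more than is required here). Your third paragraph then recovers the paper's proof verbatim as the ``invariant packaging,'' so you have both arguments in hand. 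Two small remarks. First, the one place where you suggest Ambrose--Singer or a non-abelian Stokes estimate is genuinely needed---the limit $h(r) \to 1$---is actually more elementary than that: in the global trivialization, $h(r)$ solves a linear ODE along $c_r$ whose coefficient is $\tilde A(c_r'(\theta))$, and since $\tilde A$ is bounded on the compact disk while $|c_r'(\theta)| = r \to 0$, the solution converges uniformly to the identity; no curvature estimate is required. Second, a pedantic technicality your write-up glosses over: a null-homotopy of $\gamma$ is a priori only continuous, so to pull back the connection you should smooth it (standard approximation arguments handle this). Neither point is a gap in the logic; your basepoint/conjugation remark, which the paper omits entirely, is a nice touch since $G_0$ is normal and the conclusion is indeed conjugation-invariant.
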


\begin{proof}
This follows immediately from the Ambrose-Singer Theorem \cite{AmbroseSinger}, which states that the holonomy around a contractible closed curve is generated by the curvature, valued in the Lie algebra of $G$.
\end{proof}

With this lemma in hand, we immediately see that if the identity component $G_0$ acts trivially on a representation $\rho$, then $W_\rho(\gamma)$ is topological. In particular, let $\gamma_1$ and $\gamma_2$ be homotopic paths between a pair of spacetime points $p, q$. Then a closed curve $\gamma$ incorporating $\gamma_1$ can be deformed to a homotopic closed curve $\gamma'$ incorporating $\gamma_2$. This alters $W_\rho(\gamma)$ by inserting the holonomy around the contractible closed curve $\gamma_2^{-1} \gamma_1$ at the point $p$ on the Wilson line. Per Lemma~\ref{ambrose_singer}, this holonomy lies in $G_0$, hence it acts trivially on $\rho$ by assumption, and therefore $W_\rho(\gamma) = W_\rho(\gamma')$, i.e., $W_\rho(\gamma)$ is topological. Conversely, if the identity component $G_0$ acts non-trivially on $\rho$ then $W_\rho(\gamma)$ and $W_\rho(\gamma')$ will in general differ by insertions of the curvature, and therefore $W_\rho$ will not be topological. In summary:

\begin{statement}[Topological Wilson Line Operators in Pure Gauge Theory] Consider a pure $G$ gauge theory. The topological Wilson operators are precisely those corresponding to representations $\rho$ on which the connected identity component $G_0$ of the group acts trivially, or equivalently, representations of $\pi_0(G) = G/G_0$. 
\label{stat:topW}
\end{statement}

Now that we have described the topological Wilson lines, we should describe the endable Gukov-Witten operators in the pure gauge theory. Just as the non-topological Gukov-Witten operators are those that link nontrivially with endable Wilson lines, the non-topological Wilson lines are those that link nontrivially with endable Gukov-Witten operators. In order to see this, suppose we wish to define a Gukov-Witten operator on an open codimension-2 manifold $\mathcal{M}^{(d - 2)}$, with boundary $\partial \mathcal{M}$. To define such an operator, we delete a tubular neighborhood of $\mathcal{M}$, and define boundary conditions for our gauge field. In the bulk of $\mathcal{M}$, we place the Gukov-Witten boundary condition for some group element $g \in G$ on the transverse $S^1$. Along $\partial \mathcal{M}$, the transverse $S^1$ becomes the boundary of a hemisphere $\frac{1}{2} S^2$ that links with $\partial \mathcal{M}$, as depicted in Figure \ref{f3}. In order to define boundary conditions along $\partial \mathcal{M}$, we must pick a connection on $\frac{1}{2} S^2$ such that the holonomy around the boundary $S^1$ is our fixed group element $g$. By Lemma \ref{ambrose_singer}, this is only possible for $g \in G_0$, and in fact since our curvature may be arbitrary, this is possible for any $g \in G_0$. Thus, we learn:

\begin{statement}[Endable Gukov-Witten Operators in Pure Gauge Theory] Consider a pure $G$ gauge theory. The endable Gukov-Witten operators are precisely those corresponding to conjugacy classes contained in the identity component $G_0$ of $G$.
\label{stat:endableGW}
\end{statement}

What are the objects on which endable Gukov-Witten operators end, defined by gauge fields on a hemisphere? We may identify these codimension-3 operators as improperly quantized 't Hooft operators. These should not be confused with genuine 't Hooft operators, which are defined by boundary conditions on a closed $S^2$ rather than the hemisphere $\frac{1}{2}S^2$. Physically, in four dimensions, an 't Hooft operator may be thought of as the worldvolume of a probe monopole. An improperly quantized 't Hooft operator can be interpreted as the worldvolume of an ``illegal''  probe monopole, which does not respect Dirac quantization and, therefore, can only live at the boundary of a Gukov-Witten operator. 

\begin{figure}[!htb]
\begin{center}
\includegraphics[width=60mm]{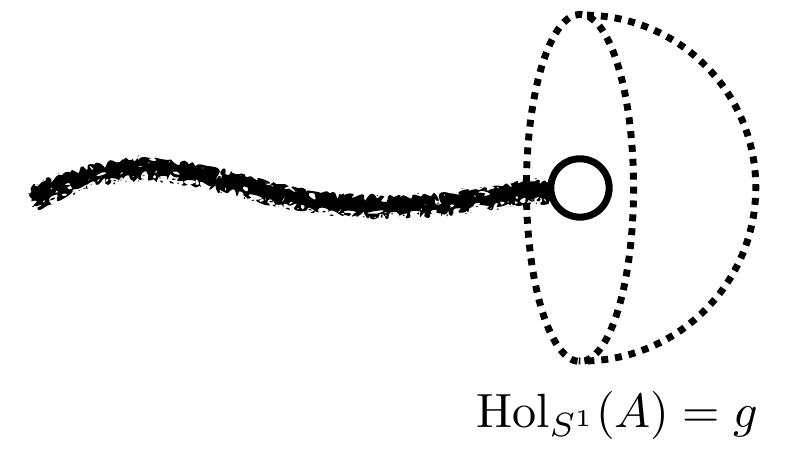}
\caption{A Gukov-Witten operator for a conjugacy class $[g]$ in the identity component of $G$ (represented by the fuzzy line in the picture) may end on an ``improperly quantized'' 't Hooft operator, depicted here by an open dot. To have a Gukov-Witten operator end, we specify a boundary condition for the gauge field on a hemisphere of the ending locus (depicted by dotted lines) by demanding that the holonomy becomes trivial in a smooth way. By Lemma \ref{ambrose_singer}, this is only possible for $g$ in the identity component. }
\label{f3}
\end{center}
\end{figure}

\subsubsection{'t Hooft Operators and Magnetic Global Symmetries}
\label{sec:tHO}

 For completeness (no pun intended), we can also construct the magnetic analogs of these objects in general dimension. The 't Hooft operator is constructed in a manner similar to Gukov-Witten operators, by excising a codimension-3 locus of spacetime and prescribing boundary conditions around it. 
Correlators of topological operators are sensitive only to the topological class of the boundary conditions, which is simply a $G$-bundle on the angular $S^2$. These are classified by the equatorial transition function, which is a map from $S^1$ to the gauge group, modulo conjugation. The topological class of a map  from $S^1$ to $G$ is the same as an element of $\pi_1(G)$. This element is invariant if we conjugate by elements in the connected component of the identity of $G$, which describe continuous deformations of the path. This is not the case when we conjugate by an element not in the identity component. The group of components $\pi_0(G)$ acts on the space of paths, so 't Hooft operators are naturally classified by classes in $\pi_1(G)$ modulo conjugation by $\pi_0(G)$.

In four dimensions, 't Hooft operators are line operators, and the definition of an 't Hooft line we have given above is essentially the same as the one originally given by 't Hooft \cite{tHooft:1977nqb}. It is worth noting, however, that the name ``'t Hooft line'' has been assigned to more than one concept in the literature \cite{Kapustin:2005py}. In \cite{Kapustin:2006pk}, for instance, ``'t Hooft line'' refers to ``Wilson line of the Langlands dual group,'' which derives from the GNO duality exchanging electric and magnetic charges \cite{Goddard:1976qe}. This notion leads to a finer classification of 't Hooft lines than the one we have given above. For instance, for $G=SU(2)$, the definition in the previous paragraph leads to the conclusion that pure $SU(2)$ theory has no 't Hooft lines, since $\pi_1(SU(2))=0$. However, the Langlands dual group, which is $SO(3)= SU(2)/\mathbb{Z}_2$, has nontrivial Wilson lines. 
This finer classification of 't Hooft lines has the drawback that it invokes the Langlands dual of $G$, which to our knowledge is not known for disconnected  $G$. We plan to explore this further in upcoming work.

\section{Compact, Connected Gauge Groups}\label{sec:Connected}

So far, we have primarily focused on topological and endable operators in pure gauge theories. We now want to see how the above story is modified in the presence of charged particles and dynamical, extended objects. We begin by studying compact, connected gauge groups, for which $G_0 = G$, and the centralizer $Z_G(G_0)$ is simply the center $Z(G)$. Hence, Statement~\ref{stat:topGW} implies that the topological Gukov-Witten operators in the pure gauge theory are labeled by elements of the center $Z(G)$. On the other hand, $G_0$ acts trivially only on the trivial representation of $G$, so Statement~\ref{stat:topW} implies that no Wilson operators are topological in the pure gauge theory. Relatedly, Statement~\ref{stat:endableGW} indicates that all Gukov-Witten operators are endable in the pure gauge theory, because $G_0 = G$.

\subsection{\texorpdfstring{$U(1)$}{U1} Gauge Theory}\label{sec:U1}
As a first example, we review $U(1)$ gauge theory in $d$ spacetime dimensions. Most of this discussion can be found in \cite{Gaiotto:2014kfa}.

Pure $U(1)$ gauge theory has action
\begin{equation}
S = -\frac{1}{2g^2} \int F \wedge {\star F}.
\end{equation}
This theory has Wilson line operators, each of which is supported on a closed 1-manifold and labeled by an electric charge $n \in \mathbb{Z}$, given by
\begin{equation}
W_{n}(\gamma) = \exp \left(i n \oint_\gamma A  \right).
\end{equation}
Such a Wilson line is charged under a 1-form $U(1)$ global symmetry, with conserved Noether current
\begin{equation}
\JE_2 = \frac{1}{g^2} {\star F}.
\end{equation}
The generators of this global symmetry are the Gukov-Witten operators. Since the gauge group is abelian and connected, its centralizer $Z_G(G_0)$ is the whole group, so by Statement~\ref{stat:topGW} every Gukov-Witten operator is topological. Each such operator is labeled by a phase $\alpha \in [0, 2 \pi)$ and supported on a codimension-2 manifold $\mathcal{M}^{(d-2)}$. Explicitly, such an operator is given by the integral of the Noether current over $\mathcal{M}^{(d-2)}$,
\begin{equation}
\label{UFe}
U_{g=\mathrm{e}^{i \alpha}} (\mathcal{M}^{(d-2)}) = \exp \left(  \frac{ i \alpha}{g^2} \int_{\mathcal{M}^{(d-2)}} {\star F} \right).
\end{equation}
These operators fuse according to the $U(1)$ group law,
\begin{equation}
U_{g}(\mathcal{M}^{(d-2)}) \times {U}_{g'}(\mathcal{M}^{(d-2)})  = {U}_{g''}(\mathcal{M}^{(d-2)}),
\end{equation}
with $g=\exp (i \alpha)$, $g' = \exp (i \beta)$, and $g'' = g g' = \exp (i (\alpha + \beta))$.
The Ward identity for this 1-form global symmetry implies that such a charge operator supported on an $S^{d-2}$ linking a Wilson line is equal to the Wilson line up to a phase,
\begin{equation}
U_{g=\mathrm{e}^{i \alpha}}(S^{d-2}) W_n(\gamma) = \mathrm{e}^{i \alpha n} W_n(\gamma).
\end{equation}
These Gukov-Witten operators can also be supported on codimension-2 manifolds with boundary. In this case, the boundary of such a manifold is an improperly quantized 't Hooft operator of magnetic charge $m = \alpha$. This means that all Gukov-Witten operators are endable, and relatedly there are no nontrivial topological Wilson lines.

Suppose now that instead of the pure gauge theory, we consider $U(1)$ gauge theory with a field $\psi$ of electric charge $N$. Now, the Wilson line of charge $N$ may be supported on a manifold with boundary $\gamma(x, y)$, with $\psi(x)$, $\psi^\dagger(y)$ supported on the endpoints.\footnote{More generally, the Wilson line of charge $k N$ may be supported on a manifold with boundary $\gamma(x, y)$, with $(\psi(x))^k$, $(\psi^\dagger(y))^k$ supported on the endpoints. If $\psi(x)$ is fermionic so that some power $(\psi(x))^k$ vanishes, we may seperate the operator insertions, and find operators of charge $k N$ in the OPE.} As a result, as illustrated in Figure \ref{fig:toplinking}, a surface operator that links nontrivially with such a Wilson line cannot be topological. In particular, the electric surface operators $U_{g=\mathrm{e}^{i \alpha}}$ that remain topological necessarily have $\alpha = 2 \pi k/N$ for some integer $k$: the other surface operators still exist, but they are no longer topological.

From this, we see that the $U(1)$ electric 1-form global symmetry in the pure gauge theory has been broken to a $\mathbb{Z}_N$ subgroup by the addition of the charge $N$ matter. For $N=1$, this 1-form electric symmetry is completely broken, and none of the electric surface operators $U_g$ are topological.

An analogous story applies to the 't Hooft operators in the theory. An 't Hooft operator $V_m(\cal{C})$ is labeled by an integer $m$ and supported on a closed codimension-3 manifold $\cal{C}$. Such operators are charged under a $(d-3)$-form $U(1)$ symmetry with conserved Noether current
\begin{equation}
\JM_2 = \frac{1}{2 \pi} F.
\end{equation}
The generators of this global symmetry are labeled by a phase $\eta \in [0, 2 \pi)$ and are supported on dimension-2 manifolds $\mathcal{M}^{(2)}$. Such an operator is given by the integral of the Noether current over $\mathcal{M}^{(2)}$,
\begin{equation}
\label{UFm}
\tilde{U}_{g=\mathrm{e}^{i \eta}}(\mathcal{M}^{(2)}) = \exp \left(  \frac{ i \eta}{2 \pi} \int_{\mathcal{M}^{(2)}} F \right).
\end{equation}

The Ward identity for the electric 1-form global symmetry implies that such a charge operator supported on an $S^2$ linking an 't Hooft operator is equal to the 't Hooft operator up to a phase,
\begin{equation}
\tilde{U}_{g=\mathrm{e}^{i \eta}}(S^2) V_m(\mathcal C) = \mathrm{e}^{i \eta m} V_m(\mathcal{C}).
\end{equation}
These charge operators can also be supported on 2-dimensional manifolds with boundary. In this case, the boundary of such a manifold is an improperly quantized Wilson line of electric charge $n = \eta$.

An 't Hooft operator of charge $M$ may end in the presence of a magnetically charged $(d-4)$-brane (i.e., a monopole), of magnetic charge $M$. As a result, a surface operator that links nontrivially with such an 't Hooft operator cannot be topological. In particular, the magnetic surface operators $\tilde{U}_{g=\mathrm{e}^{i \eta}}$ that remain topological necessarily have $\eta = 2 \pi k/M$ for some integer $k$: the other surface operators still exist, but they are no longer topological. The $U(1)$ magnetic $(d-3)$-form global symmetry in the pure gauge theory has been broken to a $\mathbb{Z}_M$ subgroup by the addition of the charge $M$ magnetic state. For $M=1$, this 1-form magnetic symmetry is completely broken, and none of the magnetic surface operators $\tilde U_g$ are topological.

In four dimensions, the fusion of a Wilson line of charge $n$ with an 't Hooft line of charge $m$ produces a dyonic line $L_{n,m}$ of charge $(n, m)$.
Similarly, electric and magnetic surface operators may be fused to give operators labeled by both an electric phase and a magnetic phase,
\begin{equation}
U_{(g = \mathrm{e}^{i \alpha}, g' = \mathrm{e}^{i \eta})}(\mathcal{M}^{(2)}) = \exp \left( \int_{\mathcal{M}^{(2)}} \frac{i \alpha}{g^2} (\star F)  + \frac{ i \eta}{2 \pi}  F \right)\,.
\end{equation}
Such surfaces link with dyonic Wilson/'t Hooft lines according to
\begin{equation}
U_{(g = \mathrm{e}^{i \alpha}, g' = \mathrm{e}^{i \eta})}(S^2) L_{n, m}(\gamma) = \mathrm{e}^{i \alpha n+ i \eta m}  L_{n, m}(\gamma) .
\end{equation}
Suppose we have a $U(1)$ gauge theory with dyons of charge $(N_1, M_1)$, $(N_2, M_2)$,...,$(N_l, M_l)$. Now, the only surfaces that remain topological will be the ones that link trivially with all of the dyons; namely, those $U_{g = \mathrm{e}^{i \alpha}, g' = \mathrm{e}^{i \eta}}$ satisfying
\begin{equation}
N_i \alpha + M_i \eta \equiv 0 \text{ (mod $2\pi$) for all $i= 1,...,l$}. 
\label{U1condition}
\end{equation}
Denoting the charge lattice of the $U(1)$ gauge theory by $\Gamma \simeq \mathbb{Z}^2$, the dyons in question will generate a sublattice $\Gamma_{\text{mat}}$. The 1-form global symmetry of the theory with the dyonic matter is then given by the Pontryagin dual of the quotient, $(\Gamma/\Gamma_{\text{mat}})^\vee$. (Note that $\mathbb{Z}^\vee = U(1), \mathbb{Z}_N^\vee = \mathbb{Z}_N$). Hence, the spectrum is complete if and only if there is no 1-form global symmetry.

Finally, let us note that all of the dimension-2 symmetry generators we have discussed are endable: $U_{(g=\mathrm{e}^{i \alpha}, g' = \mathrm{e}^{i \eta})}$ may end on an improperly quantized Wilson/'t Hooft line of charge $(\alpha, \eta)$. Relatedly, any line of charge $(n, m)$ will not be topological, as it links with at least one such surface. This is a general feature of 4d gauge theories with connected gauge group $G$: the Wilson/'t Hooft lines are not topological, whereas the surfaces that link with them are.

\subsection{\texorpdfstring{$SU(N)/\mathbb{Z}_K$}{SUNZK} Gauge Theory}\label{sec:SUNZK}

Let $K$ be a divisor of $N$. Pure $SU(N)/\mathbb{Z}_K$ gauge theory has Wilson lines labeled by representations of $SU(N)/\mathbb{Z}_K$. The adjoint Wilson line is endable, as it may end on a gluon. Likewise, any representation appearing in some tensor product of adjoint representations is endable. If $K = N$, tensor products of the adjoint generate the full set of irreducible representations, and the spectrum is complete. If $K \neq N$, the spectrum is incomplete, as some Wilson lines are not endable. Wilson lines fall into equivalence classes, each of which is specified by an integer modulo $N/K$, and a Wilson line labeled by an integer $n$ is endable if and only if $n=0$.

The Gukov-Witten operators are labeled by conjugacy classes of $SU(N)/\mathbb{Z}_K$. Since the adjoint Wilson line is endable, any Gukov-Witten operator that links nontrivially with the adjoint line is not topological. The topological operators in this case are all invertible, and they correspond precisely to 
elements of the center of $SU(N)/\mathbb{Z}_K$, which is given by $\mathbb{Z}_{N/K}$. If $N=K$, therefore, there are no such topological operators. If $K \neq N$, these topological operators furnish a $\mathbb{Z}_{N/K}$ 1-form symmetry, under which non-endable Wilson lines are charged.

If matter is added in a representation of the gauge group, additional Wilson lines may become endable, and the 1-form symmetry may be broken to a subgroup of $\mathbb{Z}_{N/K}$. In particular, if a Wilson line in the equivalence class specified by the integer $n$ mod $N/K$ is endable, then the 1-form symmetry is broken to $\mathbb{Z}_{\text{gcd}(n,N/K)}$. If all Wilson lines are endable, the 1-form symmetry will be broken completely. If some Wilson line remains non-endable after the addition of charged matter, it will be charged under some nontrivial remnant of the $\mathbb{Z}_{N/K}$ 1-form center symmetry.

All of the Gukov-Witten operators in $SU(N)/\mathbb{Z}_K$ gauge theory may end on improperly quantized 't Hooft operators. Relatedly, none of the Wilson lines are topological.

A similar story plays out on the magnetic side of things (at least with no discrete theta angles).
't Hooft operators of pure $SU(N)/\mathbb{Z}_K$ gauge theory are classified by elements of $\pi_1(SU(N)/\mathbb{Z}_K) = \mathbb{Z}_{N/K}$. Non-endable 't Hooft operators carry charge under a $\mathbb{Z}_{N/K}$ $(d-3)$-form symmetry.
In four dimensions, 't Hooft line operators admit a finer classification: they are in 1-1 correspondence with representations of the Langlands dual group ${}^L(SU(N)/\mathbb{Z}_K) = SU(N)/\mathbb{Z}_{N/K}$. The 't Hooft line labeled by the adjoint representation of the Langlands dual group is endable in the pure gauge theory. Any 't Hooft line which is not endable is charged under a magnetic $\mathbb{Z}_{N/K}$ 1-form symmetry. Note that this group is equal to the center of the Langlands dual group.

 't Hooft operators may end in the presence of magnetically charged $(d-4)$-branes (i.e., monopoles). Analogously to the electric case, endability of all 't Hooft operators is equivalent to the absence of a magnetic 1-form symmetry. Furthermore, the 't Hooft operators themselves are not topological.

As an illustrative example, let us specialize to the case of $N=2$. $K=1$. 
Representations of $SU(2)$ are labeled by non-negative integers $m = 2 j$, where $j$ is the spin of the representation, and the conjugacy classes are labeled by $\theta \in [0, \pi]$, where a representative of the conjugacy class $\theta$ is given by diag$(\exp (i \theta), \exp( - i \theta))$.
The center of $SU(2)$ is isomorphic to $\mathbb{Z}_2$ and consists of the elements $\theta = 0 , \pi$.
The characters are given by 
\begin{equation}
\chi_m(\theta)  =  \frac{\sin((m+1) \theta)}{\sin(\theta)}.
\end{equation}
Pure $SU(2)$ gauge theory has matter in the adjoint representation $m=2$. This means that the topological Gukov-Witten operators must have
\begin{equation}
\frac{\sin(3 \theta)}{\sin(\theta)} = \chi_{m=2}(\theta) = \chi_{m=2}(0) = 1,
\end{equation}
which is true for $\theta = 0, \pi$. We see that indeed, there is a $\mathbb{Z}_2$ 1-form symmetry, generated by the Gukov-Witten operator with $\theta = \pi$. This group elements with $\theta = 0, \pi$ are precisely those in the center of $SU(2)$, so as expected, the 1-form symmetry group is equivalent to the center. It is also clear that in the presence of matter in a representation with $m$ odd, none of the surfaces will remain topological. Simultaneously, the spectrum is complete: there is a (possibly multiparticle) state in every representation of the gauge group, since every representation of $SU(2)$ appears in tensor products of the $m$ representation and the adjoint representation, and there is no electric 1-form global symmetry remaining.

In pure $SO(3) = SU(2)/\mathbb{Z}_2$ gauge theory, on the other hand, the spectrum is already complete even without adding charged matter. At the same time, the center of $SO(3)$ is trivial, so there is no electric 1-form symmetry. There is, however, a magnetic $\mathbb{Z}_2$ $(d-3)$-form symmetry, under which non-endable 't Hooft operators are charged. If appropriate magnetically charged objects are added to the theory, this symmetry is broken, and relatedly the magnetic spectrum is complete.
In four dimensions, 't Hooft lines are labeled by representations of the Langlands dual group ${}^L SO(3) = SU(2)$, and the analysis of these lines, as well as the dimension-2 magnetic topological surfaces that link with them, follows immediately from our analysis of the $SU(2)$ gauge theory above. In particular, the $\mathbb{Z}_2$ magnetic 1-form symmetry is broken in the presence of monopoles in the fundamental representation of the Langlands dual group $SU(2)$.\footnote{In four dimensions, a similar story applies to $SO(3)$ gauge theory with a nontrivial discrete $\theta$ angle \cite{Aharony:2013hda}. Here, the operators charged under the $\mathbb{Z}_2$ magnetic 1-form symmetry are dyonic lines, and the 1-form symmetry is broken precisely when the dyonic spectrum is complete.}

\subsection{\texorpdfstring{$G_2$}{G2} Gauge Theory}

Before moving on to the general logic for arbitrary compact, connected groups, we illustrate how our above results cohere with some simple group theory facts for the particular case of the exceptional Lie group $G_2$. This group has trivial center, so by Statement~\ref{stat:topGW}, $G_2$ gauge theory has no topological Gukov-Witten operators, even in the absence of charged matter. In order for the general link between topological operators and completeness to hold, we then expect that pure $G_2$ gauge theory should already be electrically complete, i.e., every Wilson line must be endable. Translated into mathematics, we expect that every representation $\rho$ of $G_2$ should be contained in tensor products of the 14-dimensional adjoint representation with itself.

It is a well-known fact (see, e.g., Section 22.3 of~\cite{fulton2013representation}) that every irreducible representation of $G_2$ appears in some tensor power of the ``standard'' 7-dimensional representation of $G_2$ with itself. From, e.g., \cite{Yamatsu:2015npn}, we further have
\begin{align}
\mathbf{14} \otimes \mathbf{14} =  \mathbf{1}  \oplus \mathbf{14} \oplus \mathbf{27} \oplus  \mathbf{77} \oplus \mathbf{77}' \,,~~~\mathbf{14} \otimes \mathbf{27}  = \mathbf{7} \oplus \mathbf{14} \oplus \mathbf{27}  \oplus  \mathbf{64} \oplus   \mathbf{77} \oplus \mathbf{189} 
\end{align}
Therefore, the 7-dimensional irrep appears in the tensor product $\mathbf{14}^{\otimes 3}$ of three copies of the adjoint representation. This establishes that, indeed, all Wilson lines are endable in pure $G_2$ gauge theory.

\subsection{General Story for Compact, Connected Groups}\label{ssec:GeneralConnected}

\subsubsection{Electric Completeness}

As discussed in Section~\ref{sec:topGW}, in a pure $G$ gauge theory, the topological Gukov-Witten operators are labeled by elements of the center $Z(G)$. In the presence of charged matter in some representation $\rho$, the center symmetry is generically broken to a subgroup. In particular, an open Wilson line in the representation $\rho$ may be unlinked from the Gukov-Witten operator $U_z$, $z \in Z(G)$, as in Figure \ref{fig:toplinking}, which implies that $U_z$ remains a topological symmetry generator only if the representation $\rho$ transforms trivially under $z$. In other words, the 1-form symmetry remaining in the presence of the charged matter is given by the kernel of $\rho$.

If the spectrum of the gauge theory is incomplete, then $\rho$ cannot be a faithful representation of $G$: there necessarily exists some $z \neq 1 \in Z(G)$ such that $z$ is in the kernel of $\rho$ for all endable Wilson lines $\rho$.\footnote{We will prove this statement when we consider the case of a general compact Lie group in Section \ref{sec:General}. } As a result, the remaining 1-form symmetry is a nontrivial subgroup of the center. If the spectrum is complete, on the other hand, then there exists matter in some faithful representation $\rho$, whose kernel is trivial. As a result, the 1-form center symmetry is completely broken. There is thus a 1-1 correspondence between completeness of the gauge theory spectrum and the absence of an electric 1-form global symmetry.

\subsubsection{Magnetic Completeness}
\label{sec:connectedmagnetic}

In addition to the electric 1-form global symmetry discussed above, there is also a magnetic $(d-3)$-form global symmetry. In pure gauge theory, the 't Hooft operators charged under this global symmetry are labeled by elements of $\pi_1(G)$, and the associated symmetry group is given by $\pi_1(G)^\vee$, the Pontryagin dual of $\pi_1(G)$. In the presence of magnetically-charged $(d-4)$-branes (i.e., monopoles), some of these 't Hooft operators can end. When they do, any symmetry generators that link nontrivially with them will no longer be topological, and the $(d-3)$-form $\pi_1(G)^\vee$ global symmetry will be explicitly broken to a subgroup. The set of endable 't Hooft operators forms a subgroup $H$ of $\pi_1(G)$, which is normal since $\pi_1(G)$ is abelian. The remaining magnetic global symmetry is given by $(\pi_1(G)/H)^\vee$, the Pontryagin dual of ${\pi}_1(G)/H$. We thus see that if every 't Hooft operator is endable, so the magnetic spectrum is complete, then the magnetic 1-form symmetry will be completely broken. If there is at least one non-endable 't Hooft operator, so the monopole spectrum is incomplete, then some nontrivial subgroup of the magnetic 1-form symmetry will remain. This statement also follows from the relationship between electric completeness and the absence of an electric 1-form global symmetry applied to the Langlands dual group ${}^L G$.

\subsubsection{Twist Vortex Completeness}

Recall from Statement~\ref{stat:topW} in Section \ref{sec:topW} that the topological Wilson operators in pure $G$ gauge theory are precisely those corresponding to representations $\rho$ on which the identity component of the group acts trivially. For $G$ connected, the identity component is the entirety of $G$, and the only representation which acts trivially on $G$ is the trivial representation. Thus, there are no topological Wilson lines, even in the pure gauge theory.

Relatedly, by Statement~\ref{stat:endableGW}, every Gukov-Witten operator can end on an improperly quantized 't Hooft operator when $G$ is connected. This means that this theory cannot admit twist vortices (which correspond to endable Gukov-Witten operators {\em not} arising from elements of $G_0$), so twist vortex completeness is trivially satisfied. We will elaborate on this point when we consider more general gauge groups in Section \ref{sssec:twistvortexcomp} below.

\section{Finite Gauge Groups}\label{sec:Discrete}

The next case we consider is that of finite gauge groups, as previously discussed in detail in \cite{Rudelius:2020orz}. For finite gauge groups, $G_0$ is simply the identity element of the group. As a result, $Z_G(G_0) = G$, so Statement~\ref{stat:topGW} implies that in the pure gauge theory, all Gukov-Witten operators are topological. Similarly, because $G_0$ acts trivially on all representations, Statement~\ref{stat:topW} implies that all Wilson operators are topological in the pure gauge theory.

\subsection{\texorpdfstring{$\mathbb{Z}_N$}{ZN} Gauge Theory}\label{sec:ZN}

$\mathbb{Z}_N$ gauge theory admits a Lagrangian description as a BF-theory:
\begin{equation}
\mathcal{L} = \frac{N}{2\pi} B_{d-2} \wedge \rmd A_1\,.
\end{equation}
This theory has Wilson line operators of the form
\begin{equation}
W_n(\gamma) = \exp \left( i n \oint_\gamma A_1  \right)\,,~~~n = 0, 1, ..., N-1\,,
\end{equation}
with $V_n$ the operator of charge $n$. There are also codimension-2 Wilson surfaces for $B_{d-2}$:
\begin{equation}
\label{UB}
U_m(\mathcal{M}^{(d-2)}) = \exp \left( i m \oint_{\mathcal{M}^{(d-2)}} B_{d-2}  \right)\,,~~~m = 0, 1, ..., N-1\,,
\end{equation}
These Wilson surfaces also serve as the Gukov-Witten operators for the $\mathbb{Z}_N$ gauge theory.

All of these Wilson operators are topological: the Wilson lines generate a $\mathbb{Z}_N$ $(d-2)$-form global symmetry under which the Wilson surfaces are charged. Conversely, the Wilson surfaces generate a $\mathbb{Z}_N$ 1-form global symmetry under which the Wilson lines are charged. The Ward identity for this 1-form symmetry leads to a nontrivial linking of the Wilson lines and the Wilson surfaces: 
\begin{equation}
U_m(S^{d-2}) \cdot W_n(\gamma) = \mathrm{e}^{2 \pi i m n / N} W_n(\gamma)\,.
\end{equation}

Suppose we now add a particle of charge $K$ to the theory. The Wilson lines of charge $nK$, $n \in \mathbb{Z}$ are now endable. As a result, any surface $U_ m$ that links nontrivially with these lines is no longer topological. In particular, the surfaces $U_m$ that remain topological satisfy $m K = 0$ (mod $N$), which means that the 1-form global symmetry is reduced from $\mathbb{Z}_N$ to $\mathbb{Z}_{\text{gcd}(N,K)}$. In particular, the spectrum is complete if and only if there do not exist any topological Gukov-Witten operators.

$\mathbb{Z}_N$ gauge theory can also admit twist vortices, dynamical codimension-2 objects around which $\mathbb{Z}_N$-charged particles acquire Aharonov-Bohm phases. In the 4d context, these are dynamical strings \cite{Krauss:1988zc}. They carry magnetic flux under $A_1$. As discussed in \S\ref{sec:Top}, Gukov-Witten operators may be thought of as inserting a probe vortex. Hence, Gukov-Witten operators can end on codimension-3 operators that create a dynamical twist vortex, just as Wilson line operators can end on local operators that create a dynamical charged particle. In the $\mathbb{Z}_N$ gauge theory, Wilson lines are topological if Gukov-Witten operators cannot end, and generate a $\mathbb{Z}_N$ $(d-2)$-form global symmetry. The existence of twist vortices explicitly breaks this symmetry; a complete spectrum of twist vortices fully breaks it, rendering all of the Wilson lines non-topological.

\subsection{\texorpdfstring{$S_3$}{S3} Gauge Theory}\label{ssec:S3}

The symmetric group $S_3$ has three irreducible representations: the trivial representation $\mathbf{1}$, the sign representation $\mathbf{1}_-$ of dimension 1, and the standard representation $\mathbf{2}$ of dimension 2. It has three conjugacy classes: the trivial conjugacy class $[1]$, the conjugacy class $[\theta]$ of size 2, and the conjugacy class $[\tau]$ of size 3.

Correspondingly, the theory has three irreducible Wilson lines, $W_0$, $W_-$, and $W_2$, associated respectively with the irreducible representations of $S_3$. It has three codimension-2 Gukov-Witten operators, $T_{[1]}$, $T_{[\theta]}$, and $T_{[\tau]}$. All of these operators are topological. Unlike the previous examples we have seen, however, most of these topological operators do not generate global symmetries. The lone exception is the Wilson line $W_-$, which squares to the trivial operator and thus generates a $\mathbb{Z}_2$ $(d-2)$-form global symmetry. The nontrivial Wilson lines $W_-$, $W_2$ obey the fusion algebra
\begin{equation}
W_2 \times W_2 =  W_0 + W_- + W_2 \,,~~~~~W_2 \times W_- =  W_2\,.
\label{S3fusion}
\end{equation}
Note that $W_2$ does not have an inverse operator---it is a \emph{non-invertible} topological operator.
The nontrivial Gukov-Witten operators $T_{[\theta]}$, $T_{[\tau]}$ are similarly non-invertible.

The Wilson lines and the Gukov-Witten operators link nontrivially. In particular, we have
\begin{equation}
T_a(S^{d-2}) \cdot W_b(\gamma) = B_b(a) W_b(\gamma)\,,
\end{equation}
with $B_b(a)$ given by:
\begin{align}
B_b(a) = \begin{array}{c|ccc}
\text{b \textbackslash a} & [1] & [\tau] & [\theta] \\\hline
0 &1 & 3& 2 \\
- & 1& -3& 2 \\
$2$ & 1& 0 & -1  
\end{array}\,.
\end{align}

In the presence of matter in the sign representation of $S_3$, the Wilson line $W_-$ is endable. Since $B_-({[\tau]}) \neq B_0({[\tau]})$, the Gukov-Witten operator $T_{[\tau]}$ links nontrivially with an endable Wilson line, and therefore it is not topological in the presence of this charged matter.

In the presence of matter in the standard representation of $S_3$, all of the Wilson lines are endable by \eqref{S3fusion}, and none of the Gukov-Witten operators are topological. We see that completeness of the gauge theory spectrum is equivalent to the absence of topological codimension-2 Gukov-Witten operators.

Nonabelian finite-group gauge theories can also admit twist vortices, which have nonabelian Aharonov-Bohm interactions with charged matter \cite{Alford:1989ch,Preskill:1990bm}. In pure $S_3$ gauge theory, the Wilson lines are topological. However, if the Gukov-Witten operators are endable due to the existence of dynamical twist vortices, then Wilson lines with which they link nontrivially will no longer be topological. From the table of $B_b(a)$, we see that if the Gukov-Witten operator $T_{[\tau]}$ is endable, then both the $W_-$ and $W_2$ Wilson lines are not topological. On the other hand, if only the Gukov-Witten operator $T_{[\theta]}$ is endable, then $W_-$ will remain topological, because it has the same linking with $T_{[\theta]}$ as the trivial Wilson loop $W_0$. Because the equivalence class $[\tau]$ includes two-element transpositions, it generates the entire group $S_3$. Hence, endability of $T_{[\tau]}$ implies endability of $T_{[\theta]}$, because $T_{[\theta]}$ can arise in the fusion of $T_{[\tau]}$ operators. The converse is not true: representatives of the class $[\theta]$ do not generate all of $S_3$. Thus, we see that in this example, the absence of topological Wilson lines is equivalent to the endability of all nontrivial Gukov-Witten operators, which is in turn equivalent to the existence of a complete spectrum of twist vortices. We will argue below that this generalizes to all finite groups.

\subsection{General Story for Finite Groups}
\label{sec:generalfinite}

\subsubsection{Electric Completeness}

The following is a quick review, without proof, of completeness and topological operators in discrete gauge theories; for a general proof, see Section~\ref{general_electric_completeness_section}. More details can be found in \cite{Rudelius:2020orz}.

Consider the $d$-dimensional gauge theory of some (abelian or nonabelian) discrete gauge group $G$. Such a theory will have Wilson lines, each labeled by a representation of $G$, and codimension-2 Gukov-Witten operator, each labeled by a conjugacy class $[g]$ of $G$. For an abelian group, each conjugacy class consists of a single element, $[g] = g$, and every Gukov-Witten operator is invertible. For a nonabelian group, a conjugacy class will generically have more than one element.

A Gukov-Witten operator labeled by the conjugacy class $[g]$ can link with Wilson line labeled by a representation $\nu$. Namely, surrounding the Wilson line with an $S^{d-2}$ supporting a topological Gukov-Witten operator and shrinking the $S^{d-2}$ to a point yields the Wilson line times a linking coefficient $B_\nu([g])$:
\begin{equation}
B_\nu([g]) = \frac{\chi_\nu([g])}{ \chi_\nu(1) } \text{size}([g]),
\end{equation}
with $\chi_\nu$ the character of the representation $\nu$, $\text{size}([g])$ the number of elements in the conjugacy class $[g]$, and $[1]$ the trivial conjugacy class.

If a given Wilson line $\nu$ an end on some charged particle, some surfaces will no longer be topological. In particular, any surface which links nontrivially with $\nu$ will be rendered non-topological. To be more precise, if
\begin{equation}
\chi_{\nu}([g]) \neq  \chi_\nu(1) := \text{dim}(\nu),
\end{equation}
then the surface labeled by the conjugacy class $[g]$ will cease to be topological in the presence of matter in the $\nu$ representation of $G$. If the spectrum is complete, so that all Wilson lines can end, then every Gukov-Witten operator is rendered non-topological \cite{Rudelius:2020orz}. Conversely, if the spectrum is incomplete, there will exist at least one topological Gukov-Witten operator.

\subsubsection{Twist Vortex Completeness}

Just as the endability of all Wilson lines is equivalent to the non-existence of topological Gukov-Witten operators, the endability of all Gukov-Witten operators is equivalent to the non-existence of topological Wilson lines. In the special case of three dimensions, where Gukov-Witten operators are line operators, this follows from general properties of the modular tensor category of line operators \cite{Rudelius:2020orz, muger2003structure}. Here, we will provide an argument in higher dimensions.

The Wilson lines of a finite-group gauge theory are all topological in pure gauge theory. If twist vortices are coupled to the theory, however, their Gukov-Witten operators will become endable. If a Wilson line links nontrivially with an endable Gukov-Witten operator, it will not be topological, as illustrated in Figure~\ref{fig:toplinking} but with the role of Wilson lines and Gukov-Witten operators interchanged. A Wilson line in the representation $\rho$ links nontrivially with a Gukov-Witten operator for the conjugacy class $[g]$ when $\chi_\rho([g]) \neq \chi_\rho(1)$.  If all Gukov-Witten operators are endable, then there are no nontrivial topological Wilson lines, because the only representation $\rho$ for which $\chi_\rho([g]) = \chi_\rho(1)$ for all conjugacy classes $[g]$ is the trivial representation. 

The converse statement is less obvious: if there are no nontrivial topological Wilson lines, then all Gukov-Witten operators must be endable. The statement that there are no nontrivial topological Wilson lines means that for every nontrivial representation $\rho$, there exists at least one conjugacy class $[g]$ corresponding to an {\em endable} Gukov-Witten operator for which $\chi_\rho([g]) \neq \chi_\rho(1)$. Equivalently, given a representation $\rho$ for which $\chi_\rho([g]) = \chi_\rho(1)$ for all $[g]$ for which the Gukov-Witten operator is endable, it must be the trivial representation. To show that this implies that {\em all} Gukov-Witten operators are endable, we use the following fact about group theory:

\begin{lemma}
Consider a group $G$ and a set $S \subseteq \mathrm{Conj}(G)$ of conjugacy classes of $G$, with the property that $\chi_\rho([g]) = \chi_\rho(1)$ for all $[g] \in S$ only if $\rho$ is the trivial representation of $G$. Then the representatives of the conjugacy classes in $S$ generate the entire group $G$.
\label{lemmafinite}
\end{lemma}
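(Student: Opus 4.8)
The plan is to translate the character hypothesis into a purely group-theoretic kernel condition, and then to invoke the elementary representation theory of finite groups to conclude. First I would apply Lemma~\ref{lemma1}: for each class $[g] \in S$, the equality $\chi_\rho([g]) = \chi_\rho(1)$ holds if and only if $\rho(g) = I$ for every $g \in [g]$. Thus the premise of the lemma is equivalent to the statement that the only representation $\rho$ sending every element lying in the classes of $S$ to the identity is the trivial one.

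Next I would introduce the subgroup $H := \langle S \rangle$ generated by representatives of the conjugacy classes in $S$. Because $S$ is a union of \emph{entire} conjugacy classes, $H$ is invariant under conjugation and is therefore normal in $G$. The condition that $\rho(g) = I$ for all $g$ in the classes of $S$ is then equivalent to $H \subseteq \ker\rho$, i.e.\ to $\rho$ factoring through the quotient $G/H$. Hence the hypothesis of the lemma says precisely that every representation of $G$ that descends to $G/H$ is trivial---equivalently, that the group $G/H$ admits no nontrivial representations whatsoever.

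The conclusion $H = G$, which is exactly the assertion that the classes in $S$ generate $G$, then follows from the fact that a nontrivial finite group always carries a nontrivial representation. I would establish this directly: if $G/H$ were nontrivial, its regular representation would be faithful and hence nontrivial, contradicting the previous step. Alternatively, one may note that $|G/H|$ equals the sum of the squares of the dimensions of the distinct irreducible representations of $G/H$, so if the only irreducible is the trivial one then $|G/H| = 1$. Either way $G/H$ is trivial, so $H = G$.

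I do not expect a serious obstacle, as each step is standard; the one point requiring care---and the crux of the argument---is the observation that $H$ is generated by a union of full conjugacy classes and is therefore normal, which is precisely what makes the passage to the quotient $G/H$ well defined. If one wished to extend the lemma from finite groups to general compact $G$, the identical argument goes through after replacing the faithfulness of the regular representation by the separation of group elements guaranteed by the Peter--Weyl theorem.
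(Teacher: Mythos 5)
Your proof is correct and follows essentially the same route as the paper's: form the normal subgroup $H$ generated by the classes in $S$, pass to the quotient $G/H$, pull back its representations to $G$, and invoke the hypothesis to conclude that $G/H$ has only trivial representations and is therefore trivial. The only differences are cosmetic—you additionally spell out why a nontrivial finite group must have a nontrivial representation (via the regular representation), a step the paper leaves implicit, and you invoke Lemma~\ref{lemma1} to recast the hypothesis as a kernel condition, though only the trivial direction of that equivalence is actually needed.
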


\begin{proof}
The representatives of the conjugacy classes in $S$ generate a subgroup $H$ of $G$. Because it is constructed from conjugacy classes, this subgroup is normal: $H \trianglelefteq G$. To show that $H = G$, we will argue that the quotient $G/H$ is trivial. Consider an irreducible representation $\nu$ of $G/H$. Composition of $\nu$ with the projection $\pi: G \to G/H$ determines a pullback representation $\rho$ of $G$, i.e., $\rho = \pi^* \nu := \nu \circ \pi$. For any $h \in H$, we have $\rho(h) = \nu(\pi(h)) = \nu(1) = 1$. In particular, this means that $\chi_\rho([g]) = \chi_\rho(1)$ for all $[g] \in S$. But then, by assumption, $\rho$ is the trivial representation of $G$. Because $\pi$ is surjective, we conclude that $\nu$ is the trivial representation of $G/H$. Hence $G/H$ admits only trivial representations, so it is the trivial group, and $H = G$.
\end{proof}

If there are no topological Wilson lines, there must be a sufficient set $S$ of endable Gukov-Witten operators to link nontrivially with all possible Wilson lines. The theorem tells us that this set of endable Gukov-Witten operators can generate {\em all} Gukov-Witten operators through their fusion. But any operator appearing in the fusion of endable operators must also be endable, and so we conclude that all Gukov-Witten operators are endable. Gukov-Witten operators for finite groups end on operators that create dynamical twist vortices (in 4d, these are twist strings). We conclude that the absence of topological Wilson line operators is equivalent to completeness of the spectrum of twist vortices.

In fact, nothing in this argument relied on $G$ being a finite group. Thus, we have established one of the main results of our paper, Statement~\ref{stat:TwistVortexCompleteness}, as promised in Section~\ref{sec:intro}. When $G$ is a general compact group, endability of some Gukov-Witten operators will correspond to the existence of twist vortices, while others will already be endable by Statement~\ref{stat:endableGW}. We will elaborate on this point in more detail below, in Section~\ref{sssec:twistvortexcomp}.

\subsubsection{Magnetic Completeness}

The magnetic spectrum of a discrete gauge theory is trivially complete: $\pi_1(G)$ is trivial, since the identity component of $G$ is simply a point. Thus, there are no 't Hooft operators, and there is no magnetic $(d-3)$-form symmetry.

\subsubsection{A Surface Operator Subtlety}\label{ssec:Z2Z2}

Before we move on from finite groups, let us mention a subtlety in our arguments, previously discussed in Section~5.1 of Ref.~\cite{Rudelius:2020orz}.\footnote{We thank Shu-Heng Shao, Po-Shen Hsin, Meng Cheng, and Qing-Rui Wang for bringing this subtlety to our attention, and for discussions on this topic.} In $\mathbb{Z}_2 \times \mathbb{Z}_2$ gauge theory in any number of dimensions, there is a surface operator $v(\Sigma)$ associated with the nontrivial element of $H^2(\mathbb{Z}_2 \times \mathbb{Z}_2, U(1)) \cong \mathbb{Z}_2$, which has a nontrivial {\em triple} linking with two Gukov-Witten operators. This suggests a different way that a Gukov-Witten operator can cease to be topological: if we render the operator $v(\Sigma)$ endable by adding a new dynamical string to the theory, then the Gukov-Witten operators that link with $v$ would no longer be topological. This poses a loophole, through which the absence of topological Gukov-Witten operators might not necessarily imply the endability of all Wilson line operators. Statement~\ref{stat:ElectricCompleteness} was artfully phrased to avoid this loophole, by referring only to $G$ gauge theories coupled to matter fields in representations of $G$ (and not to the stringlike objects that could render $v(\Sigma)$ endable). 

Of course, in this setting one might also inquire about the necessary ingredients to eliminate the topological operator $v(\Sigma)$ from the theory. We believe that a sufficiently careful analysis will show that, in general, the endability of {\em all} extended operators is equivalent to the absence of {\em any} topological operators. However, the precise proofs that we present in this paper address only limited subsets of operators and dynamical objects. We will discuss similar subtleties in the context of Chern-Simons terms in more detail in Section~\ref{sec:CS}. We leave a complete exploration of the details of $H^2(G,U(1))$ surfaces, and the search for a universal proof of the equivalence of completeness and the absence of topological operators, for future work.

\section{Compact, Disconnected Gauge Groups}\label{sec:Disconnected}

We will now move on to study topological operators for gauge theories for the case of a generic, disconnected Lie group.\footnote{We do not require our Lie groups to have dimension greater than zero, i.e., we include finite groups as examples of compact Lie groups.} This case has features in common with both those of compact, connected Lie groups, and of finite groups. Like the latter, it admits topological Wilson lines as well as topological Gukov-Witten operators. In fact, the presence of non-invertible symmetries for such a gauge theory was recently studied \cite{Nguyen:2021yld}, for the gauge group $U(1)^{N - 1} \rtimes S_N$ (our main example of $O(2)$ gauge theory is the case $N = 2$).

\subsection{\texorpdfstring{$O(2)$}{O2} Gauge Theory}\label{sec:O2}

$O(2) = U(1) \rtimes \mathbb{Z}_2$ gauge theory can be constructed from $U(1)$ gauge theory by gauging charge conjugation \cite{Kiskis:1978ed, Schwarz:1982ec}. 

$O(2)$ has a trivial irrep $\mathbf{1}$, a ``det'' irrep $\mathbf{1}_\mathrm{det}$ of dimension 1 (which gives the determinant of an element of $O(2)$), and other irreps of dimension 2 labeled by positive integers $q \geq 1$, which we will denote  $\mathbf{2}_q$. The det rep is the adjoint representation, reflecting the fact that $F_{\mu \nu}$ is not gauge invariant but transforms to $-F_{\mu \nu}$ under charge conjugation. Under the branching $O(2) \rightarrow U(1)$, $\mathbf{2}_q \rightarrow \mathbf{1}_q \oplus \mathbf{1}_{-q}$, whereas $\mathbf{1}_\mathrm{det} \rightarrow \mathbf{1}$, the trivial rep. Thus, a charge $q$ Wilson line of $O(2)$ gauge theory can be constructed as a gauge-invariant sum of Wilson lines of $U(1)$ gauge theory:
\begin{equation}
W_{q}^{O(2)}(\gamma) = W_{q}^{U(1)}(\gamma) + W_{-q}^{U(1)}(\gamma) =  \exp\left( iq \oint_\gamma A\right) + \exp\left( -iq \oint_\gamma A\right).
\end{equation}

The fusion between two $O(2)$ Wilson lines is given by the tensor product of their representations. In particular, we have (for $q \neq q'$)
\begin{align}
\begin{split}
\mathbf{2}_q \otimes \mathbf{2}_{q'} = \mathbf{2}_{q+q'} \oplus  \mathbf{2}_{|q-q'|}, &\quad
\mathbf{2}_q \otimes \mathbf{2}_q = \mathbf{2}_{2q} \oplus \mathbf{1}_\mathrm{det} \oplus \mathbf{1}, \\
\mathbf{2}_q \otimes \mathbf{1}_\mathrm{det} = \mathbf{2}_q, &\quad
\mathbf{1}_\mathrm{det} \otimes \mathbf{1}_\mathrm{det} = \mathbf{1}.
\end{split}
\end{align}
We can understand the first of these fusion laws from the perspective of the $U(1)$ Wilson lines:
\begin{align}
\begin{split}
W_{q}^{O(2)} \times W_{q'}^{O(2)} & =  \exp\left( i (q+q') \oint A\right) + \exp\left( i (q-q') \oint A\right)  \\
&+ \exp\left( i (-q+q') \oint A\right)+  \exp\left(- i (q+q') \oint A\right) \\
&=   W_{q+ q'}^{O(2)} + W_{|q-q'|}^{O(2)}.
\end{split}
\end{align}

As usual, codimension-2 Gukov-Witten operators are labeled by conjugacy classes of $O(2)$. All elements of $O(2)$ with determinant $-1$, (i.e., those elements disconnected from the identity) lie in a single conjugacy class. The associated Gukov-Witten operator $T^{O(2)}_\text{disc}$ links nontrivially with the adjoint (det) line, which is endable, so this surface operator is not topological.

The remaining conjugacy classes have determinant $+1$ and are labeled by an angle $\theta \in [0, \pi]$, where a given class $\theta$ may be represented by the matrix
\begin{equation}
\left(
\begin{array}{cc}
 \cos \theta & - \sin \theta \\
\sin \theta & \cos \theta 
\end{array}
\right),
\end{equation} 
and $\theta \sim - \theta \sim \theta + 2 \pi$. The center of $O(2)$ is isomorphic to $\mathbb{Z}_2$ and is generated by the element with $\theta = \pi$.

The Gukov-Witten operators labeled by conjugacy classes $\theta \in (0, \pi)$ may be written as gauge-invariant sums of Gukov-Witten operators of $U(1)$ gauge theory:
\begin{align}
\begin{split}
T_{\mathrm{e}^{i \theta}}^{O(2)}(\mathcal{M}^{(d-2)}) &=U_{\mathrm{e}^{i \theta}}^{U(1)}(\mathcal{M}^{(d-2)}) + U_{\mathrm{e}^{-i \theta}}^{U(1)}(\mathcal{M}^{(d-2)}) \\ &= \exp\left( \frac{i \theta}{g^2} \oint_{\mathcal{M}} \star F\right) + \exp\left( - \frac{i \theta}{g^2} \oint_{\mathcal{M}} \star F \right).
\label{O2GWsplitting}
\end{split}
\end{align}
All of these operators are topological in pure $O(2)$ gauge theory, and they have quantum dimension 2.  Meanwhile, the surface with $\theta = \pi$ is given simply by the $\theta = \pi$ surface of the $U(1)$ gauge theory:
\begin{equation} 
T_{\mathrm{e}^{i \pi}}^{O(2)}(\mathcal{M}^{(d-2)}) =U_{\mathrm{e}^{i \pi}}^{U(1)}(\mathcal{M}^{(d-2)}) = \exp\left(  \frac{i \pi}{g^2}\oint_{\mathcal{M}} \star F\right)  .
\end{equation}
This surface has quantum dimension 1 and generates the $\mathbb{Z}_2$ ``center'' 1-form global symmetry.

The fusion of two topological surfaces labeled by $\theta$, $\theta'$ can be understood in terms of the $U(1)$ electric symmetry generators. In particular, for $\theta \neq \theta'  \neq \pi - \theta' \in (0 , \pi)$, we have
\begin{align}
T^{O(2)}(\theta) \times  T^{O(2)}(\theta') &= \exp\left( \frac{i (\theta+ \theta')}{g^2} \oint \star F\right) +  \exp\left( \frac{i (\theta- \theta')}{g^2} \oint \star F\right) \nonumber \\ & + \exp\left(  \frac{i (-\theta+ \theta')}{g^2} \oint \star F\right)+\exp\left( - \frac{i (\theta+ \theta')}{g^2} \oint \star F\right) \\
&=  T^{O(2)}(\theta + \theta') \times  T^{O(2)}(\theta-\theta')  \nonumber.
\end{align}
Similarly, we have for $\theta \in (0, \pi)$:
\begin{align}
T^{O(2)}(\theta) \times  T^{O(2)}(\pi) &= \exp\left( \frac{i (\theta+ \pi)}{g^2} \oint \star F\right) +  \exp\left( \frac{i (-\theta+ \pi)}{g^2} \oint \star F\right)=  T^{O(2)}(\theta + \pi) ,
\end{align}
and $ T^{O(2)}(\pi) \times  T^{O(2)}(\pi) = 1$.

The fusion of two surfaces with $\theta = \theta'$ or $\theta = \pi - \theta'$ are slightly more complicated. The former is 
\begin{align}
T^{O(2)}(\theta) \times  T^{O(2)}(\theta) = 1 + W_\text{det} + T^{O(2)}(2 \theta),
\end{align}
while the latter is
\begin{align}
T^{O(2)}(\theta) \times  T^{O(2)}( \pi - \theta) =  T^{O(2)}(\pi) + T^{O(2)}(\pi, \text{det})+ T^{O(2)}(2 \theta - \pi),
\end{align}
Here, 1 is the trivial surface, $W_\text{det}$ is a trivial surface with a Wilson line in the det representation, and $T^{O(2)}(\pi, \text{det})$ is a $\pi$ surface with a det Wilson line. The appearance of the det Wilson line---an operator of lower dimension---at the junction of higher-dimensional operators is characteristic of a higher-group global symmetry \cite{Benini:2018reh}. In the case at hand, two of the higher-dimensional operators are not invertible, but rather have quantum dimension 2.\footnote{ The appearance of electrically charged particles in the fusion of strings has been discussed previously in the context of discrete gauge theories \cite{Moradi_2015}. It would be worthwhile to explore further the analogous fusion of surface operators we have seen here.}

The fact that $W_\mathrm{det}$ appears in the fusion of two topological surfaces indicates that det is a topological line. This can also be seen as a consequence of Statement~\ref{stat:topW}, because the identity component of $O(2)$ acts trivially on the determinant representation.

If we surround a Wilson line in a representation $\nu$ of $O(2)$ with a topological surface $T_{\mathrm{e}^{i \theta}}$ and shrink the surface to a point, we find the Wilson line times a linking coefficient,
\begin{equation}
B_\nu(\theta) = \frac{\chi_\nu(\theta) }{\chi_\nu(0) } \times \text{size}(\theta),
 \label{O2linking}
\end{equation}
with size$(\theta) =1$ for $\theta =0, \pi$, and size$(\theta) =2$ otherwise, and
\begin{align}
\chi_q(\theta) = \mathrm{e}^{i q \theta} + \mathrm{e}^{-i q \theta} = 2 \cos (q \theta) ,~~~\chi_0(\theta) = \chi_{\text{det}}(\theta) = 1.
\label{O2characters}
\end{align}

In the presence of matter in a representation $q$, Wilson lines in the $q$ representation can end. The surfaces that remain topological are those that link trivially with such a Wilson line, i.e., those satisfying $\chi_q(\theta) = \chi_q(0)$. From \eqref{O2characters}, this is equivalent to $\cos(q \theta) = 1$. We see that for $q=1$, the only such surface that survives is the trivial surface: as expected, no nontrivial surfaces remain topological when the spectrum is complete.

On the other hand, for $q=2$, the $\mathbb{Z}_2$ surface $T(\theta  = \pi)$ remains topological. The ``center symmetry'' survives even in the presence of matter in a representation $q$ provided that $q$ is even.

For more general $q$, however, there will be additional surfaces that remain topological, indexed by an integer $k$:
\begin{equation}
\theta = 2 \pi k/q\,,0<\theta \leq \pi.
\end{equation}
These surfaces are not invertible for $\theta \in (0, \pi)$. This is evidenced by the fact that the conjugacy class $\theta$ has two elements for $\theta \in (0,\pi)$, so the quantum dimension of $T(\theta)$ is equal to 2. Note that for $q=3$, the invertible operator $(\pi, 0)$ will cease to be topological. Nonetheless, there are still topological operators with $\theta = 2\pi/3$. We see that topological Gukov-Witten operators exist if and only if the spectrum is incomplete.

Similarly, topological Wilson lines exist if and only if the spectrum of twist vortices is incomplete. The one Wilson line which is topological in pure $O(2)$ gauge theory is the det line. It links nontrivially with the Gukov-Witten operator associated with the conjugacy class of elements with determinant $-1$. This Gukov-Witten operator becomes endable, and the det line non-topological, when the theory contains a dynamical twist vortex which induces charge conjugation on a charged particle that circles the vortex. In the 4d case, this twist vortex is the familiar ``Alice string'' of $O(2)$ gauge theory \cite{Schwarz:1982ec,Alford:1990mk,Preskill:1990bm}. All other Gukov-Witten operators can already end on improperly quantized 't Hooft operators in the pure gauge theory, according to Statement~\ref{stat:endableGW}. This property is inherited from the endability of the Gukov-Witten operators of $U(1)$ gauge theory.

Let us now specialize to 4 dimensions and study the 't Hooft operators (now lines) of the theory. Like the Wilson lines, these are labeled by positive integers and can be constructed from gauge-invariant sums of the 't Hooft lines of $U(1)$ gauge theory prior to gauging charge conjugation:
\begin{equation}
V_{q}^{O(2)}(\gamma) = V_{q}^{U(1)}(\gamma) + V_{-q}^{U(1)}(\gamma).
\label{O2Wilsonsplitting}
\end{equation}
More mathematically, the 't Hooft lines are in 1-1 correspondence with elements of the quotient $\pi_1(O(2))/\pi_0(O(2))$, where the basepoint of loops in $\pi_1(O(2))$ is the identity, and elements of $\pi_0(O(2))$ act by conjugation on these loops. $\pi_1(O(2))$ is then equal to $\pi_1(U(1)) \cong \mathbb{Z}$ (since $U(1)$ is the connected component of $O(2)$) and the quotient by $\pi_0(O(2))$ identifies loops of opposite orientation.

The fusion algebra of these loops is identical to the fusion of $O(2)$ Wilson lines: lines of magnetic charge $q$, $q'$ fuse to lines of charge $q + q'$, $|q-q'|$. Two 't Hooft lines of the same charge $q$ fuse according to
\begin{equation}
V_{q}^{O(2)} \times V_{q}^{O(2)}   = 1 + W_{\text{det}} + V_{2q}^{O(2)} .
\end{equation}
Here, $W_{\text{det}}$ is the Wilson line in the det representation: the appearance of a Wilson line in the fusion of two purely magnetic 't Hooft lines is a novelty not encountered in the compact examples considered previously.

The topological dimension-2 surfaces that link with the 't Hooft lines are similarly in 1-1 correspondence with the topological electric surfaces studied above: there are topological surfaces of quantum dimension 2 for all $\eta \in (0, \pi)$, which can be written as
\begin{equation} 
\tilde T_{\mathrm{e}^{i \eta}}^{O(2)}(\mathcal{M}^{(2)}) = \tilde U_{\mathrm{e}^{i \eta}}^{U(1)}(\mathcal{M}^{(2)}) + \tilde U_{\mathrm{e}^{-i \eta}}^{U(1)}(\mathcal{M}^{(2)}) = \exp\left( \frac{i \theta}{2 \pi} \oint_{\mathcal{M}}  F\right) + \exp\left( -\frac{i \theta}{2 \pi} \oint_{\mathcal{M}}  F \right).
\end{equation}
All of these operators are topological in pure $O(2)$ gauge theory, and they have quantum dimension 2. There is also a surface with $\eta = \pi$, which is given simply by the $\eta = \pi$ magnetic surface of the $U(1)$ gauge theory:
\begin{equation} 
\tilde T_{\mathrm{e}^{i \pi}}^{O(2)}(\mathcal{M}^{(d-2)}) = \tilde U_{\mathrm{e}^{i \pi}}^{U(1)}(\mathcal{M}^{(d-2)}) = \exp\left( \frac{i }{2} \oint_{\mathcal{M}}  F\right) .
\end{equation}
This surface is invertible and squares to the identity operator. The fusion of the magnetic surfaces is identical to that of the electric surfaces considered previously.

The linking between a topological surface $\tilde T(\eta)$ and an 't Hooft line $V_q$ is given precisely by $B_q(\theta)$, defined in \eqref{O2linking}. In the pure $O(2)$ gauge theory, none of the 't Hooft lines $V_q$ can end, and all of the magnetic surfaces ${\tilde T}(\theta)$, $\theta \in (0, \pi)$ are topological. In the presence of a monopole of magnetic charge $q$, $V_q$ is endable, and any surfaces $\tilde T(\eta)$ that do not satisfy $\cos(\eta q) = 1$ will cease to be topological. By the same analysis as in the electric case, the absence of topological magnetic surfaces is equivalent to the completeness of the magnetic spectrum.

Although the fusion algebras of the Wilson lines and electric surfaces match those of the 't Hooft lines and magnetic surfaces, there is one subtle difference between the electric and magnetic sides of the story: in particular, 't Hooft lines of odd charge $q$ are pseudo-real, whereas Wilson lines of odd charge $q$ are real. We can see this by realizing $O(2)$ as a Higgsing of $SO(3)$, as described in Section \ref{section_higgsing_su2}; the odd-charge 't Hooft lines of $O(2)$ descend from the topologically nontrivial 't Hooft lines of $SO(3)$, which correspond to pseudoreal, half-integer spin representations of ${}^L SO(3) = SU(2)$ under S-duality.

In fact, the fusion algebra and the reality properties of the 't Hooft lines (along with the det line) exactly match the representations of $\tilde{O}(2) = (U(1) \rtimes \mathbb{Z}_4)/\mathbb{Z}_2$. Likewise, the topological magnetic surfaces $\tilde{T}(\eta)$ are in 1-1 correspondence with topological Gukov-Witten surfaces of $\tilde{O}(2)$ gauge theory. It is thus natural to identify $\tilde{O}(2)$ gauge theory as the S-dual of $O(2)$ gauge theory in four dimensions. We plan to explore S-duality of disconnected gauge groups further in future work.

Even more generally, $O(2)$ gauge theory has dyonic operators carrying both electric charge $n$ and magnetic charge $m$. These are labeled by a pair of integers $(n, m)$, with the identification $(n, m) \sim (-n, -m)$ due to gauging charge conjugation of $U(1)$ gauge theory. These can be understood as the sum of two $U(1)$ dyonic lines of charge $(n, m)$ and $(-n, -m)$, respectively.

Similarly, there are mixed electric-magnetic topological operators labeled by a pair of angles $(\theta, \eta)$, with the identification $(\theta, \eta) \sim (- \theta, \eta)$. These have quantum dimension 2 unless $\theta$ and $\eta$ are both equal to 0 or $\pi$, in which case they have quantum dimension 1.

The linking coefficient between a $(\theta, \eta)$ surface and an $(n, m)$ line is given by
\begin{equation}
B_{(n, m)}(\theta, \eta) = 2 \cos(\theta n + \eta m) \times  \text{size}(\theta, \eta),
\end{equation}
where size($\theta, \eta$) is the quantum dimension of the surface. If the line $(n, m)$ is endable, then any surface with $\cos(\theta n + \eta m) \neq 1$ will not be topological.

More generally, if dyonic lines of charge $(n_1, m_1)$, ..., $(n_l, m_l)$ are endable, then only surface operators $(\theta, \eta)$ satisfying $\cos(n_i \theta + m_i \eta) = 1$ for all $i$ will be topological. These are precisely the conditions for such an operator to be topological in the $U(1)$ gauge theory, as can be seen from \eqref{U1condition}. Thus, from our discussion of $U(1)$ gauge theory, we see that topological surfaces will exist if and only if some line operators are not endable.

\subsection{The Structure of a Compact Lie Group}\label{ssec:Structure}

The simplest examples of disconnected, compact Lie group are finite groups. One way to obtain disconnected gauge groups that are not finite is to to gauge an outer automorphism of a connected Lie group, as in the $O(2)$ example we have just discussed. Another example is the group $E_8\times E_8$, which has an outer automorphism that acts by swapping the two $E_8$ factors. Gauging this outer automorphism produces a disconnected, compact Lie group $(E_8\times E_8)\rtimes \mathbb{Z}_2$, which is actually the gauge symmetry of what is usually referred  to as the $E_8 \times E_8$ heterotic string theory \cite{Dine:1992ya}. 

In fact, the most general compact Lie group is not that far away from the above construction~\cite{378257,378141}:
\begin{thm}
Let $G$ be a compact (not necessarily connected) Lie group $G$, and let $G_0$ be its identity component. Then 
\begin{equation} G \cong \frac{G_0\rtimes R}{P},\label{g03}\end{equation}
where $R$ is a finite group whose elements act on $G_0$ either trivially or via an outer automorphism of $G_0$, and $P$ is a finite, common normal subgroup of $Z(G_0)$ and $R$.\footnote{Although $P$ is abelian, in general $P$ is \emph{not} central in $G$ because $R$ can act non-trivially on $P \subseteq Z(G_0)$.}
\label{thmstg}
\end{thm}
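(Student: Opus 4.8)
The plan is to realize the isomorphism \eqref{g03} by first isolating a single finite subgroup of $G$ that carries all of the disconnectedness, and then recognizing $G$ as a quotient of the resulting semidirect product. Since $G$ is compact it has finitely many components, so we have a short exact sequence $1 \to G_0 \to G \to Q \to 1$ with $Q := \pi_0(G) = G/G_0$ finite. The key reduction is that it suffices to produce a \emph{finite} subgroup $R \subseteq G$ with $R\,G_0 = G$ and $R \cap G_0 \subseteq Z(G_0)$, such that each $r \in R$ acts on $G_0$ by conjugation either trivially or by an outer automorphism. Granting this, I would let $R$ act on $G_0$ by conjugation, form $G_0 \rtimes R$, and consider the multiplication map $\mu(g_0, r) = g_0 r$. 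A direct check shows $\mu$ is a surjective homomorphism whose kernel is the ``anti-diagonal'' $\{(p^{-1}, p) : p \in P\}$ with $P := R \cap G_0$; here one uses $P \subseteq Z(G_0)$ to see that $P$ acts trivially and that this kernel is a subgroup isomorphic to $P$. This exhibits $P$ as a common normal subgroup of $Z(G_0)$ and of $R$, and $\mu$ descends to the required isomorphism $(G_0 \rtimes R)/P \xrightarrow{\sim} G$.

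The whole argument then rests on building $R$. First I would extract the outer data: conjugation gives a homomorphism $c : G \to \Aut(G_0)$ with $c(G_0) = \Inn(G_0)$, and hence a genuine homomorphism $\bar c : Q \to \Out(G_0)$ whose image is a \emph{finite} subgroup (as $Q$ is finite). The adjoint group $\Inn(G_0) = G_0/Z(G_0)$ is compact, connected, and centerless, so its automorphism group splits via a pinning, $\Aut(G_0/Z(G_0)) = \Inn(G_0) \rtimes \Out(G_0)$; this provides an honest, finite group of ``pinned'' automorphisms realizing $\bar c(Q)$. Using these I would choose, for each $q \in Q$, a representative $s(q) \in G$ acting on $G_0$ by the corresponding pinned automorphism $\theta_q$, adjusting any initial lift by an element of $G_0$ so that $c(s(q)) = \theta_q$ exactly. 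Because the $\theta_q$ form a homomorphism $Q \to \Aut(G_0)$, the factor system $f(q, q') := s(q) s(q') s(qq')^{-1} \in G_0$ acts trivially on $G_0$, hence lies in $G_0 \cap Z_G(G_0) = Z(G_0)$, and defines a class in $H^2(Q, Z(G_0))$ for the $Q$-action through the $\theta_q$.

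It remains to make $R$ finite while keeping $P = R \cap G_0$ central, and this is where the compactness of $G_0$ does its real work. Writing $A := Z(G_0)$, the quotient $A / A_{\mathrm{tors}}$ by its torsion subgroup is a $\mathbb{Q}$-vector space, hence uniquely divisible and torsion-free, so $H^{\ge 1}(Q, A/A_{\mathrm{tors}}) = 0$. Thus the image of $f$ becomes a coboundary modulo $A_{\mathrm{tors}}$; lifting the trivializing $1$-cochain back to $A$ and modifying the section $s(q) \mapsto \sigma(q)$ accordingly, I obtain a cohomologous factor system $f'$ valued in $A_{\mathrm{tors}}$. Since $f'$ takes only finitely many values, each of finite order, these generate a finite, $Q$-stable subgroup $P \subseteq Z(G_0)$. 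The subgroup $R := \langle \sigma(q) : q \in Q \rangle \cdot P$ then consists of elements $\zeta\, \sigma(q)$ with $\zeta \in P$, so $|R| \le |P|\cdot|Q| < \infty$; by construction $R\,G_0 = G$, $R \cap G_0 = P \subseteq Z(G_0)$, and every element acts through some $\theta_q$, hence trivially or outer-ly. Feeding this into the reduction of the first paragraph yields \eqref{g03}.

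The step I expect to be the main obstacle is the realization of the outer action $\bar c$ by an \emph{honest} finite group of automorphisms of $G_0$ itself, rather than merely of the adjoint group $G_0/Z(G_0)$. The pinning splitting is clean only for the centerless adjoint group; lifting a pinned automorphism to $G_0$ is possible only up to the center $Z(G_0)$ and only after a finite ambiguity, and components of $G$ that act on $G_0$ by \emph{inner} automorphisms (detected by the centralizer $Z_G(G_0)$) must be folded in separately. It is precisely the resulting failure of the extension $1 \to G_0 \to G \to Q \to 1$ to split that forces $R$ to be a nontrivial finite central extension of a group mapping onto $Q$, with the surplus absorbed into $Z(G_0)$ as $P$; controlling this surplus, and confirming it can always be taken finite and central, is the crux of the proof and the reason the divisibility-plus-torsion argument above is needed.
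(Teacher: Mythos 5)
Your proof is correct, but it builds the finite subgroup $R$ by a genuinely different route than the paper. The two arguments share the final reduction: once a finite $R \subseteq G$ meeting every connected component is found, with $R \cap G_0 \subseteq Z(G_0)$ and every element acting trivially or by an outer automorphism, the multiplication map $G_0 \rtimes R \to G$ with anti-diagonal kernel $\{(p^{-1},p) : p \in P\}$ finishes the job --- this is essentially word-for-word the paper's last step. The difference is in producing $R$. The paper invokes two cited results: (a) the splitting $\Aut(G_0) = \Inn(G_0) \rtimes R_0$ with $R_0 \cong \Out(G_0)$ for an \emph{arbitrary} compact connected $G_0$, which yields the closed subgroup $K = f^{-1}(R_0)$ meeting every component with $K \cap G_0 = Z(G_0)$; and (b) the theorem that a compact Lie group with finitely many components contains a finite subgroup meeting every component, applied to $K$. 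You instead use only the classical pinning splitting for the semisimple, centerless adjoint group $G_0/Z(G_0)$, choose a section $s : \pi_0(G) \to G$ whose conjugation action projects to the resulting homomorphism into $\Aut(G_0/Z(G_0))$, observe that the factor system then takes values in $Z(G_0)$, and kill its non-torsion part using $H^{\geq 1}\bigl(Q, Z(G_0)/Z(G_0)_{\mathrm{tors}}\bigr) = 0$ by unique divisibility. In effect you re-prove, in the special case needed, the finiteness result (b) by the cohomological mechanism that underlies its general proof, and you avoid needing (a) in its full strength: the pinning fact for adjoint groups is more standard than the splitting of $\Aut(G_0)$ when $G_0$ has a central torus. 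The paper's route buys brevity given the citations; yours buys self-containedness.

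One substantive piece of feedback: the ``main obstacle'' you flag in your last paragraph is not actually an obstacle to your own argument; what needs fixing is only notation. The pinned automorphisms realize the image of $Q$ in $\Out(G_0/Z(G_0))$, \emph{not} in $\Out(G_0)$ (these differ: for $G_0 = U(1)$ the former is trivial while the latter is $\mathbb{Z}_2$), so you should never write ``$c(s(q)) = \theta_q$ exactly,'' only that $c(s(q))$ \emph{projects} to $\theta_q$ in $\Aut(G_0/Z(G_0))$. But that weaker statement is all your third paragraph uses: since the adjoint group is centerless, an element of $G_0$ whose image there conjugates trivially lies in $Z(G_0)$, which forces $f(q,q') \in Z(G_0)$; the cohomological step then manipulates only elements of $G$ and of $Z(G_0)$, so no lifting of pinned automorphisms to $G_0$ is ever required. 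The ``trivial or outer'' property of $R$ also follows from adjoint-level data alone: an inner automorphism $\iota_h$ of $G_0$ projects to an inner automorphism of $G_0/Z(G_0)$ that is trivial iff $h \in Z(G_0)$ iff $\iota_h = \mathrm{id}$; hence any $c(\sigma(q))$ projecting either to the identity or to a nontrivial pinned (non-inner) automorphism is itself either trivial or non-inner. Read at the adjoint level throughout, your construction is complete as written.
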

\begin{proof}
For compact connected $G_0$, the short exact sequence of groups 
 \begin{equation}
\begin{tikzcd}1\arrow{r}&\Inn(G_0)\arrow{r}&\Aut(G_0)\arrow{r}&\Out(G_0)\arrow{r}&1\end{tikzcd}\end{equation}
 splits~\cite{378220}. Since the sequence splits, we can choose a subgroup $R_0\subseteq \text{Aut}(G_0)$ such that $\Aut{G_0} = \Inn{G_0} \rtimes R_0$ (so $R_0 \cong \Out{G_0}$). Now consider the conjugation map $f:G\rightarrow \text{Aut}(G_0)$ (i.e., $f(g)$ maps $h$ to $g^{-1} h g$). The preimage $K=f^{-1}(R_0)$ is a subgroup of $G$ whose intersection with $G_0$ is its center $Z(G_0)$, because all  other elements of $G_0$ map to nontrival elements of $\text{Inn}(G_0)$. 
 
Moreover, $K$ intersects every connected component of $G$. To see this, note that multiplication by an element of $G_0$ maps to composition with an element of $\Inn(G_0)$ under the conjugation map $f$, whereas every element of $\Aut(G_0)$ is the composition of an element of $R_0$ with an element of $\Inn(G_0)$ (since $\Aut{G_0} = \Inn{G_0} \rtimes R_0$).
Putting these two facts together, we conclude that for any $g \in G$ there is some $g_0 \in G_0$ such that the action by conjugation of $g' = g g_0$ lies in $R_0$. Since $g$ and $g' \in K$ lie in the same connected component, $K$ intersects every connected component.

As the preimage of a closed Lie group $R_0$ by a continuous map, the group $K$ is a closed Lie subgroup of the compact Lie group $G$, and hence is a compact Lie group itself. In particular, $K$ therefore has finitely many connected components. As proven in~\cite{150949}, any such group has a finite subgroup $R$ that intersects every connected component. Since $K$ likewise intersects every connected component of $G$, $R$ is a finite subgroup of $G$ that intersects every connected component and whose elements act on $G_0$ either trivially or as an outer automorphism.

Let $P = R\cap G_0$ be the intersection of $R$ with the identity component $G_0$. By construction $P \subseteq Z(G_0)$. Moreover, $P$ is normal in $R$ because $G_0$ is normal in $G$.

Now consider $G_0 \rtimes R$ where $R$ acts on $G_0$ as above. There is a homomorphism $(g_0, r) \to g_0 r$ from $G_0 \rtimes R$ to $G$ given by multiplication in $G$. Because $R$ intersects every connected component of $G$, the homomorphism is surjective, with kernel $(p^{-1}, p)$ for $p \in P$. Thus $G \cong \frac{G_0\rtimes R}{P}$ as claimed.
\end{proof}

In physical terms, this structure theorem tells us that the most general gauge group can be obtained from a connected gauge group $G_0$ by gauging a discrete group $R$ (some of whose elements may act by non-trivial outer automorphisms on $G_0$) and then quotienting by a common subgroup of the discrete group $R$ and the center of the connected group $Z(G_0)$. Note that when $P$ is central in $G_0 \rtimes R$, the quotient by $P$ corresponds to gauging a 1-form symmetry~\cite{Gaiotto:2014kfa}.

\subsection{General Story}\label{sec:General}  

\subsubsection{Characterizing Topological Gukov-Witten Operators in Pure Gauge Theory}
\label{sec:generalTopGW}

As discussed in Section~\ref{sec:Top}, in pure $G$ gauge theory, the topological Gukov-Witten operators $T_{[g]}$ are those labeled by conjugacy classes in the centralizer of the connected component of the identity, $Z_G(G_0)$. It is possible to give a more concrete description, in terms of the topological surfaces we encountered in previous sections on compact gauge theories and finite gauge theories. In particular, following the discussion of Section \ref{ssec:Structure}, let us write $G = (G_0 \rtimes R)/P$, where $G_0$ is compact and every element of $R$ acts on $G_0$ either trivially or by nontrivial outer automorphism. To begin, let us suppose $P$ is trivial, and let $K \subset R$ be the subgroup of $R$ that acts trivially on $G_0$.

Any topological surface operator $T_{[g]}$ must satisfy $\chi_{\text{adj}}([g]) =  \chi_{\text{adj}}(1)$ (we will argue in Section \ref{ssec:Coulomb} below that this is also a sufficient condition). As discussed in Lemma~\ref{lemma1} and Statement~\ref{stat:topGW}, this is equivalent to the statement that $g g_0 g^{-1} = g_0$ for all $g_0$ in $G_0$, the component of $G$ connected to the identity. It is not hard to see that the elements $g := (x, h)$, $x \in G_0$, $h \in R$ satisfying this equation are precisely those of the form $(z, k)$, where $z \in Z(G)$ and $k \in K$. Thus, topological surfaces $T_{[g]}$ are associated with conjugacy classes consisting of elements of this form.

In general, some of these conjugacy classes will be composed of one or more elements of the form $g= (z, 1)$, $1 \neq z \in Z(G)$. Such a surface can be thought of as a gauge-invariant sum of 1-form center symmetry generators of $G_0$, invariant under the action of $R$. For instance, the surfaces $T(\theta), \theta \in [0, \pi]$ of $G=O(2)$ gauge theory that we saw above all take this form---they can be written as gauge invariant sums of $G_0 = U(1)$ 1-form symmetry generators, invariant under the action of $R = \mathbb{Z}_2$.

Some conjugacy classes will be composed of one or more elements of the form $g= (1, k)$, $1 \neq k \in K$. The corresponding topological surfaces are the Gukov-Witten operators of the discrete gauge theory $R$ that link trivially with the endable Wilson line in the adjoint representation.

Finally, some conjugacy classes will consist of elements of the form $g= (z, k)$, $z, k \neq 1$, $z \in Z(G)$, $k \in K$. The associated topological surfaces can be constructed by fusing the above two types of topological surfaces---namely, by fusing one surface $T_{[g]}$, $g = (z, 1)$ with another surface $T_{[g']}$, $g' = (1, k)$.

Thus, we see that the topological Gukov-Witten operators in $G = G_0 \rtimes R$ gauge theory can be written as (a) gauge-invariant sums of 1-form center symmetry generators of $G_0$ gauge theory, (b) Gukov-Witten operators of $R$ gauge theory, or (c) fusions of (a) and (b). This is still true for the most general case of $G = (G_0 \rtimes R)/P$: the quotient by $P$ will simply project out some of these topological Gukov-Witten operators.

\subsubsection{Invertible Gukov-Witten Operators}

As discussed in Section~\ref{sec:generalTop}, not all topological operators are invertible. A topological operator $T$ is invertible if and only if there exists another topological operator $T^{-1}$ such that $T \times T^{-1} = 1$, the trivial operator. By (\ref{eq:qdimfuse}), we see that $T$ may be invertible only if it has quantum dimension 1. As we have seen, topological Gukov-Witten operators, under which Wilson lines are charged, are labeled by conjugacy classes, and the quantum dimension of these surfaces is given by the number of elements in the conjugacy class. This means that an invertible topological surface corresponds to a conjugacy class with a single element, $g$, so that $h g h^{-1} = g$ for all $h\in G$. This is precisely the condition that $g$ lies in the center $Z(G)$ of $G$. Hence, Gukov-Witten operators labeled by elements of the center $Z(G)$ are indeed topological and invertible.

As discussed in Section~\ref{sec:topGW}, the Lemma \ref{lemma1} applied to the adjoint representation shows that the topological Gukov-Witten operators in pure $G$ gauge theory correspond to the centralizer $Z_G(G_0)$, which contains the center $Z(G)$. 
The fact that $Z(G)$ is typically a \emph{proper} subset of $Z_G(G_0)$ for a disconnected gauge group $G$ implies that there will be additional codimension-2 topological operators that are not invertible.

\subsubsection{Electric Completeness and Topological Surfaces}
\label{general_electric_completeness_section}

Next, let us turn to the relationship between general topological operators and completeness of the spectrum. In our language, completeness is equivalent to the statement that every Wilson line $W_\rho$ should be endable. We want to show that if this is not satisfied, then some surface operator $T_{[g]}$ will remain topological, whereas if it is satisfied, then all of these surfaces will be rendered non-topological. From what we have said so far, this is equivalent to the following theorem about group theory:
\begin{thm}
Let $G$ be a compact (not necessarily connected) Lie group, and let $R_{\text{endable}}$ be a subset of the set $R$ of all irreps of $G$ that is closed under tensor product (i.e., if $\rho, \sigma \in \Rend$ with $\rho \otimes \sigma = \bigoplus_i \mu_i$, then $\mu_i \in \Rend$). Then, $\Rend$ is a proper subset of $R$ if and only if there exists a non-trivial element $g \in G$ such that $\chi_\rho(g) = \chi_\rho(1)$ for all $\rho \in \Rend$.
\label{thm2}
\end{thm}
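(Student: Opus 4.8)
The plan is to first apply Lemma~\ref{lemma1} to convert the character condition into a statement about kernels. Since $G$ is compact, $\chi_\rho(g) = \chi_\rho(1)$ is equivalent to $\rho(g) = I$, so the existence of a nontrivial $g$ with $\chi_\rho(g)=\chi_\rho(1)$ for all $\rho \in \Rend$ is equivalent to the common kernel $N := \bigcap_{\rho\in\Rend}\ker\rho$ being nontrivial. The theorem then reduces to: $\Rend \subsetneq R$ if and only if $N \neq \{1\}$, and I would prove the two directions separately.

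For the easy direction ($N \neq \{1\} \Rightarrow$ proper), I would pick $1 \neq g \in N$ and invoke the Peter--Weyl theorem: the finite-dimensional irreps of a compact group separate points, so some irrep $\sigma$ has $\sigma(g) \neq I$. Since every $\rho \in \Rend$ kills $g$ but $\sigma$ does not, $\sigma \notin \Rend$, hence $\Rend \subsetneq R$. This direction uses neither tensor closure nor conjugation.

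For the harder direction I would argue the contrapositive: assuming $N = \{1\}$ (so the representations in $\Rend$ are jointly faithful), show $\Rend = R$. First I would reduce to a finite faithful subfamily: the finite intersections $\ker\rho_1 \cap \cdots \cap \ker\rho_k$ form a downward-directed family of closed subgroups of the compact Lie group $G$, and ordering them lexicographically by (dimension, number of connected components) --- a well-ordered target --- a minimal element $H$ must satisfy $H \subseteq \ker\rho$ for every $\rho \in \Rend$, whence $H = N = \{1\}$; thus some finite $V = \rho_1 \oplus \cdots \oplus \rho_k$ with $\rho_i \in \Rend$ is faithful. The key analytic input is then the tensor-generation property: every irrep of a compact group occurs as a subrepresentation of $V^{\otimes a} \otimes \bar V^{\otimes b}$ for some $a,b \geq 0$. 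This follows from Peter--Weyl together with Stone--Weierstrass --- the matrix coefficients of $V$ and $\bar V$ separate points (faithfulness), are closed under multiplication (products are matrix coefficients of tensor products) and under conjugation (this is where $\bar V$ enters), hence span a dense subalgebra of $C(G)$, which forces every irrep's isotypic block to appear. Since $\Rend$ is closed under tensor products and subrepresentations and contains the summands of $V$ and $\bar V$, it then contains every irrep, so $\Rend = R$.

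The main obstacle, and the point demanding the most care, is precisely this last step: the tensor-generation statement genuinely requires conjugate representations. Without them it is false --- for $G = U(1)$ the nonnegative-charge irreps are closed under tensor product, jointly faithful, and a proper subset, yet no nontrivial $g$ satisfies the character condition. In the physical setting this gap does not arise, because endable Wilson lines always come in conjugate pairs (a charged field $\psi$ is accompanied by $\psi^\dagger$), so $\Rend$ is automatically closed under conjugation; I would therefore make this closure explicit and feed it into the Stone--Weierstrass step. As an alternative to the uniform Peter--Weyl argument, one could instead invoke the structure Theorem~\ref{thmstg} to write $G \cong (G_0\rtimes R)/P$ and reduce to the separately-known connected and finite cases, but the direct analytic route is cleaner and avoids the case analysis.
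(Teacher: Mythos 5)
Your proposal follows essentially the same route as the paper's proof: the easy direction is the same non-faithfulness argument, and for the hard direction the paper cites Theorems A.10 and A.11 of \cite{Harlow:2018tng} for precisely the two steps you prove by hand (extraction of a finite-dimensional faithful subrepresentation, and tensor generation of the representation ring by a faithful representation together with its conjugate). Your self-contained versions of those steps---the lexicographic minimality argument on closed subgroups, and the Peter--Weyl/Stone--Weierstrass density argument---are both sound, so as a proof of the corrected statement your proposal works.

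The more important point is that your objection about conjugation is correct, and it applies to the paper's own argument, not just to a hypothetical careless one. Theorem~\ref{thm2} as stated assumes only that $\Rend$ is closed under tensor products (and summands thereof), but the paper's proof concludes from $\sigma \subset \rho_{\text{end}}^{\otimes n} \oplus \bar{\rho}_{\text{end}}^{\otimes m}$ that $\sigma \in \Rend$; when $\sigma$ appears only in the conjugate factor, this requires the irreducible summands of $\bar{\rho}_{\text{end}}$ to lie in $\Rend$, which the stated hypothesis does not provide. Your $U(1)$ example, with $\Rend$ the set of irreps of non-negative charge, is a genuine counterexample to the literal statement: it is tensor-closed, proper, and has trivial common kernel, so no nontrivial $g$ exists. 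The theorem therefore needs the additional hypothesis that $\Rend$ is closed under complex conjugation---harmless physically, since a Wilson line that ends on $\psi$ has its orientation reversal (the conjugate line) ending on $\psi^\dagger$---and your proposal is the correct repair.
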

\begin{proof}
First, suppose that such a $g$ exists. By Lemma \ref{lemma1}, $\rho(g) = I$ for all $\rho \in \Rend$. This implies that no representation that decomposes into irreps in $\Rend$ could be faithful, since $g$ acts trivially in every such representation. But, every compact Lie group has a faithful representation, which must include an irrep that is not included in $\Rend$.

Conversely, suppose no such $g$ exists. In this case, the Hilbert space direct sum,
\beq
\rho_{\text{end}} = \bigoplus_{\rho \in R_{\text{endable}}} \rho,
\eeq
is a faithful, unitary representation of $G$, using that every irrep of a compact Lie group is unitary. By Theorem A.10 of \cite{Harlow:2018tng}, any faithful, unitary representation of a compact Lie group has a finite dimensional, faithful sup-rep $\tilde{\rho}_{\text{end}}$.\footnote{In fact, it is possible to find such a finite dimensional faithful representation algorithmically, by ordering all irreps and taking direct sums until a faithful rep is produced. By compactness, this process must terminate after a finite number of steps.} Finally, by Theorem A.11 of \cite{Harlow:2018tng}, the tensor powers of any faithful, unitary representation (and its conjugate) of a compact Lie group generate the entire representation ring, i.e., given $\tilde \rho_{\text{end}}$ finite dimensional and faithful and $\sigma$ an irreducible representation of $G$, there exist $n, m$ such that $\rho_{\text{end}}^{\otimes n} \oplus \bar \rho_{\text{end}}^{\otimes m} = \sigma \oplus ...$. Since $\Rend$ is assumed to be closed under tensor products, we learn that $\Rend$ contains every irreducible representation of $G$, i.e., $\Rend = R$.
\end{proof}
Theorem \ref{thm2} is one of the primary results of our paper. It shows that for a general compact gauge group $G$, if all Wilson lines are endable, then each of the codimension-2 surfaces $T_{[g]}$ is rendered non-topological. Conversely, if some Wilson lines are not endable, then there exists a topological surface $T_{[g]}$, which links trivially with all endable Wilson lines, yet nontrivially with a non-endable Wilson line. This establishes Statement~\ref{stat:ElectricCompleteness}, as promised in  Section~\ref{sec:intro}.

\subsubsection{Twist Vortex Completeness}\label{sssec:twistvortexcomp}

We have already established, by Lemma~\ref{lemmafinite}, that in a theory of gauge fields coupled to dynamical twist vortices, there are no topological Wilson lines if and only if all Gukov-Witten operators are endable. For finite-group gauge theories, the endability of every Gukov-Witten operator necessarily requires the addition of dynamical twist vortices to the theory, labeled by a collection of conjugacy classes whose fusion generates all conjugacy classes in $G$. In fact, even when the identity component $G_0$ of a group is nontrivial, the endability of every Gukov-Witten operator only requires adding additional dynamical twist vortices whose corresponding conjugacy classes generate every conjugacy class in the finite group $\pi_0(G)$ under the quotient map $G \to \pi_0(G)$.

In order to see this, recall that the Gukov-Witten operators corresponding to the conjugacy classes in $G_0$ are already endable in pure gauge theory on improperly quantized 't Hooft operators, by Statement~\ref{stat:endableGW}. Suppose that we add to the theory dynamical twist vortices labeled by some conjugacy class $[g]$ in $G$ that intersects a non-identity connected component $F$. By fusion with the vortices labeled by conjugacy classes in $G_0$, we can generate dynamical vortices labeled by the conjugacy classes of any group element in $F$, since the action of $G_0$ on $F$ by left or right multiplication is transitive. In order to generate every conjugacy class in $G$, it is thus sufficient to add a collection of twist vortices that generate conjugacy classes intersecting every connected component of $G$, or equivalently, which generate all conjugacy classes in $\pi_0(G)$, as claimed.

Thus, even for disconnected Lie groups $G$ with a nontrivial identity component $G_0$, the collection of additional, dynamical twist vortices is still controlled by a finite group, namely $\pi_0(G)$. This establishes Statement~\ref{stat:TwistVortexCompleteness} in  Section~\ref{sec:intro}. In 4d theories, these twist vortices can provide examples of ``cosmic strings,'' with potential phenomenological implications that we will sketch in Section~\ref{sec:cosmicstrings}.

\subsubsection{Magnetic Completeness and Global Symmetry}

So far, we have focused primarily on Wilson lines and the electric side of the story. In this section, we briefly comment on 't Hooft operators and magnetic global symmetries.
To begin, we restrict ourselves to the study of $G$ gauge theory with $G=(G_0 \times H)/P$, $G_0$ connected, $H$ finite, in $d \geq 3$ dimensions. Note in particular that the semidirect product between $G_0$ and $H$ studied above has been replaced by a direct product.

In such a theory, 't Hooft operators may be classified by nontrivial $G$ gauge bundles. These gauge bundles are labeled by elements of $\pi_1(G_0)$, the first fundamental group of the identity component of $G$. As in the case of a connected gauge group discussed in Section \ref{sec:connectedmagnetic}, there is a magnetic $(d-3)$-form $\pi_1(G_0)^\vee$ global symmetry, where $\pi_1(G_0)^\vee$ is the Pontryagin dual group of $\pi_1(G_0)$.

't Hooft operators can end in the presence of magnetically charged $(d-4)$-branes (i.e., monopoles). The set of endable 't Hooft operators forms a subgroup $N$ of $\pi_1(G_0)$, which is normal since $\pi_1(G_0)$ is abelian. The remaining magnetic global symmetry is given by $({\pi}_1(G_0)/N)^\vee$, the Pontryagin dual of ${\pi}_1(G_0)/N$. From this, we see that if all 't Hooft operators are endable, then the magnetic symmetry will be completely broken. If not, then some nontrivial remnant will exist. Notably, all of the topological magnetic operators of dimension $2$ are invertible.

How is this story modified in the more general case $G=(G_0 \rtimes H)/P$, where some elements of $H$ acts via nontrivial outer automorphism on $G_0$? First, the set of 't Hooft operators is now classified by $\pi_1(G)/\pi_0(G)$, where elements of $\pi_0(G)$ act via conjugation on loops in $\pi_1(G)$ with basepoint equal to the identity. The action of $\pi_0(G)$ will in general identify distinct elements of $\pi_1(G)$, and as a result $\pi_1(G)/\pi_0(G)$ is not generically a group, as we saw in the case of $O(2)$ gauge theory.

The category of topological magnetic operators is morally equal to the ``Pontryagin dual'' of $\pi_1(G)/\pi_0(G)$. However, when $\pi_1(G)/\pi_0(G)$ is not a group, its Pontryagin dual is not well-defined. From the $O(2)$ example above, we expect that in this case some of the topological magnetic operators will be non-invertible. One way to understand this category would be to map it to the category of topological Gukov-Witten operators of some other gauge group $^LG$, related to $G$ via S-duality. For instance, for $G=O(2)$, we saw that the spectrum of 't Hooft lines and topological magnetic surfaces in 4d could be understood as the spectrum of Wilson lines and topological Gukov-Witten surfaces of $\tilde{O}(2)$ gauge theory. We leave further exploration of magnetic completeness, topological operators, and S-duality of these disconnected gauge groups to future work.

\section{Noncompact Gauge Groups}\label{sec:Noncompact}

Our previous results regarding the correspondence between topological (endable) Gukov-Witten operators and endable (topological) Wilson lines does not work for noncompact gauge groups. In this section we illustrate the differences in a couple of examples.

\subsection{\texorpdfstring{$\mathbb{R}$}{R} Gauge Theory}

We used the compactness of $G$ several times in the proof of Theorem \ref{thm2}. It is interesting to consider see how this theorem is violated if $G$ is not compact. In particular, consider $G =\mathbb{R}$, and suppose that the Wilson lines of charge $q = m + n \sqrt{2}$ are endable for $m$, $n \in \mathbb{Z}$. This spectrum is incomplete---for instance, a Wilson line of charge $1/2$ cannot end. However, no surfaces will remain topological: the characters in question are given by $\chi_q(\alpha) := \exp(2 \pi i q \alpha)$, and for any $\alpha \in \mathbb{R}$, there exists $q$ such that the Wilson line of charge $q$ is endable, and $\chi_q(\alpha) \neq 1$. 

It is interesting to note, however, that although the spectrum is not complete in this example, the charges of the endable Wilson form a dense subset of the real numbers. Indeed, for $\mathbb{R}$ gauge theory, a slightly modified version of Theorem \ref{thm2} holds true: there exist topological surfaces if and only if the set of charges of endable Wilson lines is not dense in $\mathbb{R}$. One direction of this statement is obvious: if this set is dense, then no $\alpha \in \mathbb{R}$ will satisfy $\chi_q(\alpha) := \exp(2 \pi i q \alpha) = 1$ for all endable Wilson lines $W_q$. Conversely, if the set of charges of endable Wilson lines is not dense, then there exists a neighborhood $(- \delta , \delta )$ of the origin that does not contain any endable Wilson line $W_q$, $q \neq 0$. Allowing $\delta$ to be the value of the largest such neighborhood, we conclude that $\delta$ must in fact be the smallest positive charge of an endable Wilson line. From here, we see that $W_{\pm \delta}, W_{\pm 2 \delta}, W_{\pm 3 \delta}, ...$. are all endable, and indeed these must be the only endable Wilson lines: given some other endable Wilson line $W_\gamma$, then if $\gamma \in (n \delta, (n+1) \delta)$, then $W_\beta$ with $\beta = |\gamma - n \delta| < \delta$ is endable, contradicting our assumption that $\delta$ was the smallest positive charge of an endable Wilson line. As a result, surfaces with $\alpha = n/\delta$ will be topological for all $n \in \mathbb{Z}$, as $\chi_{m \delta}(n \alpha) := \exp(2 \pi i m n)  = 1$.

\subsection{\texorpdfstring{$\mathbb{Z}$}{Z} Gauge Theory}

We next consider pure $\mathbb{Z}$ gauge theory in $d$ dimensions. Wilson lines are labeled by phases $\alpha \in [0, 2 \pi)$, whereas Gukov-Witten operators are labeled by integers. All of these operators are topological: the Wilson lines generate a $U(1)$ $(d-2)$-form global symmetry, whereas the Gukov-Witten operators generate a $\mathbb{Z}$ 1-form global symmetry. The charge operators of the 1-form symmetry are charged under the $(d-2)$-form symmetry, and vice versa (as is typical for a discrete gauge symmetry).

In the presence of charge $\alpha$ matter, the Wilson line of charge $\alpha$ is endable. The Gukov-Witten operator labeled by the integer $n$ is not topological unless $\alpha n \in 2 \pi \mathbb{Z}$. If $\alpha / \pi$ is rational, therefore, a $\mathbb{Z}$ 1-form global symmetry is preserved. If $\alpha / \pi$ is not rational, however, then the 1-form symmetry is completely broken, even though the spectrum is incomplete. As in the $\mathbb{R}$ gauge theory example considered above, however, the set of endable Wilson lines in this case form a dense subset of $[0, 2\pi)$. On the other hand, if the Gukov-Witten operator labeled by $n$ is endable, then the $(d-2)$-form global symmetry is broken to $\mathbb{Z}_n$, so the spectrum of Gukov-Witten operators is complete precisely when the $(d-2)$-form global symmetry is absent.

Unbroken $\mathbb{Z}$ gauge theory is rarely considered in the study of quantum field theory and quantum gravity. However, \emph{spontaneously broken} $\mathbb{Z}$ gauge groups appear in rather familiar examples of field theories, and the results are quite different then. Let us consider the theory of a compact scalar field $\theta$ in four dimensions,
\begin{equation}
\mathcal{L} = \int \rmd \theta \wedge \star \rmd \theta,
\end{equation}
with $\theta := \theta + 2 \pi$. This theory has a $U(1)$ $0$-form global symmetry given by $\theta \rightarrow \theta + c$ for $c \in [0, 2 \pi)$. Additionally, the compact scalar can be dualized to a 2-form $B_2$, and there is a 2-form magnetic global symmetry given by shifting $B_2$ by a flat connection, $B_2 \rightarrow B_2 + C_2$, $\rmd C_2 = 0$.

Alternatively, we can view this system as the theory of a noncompact real scalar $\phi$ with a spontaneously broken gauge symmetry $\mathbb{Z}$, under which $\phi$ transforms as $\phi \rightarrow \phi + 2\pi$. Before gauging this $\mathbb{Z}$ 0-form symmetry, there is a 0-form global symmetry $\mathbb{R}$. Such a symmetry is associated with topological codimension-1 surfaces,
\begin{equation}
U_{\mathrm{e}^{i\alpha}}({M_3}) = \mathrm{e}^{i \alpha \oint_{M_3} \star \rmd \phi},
\end{equation}
which act on local operators $\mathrm{e}^{i \beta \phi(x)}$ by 
\begin{equation}
U_{\mathrm{e}^{i\alpha}}(S^3(x))  \mathrm{e}^{i \beta \phi(x)} = \mathrm{e}^{i \alpha \beta} \mathrm{e}^{i \beta \phi(x)},
\end{equation}
for $\alpha, \beta \in \mathbb{R}$, and $S^3(x)$ a 3-sphere whose interior contains the point $x$.

Gauging the $\mathbb{Z}$ symmetry amounts to restricting $\beta$ to lie in $\mathbb{Z}$ and imposes the condition that the periods of $\star \rmd \phi$ must be integers, which identifies $\alpha \sim \alpha + 2\pi$. As a result, the $\mathbb{R}$ 0-form global symmetry is reduced to the quotient $\mathbb{R} / \mathbb{Z} \simeq U(1)$.

As discussed above, pure $\mathbb{Z}$ gauge symmetry has two remaining symmetries: a $\mathbb{Z}$ electric 1-form symmetry, and a $U(1)$ magnetic 2-form symmetry generated by the Wilson lines of the gauge theory. 
However, in the case at hand, the 0-form gauge symmetry $\mathbb{Z}$ is spontaneously broken by the field $\phi$, and the Wilson line in $\mathbb{Z}$ representation $\alpha \in [0, 2 \pi)$, namely $\mathrm{e}^{i \alpha \oint \rmd \phi}$, can end on an operator of the form $\mathrm{e}^{i \alpha \phi}$. This means that the magnetic surfaces labeled by integers are not topological, so there is no $\mathbb{Z}$ 1-form electric global symmetry that results from gauging the axion shift symmetry. Instead, there is only a $U(1)$ 2-form global symmetry generated by the Wilson lines, under which the magnetic surfaces are charged.

Thus, we are left with a 0-form $U(1)$ global symmetry and a 2-form $U(1)$ global symmetry, exactly as we had for the compact scalar field $\theta$, with associated 3/1-form currents $\star \rmd \phi$ and $\rmd \phi$, respectively. The symmetry generators for the 2-form symmetry $\exp(i \alpha \int_\gamma \rmd \phi )$ are the Wilson lines of the $\mathbb{Z}$ gauge theory.
Since these groups are now compact, we can use Statements \eqref{stat:ElectricCompleteness} and \eqref{stat:TwistVortexCompleteness} again to recover the correspondence between absence of topological operators and completeness of the spectrum. It is worth noticing that whenever we find these noncompact gauge groups in string theory, they are indeed spontaneously broken to a compact group, so this is consistent with the expectation of no global symmetries and the Completeness Hypothesis in quantum gravity. The canonical example is SL(2,$\mathbb{Z}$) in Type IIB, which gets spontaneously broken to finite subgroups in the moduli space.

\section{Higgsing}\label{sec:Higgsing}

In the previous sections we have shown that, for compact gauge groups, completeness of the gauge theory spectrum is equivalent to the absence of topological Gukov-Witten operators, while completeness of twist vortices is equivalent to the absence of topological Wilson lines. One could wonder how robust is this story under Higgsing. In other words, could the process of Higgsing change the spectrum of topological operators such that a complete UV theory gives rise to an incomplete IR theory, or vice versa? Here, we show that the process of Higgsing preserves completeness: the IR theory is complete if and only if the UV theory is also complete. We further discuss the behavior of Gukov-Witten operators and 't Hooft operators under Higgsing, and we argue that any Lie group can be obtained as a result of Higgsing an $SU(n)$ gauge theory for sufficiently large $n$. The resulting theory will have no topological Wilson lines, and it will have no magnetic $(d-3)$-form symmetry.

We begin with a number of examples.

\subsection{Higgsing \texorpdfstring{$SU(2)$}{SU2} and \texorpdfstring{$SO(3)$}{SO3}}\label{section_higgsing_su2}

We start with Higgsings of pure $SU(2)$ gauge theory (with zero theta angle).  The $SU(2)$ theory has Wilson lines labeled by $SU(2)$ representations. Adjoint lines (and more generally, integer spin representations) are endable, and so the only non-endable Wilson lines are those that represent the $\mathbb{Z}_2$ center nontrivially (half-integer spin). None of these lines are topological.

The Gukov-Witten operators correspond to conjugacy classes of $SU(2)$. Recall from Section \ref{sec:SUNZK} that these conjugacy classes are labeled by $\theta \in [0, \pi]$, where a representative of the conjugacy class $\theta$ is given by diag$(\exp (i \theta), \exp( - i \theta))$.
The center of $SU(2)$ is isomorphic to $\mathbb{Z}_2$ and consists of the elements $\theta = 0 , \pi$. The Gukov-Witten operator associated to $\theta =0$ is the trivial operator, and the $\theta = \pi$ Gukov-Witten operator is a topological, invertible operator which generates the 1-form center symmetry. Gukov-Witten operators corresponding to $\theta \in ( 0 , \pi)$, on the other hand, are not topological. All of these Gukov-Witten operators are endable, and relatedly there are no nontrivial topological Wilson lines.
There are also no topologically nontrivial 't Hooft operators, since there are no nontrivial $SU(2)$ bundles over $S^2$ (see Section \ref{sec:tHO}).

Having discussed the operators in the $SU(2)$ theory, we now describe how they map to different operators after Higgsing. 

\paragraph{Higgsing to $U(1)$:} In this textbook example, we turn on a vev for a scalar $\Phi$ in the adjoint representation to break the gauge group to $U(1)$ in the Cartan. The $SU(2)$ Wilson line of spin $j$ becomes a sum of  $U(1)$ Wilson lines of charge $2s$ for $s= -j, -j+1,\cdots, j-1,j$, where we have normalized the IR quantum of charge to be 1. As before, Wilson lines of odd charge are topological. The matching of Gukov-Witten operators is subtler. Gukov-Witten operators in the IR theory are labeled by an angle $\theta \in [0, 2 \pi)$. In pure $U(1)$ gauge theory, these are all topological and generate a $U(1)$ 1-form symmetry, but in the case at hand this symmetry is broken to a $\mathbb{Z}_2$ subgroup by the Higgs field. 

The resulting $\mathbb{Z}_2$ topological operator is the IR description of the UV topological Gukov-Witten operator. Naively, we seem to have twice as many non-topological IR Gukov-Witten operators (which are labeled by an angle $\theta \in (0, 2\pi)$ other than $\pi$) as non-topological UV Gukov-Witten operators (which are labeled by an angle in $(0,\pi)$. To understand this mismatch, notice that the UV Gukov-Witten operator was defined by first excising a codimension-2 surface from the path integral and then specifying the holonomy in the transverse $S^1$ to be $g\in SU(2)$, and then summing over all $g$ in the conjugacy class. When attempting to do this in the spontaneously broken vacuum,  any $g$ that does not preserve the adjoint Higgs vev $\Phi$ will lead to a discontinuity and disappear from the path integral. Only those $g\in [g]$ that commute with $\Phi$ will contribute. In our case, writing the Higgs vev as 
\begin{equation}  \Phi= \phi \left(\begin{array}{cc} 1&0\\0&-1\end{array}\right),\quad \phi>0,\label{wee3}\end{equation} 
we get that for an $SU(2)$  conjugacy class with parameter $\theta$, the sum collapses\footnote{There are some issues of normalization and dividing by volume of the gauge group.} to the elements
\begin{equation} g= \left(\begin{array}{cc} \mathrm{e}^{i\theta}&0\\0&\mathrm{e}^{-i\theta}\end{array}\right)\quad\text{and}\quad g'= \left(\begin{array}{cc} \mathrm{e}^{-i\theta}&0\\0&\mathrm{e}^{i\theta}\end{array}\right).\label{ewe334}\end{equation}
These two elements belong to the same $SU(2)$ conjugacy class, but they are in different conjugacy classes in $U(1)$; thus, the UV Gukov-Witten operator with parameter $\theta$ becomes a direct sum of two IR Gukov-Witten operators. This explains the mismatch noted above. 

More general linear combinations of the IR Gukov-Witten operators can be obtained from the UV theory as operators with insertions of the Higgs field $\Phi$. For example, consider the same definition for the Gukov-Witten operator, as a sum over boundary conditions on $S^1$, but introduce additional $\Phi$-dependent factors in the sum, such as
\begin{equation}
\sum_{g\in [g]}\quad\rightarrow\quad \sum_{g\in [g]} {\text{tr}(g\cdot\Phi)}. \label{gwpres}
\end{equation}
In the broken phase, with the vev \eqref{wee3}, again only the two elements \eqref{ewe334} contribute, but the additional Higgs-dependent factors ensure that we produce the linear combination
\begin{equation} 
2i\phi\sin\theta \left(U_{g=\mathrm{e}^{i\theta}} - U_{g=\mathrm{e}^{-i\theta}}\right)\end{equation}
of the IR theory. In this way, all surface operators of the IR theory originate from operators in the UV theory dressed with Higgs fields. We emphasize that the particular dressing \eqref{gwpres} is only one arbitrary example, not of special physical relevance.

Finally, the IR theory 't Hooft operators are labeled by $\pi_1(U(1))=\mathbb{Z}$. In contrast, $\pi_1(SU(2))$ is trivial, so there are no nontrivial 't Hooft operators in the UV theory. Again, we can construct (many) UV operators involving the Higgs field that become the IR 't Hooft operator. For instance, consider the operator defined by excising a line and prescribing boundary conditions on the angular $S^2$ as
\begin{equation}\Phi \rightarrow \Phi_0 \cdot f_{S^2},\label{eww}\end{equation}
where  $f_{S^2}$ is a representative of the nontrivial map from $S^2$ to $SU(2)/U(1)=S^2$. In the IR, this behaves as an 't Hooft operator in the spontaneously broken vacuum for any nonzero value of $\Phi_0$.

Since there are no non-endable 't Hooft operators in the UV, the 't Hooft operators we just constructed must be endable. As is well-known, the IR theory contains 't Hooft/Polyakov monopole solutions \cite{Polyakov:1974ek,tHooft:1974kcl}, and the 't Hooft operators can end on operators that create these monopoles. In terms of the UV description \eqref{eww}, the line can end by simply deforming $\Phi_0$ to zero.

\paragraph{Higgsing to $\tilde{O}(2)$ and $O(2)$:} Consider the spin 2 representation of $SU(2)$. Regarding $SU(2)$ as the double cover of $SO(3)$, the spin 2 representation corresponds to a symmetric tensor. If we choose coordinates $(x, y, z)$ for the vector representation, then a vev like 
\begin{equation} \langle \Phi \rangle= \mathrm{d}z^2\label{vev0}\end{equation}
only preserves rotations in the $xy$ plane, but it also preserves simultaneous reflections of the $x$ and $z$ or $y$ and $z$ coordinates. The unbroken subgroup will be a double cover of $O(2)$, and in particular, since these simultaneous reflections are equivalent to $180^\circ$ rotations in the $xz$ or $yz$ planes, the reflections will square to $-1$ in the double cover. The resulting group is $\text{Pin}^-(2)$, or equivalently,
\begin{equation}\text{Pin}^-(2)=\tilde{O}(2)=\frac{U(1)\rtimes \mathbb{Z}_4}{\mathbb{Z}_2}.\end{equation}
We have introduced the notation $\tilde{O}(n)$ to denote generally the extension of $O(n)$ by reflections that square to $(-1)$,
\begin{equation} \tilde{O}(n)\equiv \frac{SO(n)\rtimes \mathbb{Z}_4}{\mathbb{Z}_2}.\end{equation}
Note that $\text{Pin}^-(2)=\tilde{O}(2)$, but this is a low-dimensional accidental isomorphism.

Going back to $\tilde{O}(2)$, we can now match Wilson lines and Gukov-Witten operators.
The group $\tilde{O}(2)$ can also be described explicitly as the group of $2 \times 2$ matrices generated by 
\begin{equation} M_{\theta}=\left(\begin{array}{cc}\cos(\theta)&-\sin(\theta)\\\sin(\theta)&\cos(\theta)\end{array}\right),\quad R= i\left(\begin{array}{cc}0&1\\-1&0\end{array}\right).\label{em4}\end{equation}
Notice that this is essentially the same matrix description of $O(2)$, except that reflections are multiplied by $i$.
The representation theory of $O(2)$ and $\tilde O(2)$ are very similar, including the adjoint (which represents $R$ by a sign) and two-dimensional representations indexed by an integer $q$. Unlike for $O(2)$, only representations of even $q$ are real; those of odd $q$ are pseudoreal instead (these are the spinor representations when we regard  $\tilde{O}(2)$ as $\text{Pin}^-(2)$).

 The adjoint of $SU(2)$ decomposes as a sum of the sign representation and the charge $q=2$ real representation. This means that in the IR theory, the adjoint (sign) representation and all the real representations of even charge are always endable. The fundamental representation of $SU(2)$ becomes the pseudoreal representation of charge $q=1$. 

The conjugacy classes of $\tilde{O}(2)$ are similar to those of $O(2)$, discussed in Section \ref{sec:O2}: there is one conjugacy class for each $\theta\in [0,\pi]$, as well as a conjugacy class for reflections. The relation between $SU(2)$ and $\tilde{O}(2)$ conjugacy classes is one-to-one for classes of determinant $+1$---they are described by the same set of data. For instance, the pseudoreal $q=1$ line will be endable if the UV $SU(2)$ theory has fields transforming in the fundamental representation. The Gukov-Witten operator associated to reflections is rendered non-topological since it links with the determinant line, which is endable. This Gukov-Witten operator is not present as a Gukov-Witten operator in the UV gauge theory, but only as an operator involving the Higgs field, similar to the discussion around \eqref{gwpres}.

Just like in the $SU(2)\rightarrow U(1)$ example above, we also have 't Hooft operators, labeled by principal $O(2)$ bundles on the transverse $S^2$. These are classified by the equatorial transition function $S^1\rightarrow O(2)$ up to the conjugation action. This can be used to fix a basepoint, so 't Hooft operators are labeled by $\mathbb{Z}$. They can all end on operators that create monopoles, constructed in the same way as before.

In pure $\tilde O(2)$ gauge theory, the Wilson line in the det representation is topological. In the case at hand, this line is no longer topological, however, as the Gukov-Witten operator it links can end on an operator that creates a vortex. These vortices are just ordinary ANO (Abrikosov-Nielsen-Olesen) strings \cite{Abrikosov:1956sx, Nielsen:1973cs} (or their non-abelian generalization \cite{Alford:1992yx}), constructed by allowing the Higgs field $\Phi$ to wind appropriately around the core of the string. This winding corresponds to flipping the direction of the vev \eqref{vev0} as we wind around the string. Specifically, if $\varphi$ is an angular coordinate centered on the string, we have the holonomy
\begin{equation}\langle\Phi\rangle(\varphi)= (\cos(\varphi/2) \mathrm{d}z+\sin(\varphi/2) \mathrm{d}y)^2,\quad 0\leq \varphi\leq 2\pi.\end{equation}
After one turn, the vev remains the same, but the coordinate $z$ has flipped sign, specifying a holonomy by a reflection. 

This concludes our discussion of $\tilde{O}(2)$. We now briefly consider $O(2)$, to connect with Section \ref{sec:O2}. $O(2)$ is not a subgroup of $SU(2)$, since the reflections have determinant $-1$, but it is a subgroup of $U(2)$ or $SO(3)$. We will consider the latter possibility, where breaking to $O(2)$ is achieved by the same vev \eqref{vev0} in a symmetric tensor representation. Since $SO(3)=SU(2)/\mathbb{Z}_2$, there is a surjective map from $SU(2)$ conjugacy classes to $SO(3)$ conjugacy classes, which are labeled by $\theta' = 2 \theta \in [0, \pi]$. This map identifies conjugacy classes with angle $\theta$ and $\pi-\theta$. In particular, the center conjugacy class in $SU(2)$ is mapped to the identity, and the classes with angle $\theta$ and $\pi-\theta$ are mapped to the same conjugacy class, labeled by $\theta' = 2 \theta$. Accordingly, only representations which have the same character for $\theta$ and $\pi-\theta$ survive; as is well-known, these are precisely the integer-spin representations of $SU(2)$. 

The map from $SU(2)$ to $SO(3)$ descends to a map from $\tilde{O}(2)$ to $O(2)$, and a similar story takes place for Gukov-Witten operators and Wilson lines. The only novelty is the presence of an 't Hooft operator in $SO(3)$, associated to an $SO(3)$ bundle with nontrivial Stiefel-Whitney class on the transverse $S^2$. This line survives and becomes a genuine 't Hooft operator of the IR theory. Finally, the topological Wilson line in the det representation is rendered non-topological by the same ANO string as in the $\tilde{O}(2)$ theory.

\subsection{Higgsing \texorpdfstring{$U(1)\rightarrow \mathbb{Z}_N$}{U1toZN}}

Let us move now to the canonical example of Higgsing $U(1)$ to the discrete subgroup $\mathbb{Z}_N$, as both groups have been independently studied in this work. The set of operators of $U(1)$ were discussed in Section \ref{sec:U1}, where we explained that there Wilson lines and 't Hooft operators are labeled by integers $n\in \mathbb{Z}$ and are not topological and not endable (in the absence of charged states). There are also electric and magnetic Gukov-Witten operators given in \eqref{UFe},\eqref{UFm} labeled by a phase $\alpha\in [0,2\pi)$, which are topological and endable in improperly quantized Wilson lines/'t Hooft operators of charge $n=\alpha$ respectively.

By contrast, as discussed in Section \ref{sec:ZN}, the Wilson lines in $\mathbb{Z}_N$ are labeled by integers $n=0, \ldots, N-1$ and are topological but not endable. Furthermore, Gukov-Witten operators of the electric symmetry are also labeled by integers $n=0, \ldots, N-1$, which are topological and not endable. This might seem to contradict the $U(1)$ case, where every Gukov-Witten operator is endable, and no nontrivial Wilson line is topological. The key to resolving this apparent contradiction is to note that the Higgsing produces twist vortices, which allow all electric Gukov-Witten operators to end and, therefore, render the electric Wilson lines non-topological. In four dimensions, these twist vortices are the familiar strings of the Abelian Higgs model, which are charged under the 2-form global symmetry of the $B$-field dual to the axion that is eaten up by the gauge field. The Wilson surfaces charged under the 2-form global symmetry are the same as the electric Gukov-Witten operators (see \eqref{UB}), so these get indeed broken in the presence of the strings. Hence, completeness of the spectrum of twist vortices is guaranteed in the Higgsed theory, and this in turn implies the absence of topological Wilson line operators, as expected.

The 't Hooft operators of the pure $U(1)$ gauge theory are labeled by integers, and they are charged under a $(d-3)$-form $U(1)$ global symmetry. Upon Higgsing to $\mathbb{Z}_N$ and flowing to the deep IR, this 1-form magnetic symmetry is gauged, and as a result there are no 't Hooft operators in the IR theory.

\subsection{Higgsing \texorpdfstring{$O(2)$}{O2}}

\paragraph{Higgsing to Dih$_q$:}
Let us next consider the Higgsing of $O(2)$ gauge theory to a discrete subgroup. In particular, suppose we condense some matter field of charge $q$. We will first consider $q=2$, then $q=3$ and then general $q$.

When there is matter in the $q=2$ representation of $O(2)$, the $U(1)$ part of the gauge group is Higgsed down to $\mathbb{Z}_2$. The remaining gauge group is $\mathbb{Z}_2 \times \mathbb{Z}_2$. In general, $\mathbb{Z}_2 \times \mathbb{Z}_2$ gauge theory has Wilson lines $e_1$ and $e_2$ associated with each $\mathbb{Z}_2$ factor, which fuse to give the line $e_1 e_2$. Similarly, it has magnetic surfaces $m_1$ and $m_2$, which fuse to give $m_1 m_2$. Each $e_i$ links nontrivially with $m_i$, yielding a phase of $-1$. 

In the UV theory, there is just one conjugacy class with a topological surface: namely, the $\theta = \pi$ class, which is associated with the $\mathbb{Z}_2$ center symmetry of $O(2)$. This surface is endable, as such a surface may end on an improperly quantized 't Hooft operator. Similarly, there is one topological Wilson line $W_{\text{det}}$, associated with the det rep. This line links nontrivially with a surface $T^{O(2)}_{\text{disc}}$ in the conjugacy class of $O(2)$ that is disconnected from the identity---this surface is not topological, due to the endability of the det line, nor is it endable.

Thus, we have the following dictionary from the $O(2)$ gauge theory to the $\mathbb{Z}_2 \times \mathbb{Z}_2$ gauge theory: 
\begin{align}
\begin{tabular}{c|c}
O(2) & $\mathbb{Z}_2 \times \mathbb{Z}_2$ \\ \hline
$T^{O(2)}(\pi)$ & $ m_1$ \\
$T^{O(2)}_{\text{disc}} $& $ m_2$ \\ 
$W^{O(2)}_{q=1}$ & $e_1 $\\
$W_\text{det}$ & $e_2$
\end{tabular}
\end{align}
Here, $m_1$ and $e_2$ are endable, while $e_1$, $m_2$, $e_1 e_2$, and $m_1 m_2$ are not endable. As a result, $m_1$ and $e_2$ are topological, while the rest of the lines and surfaces are not. Clearly, the resulting $\mathbb{Z}_2 \times \mathbb{Z}_2$ gauge theory is not pure---some surfaces and lines are endable, so not every surface/line is topological.

Next, we consider the case of Higgsing by a field of charge $q=3$. Now, the $U(1)$ part of the gauge group is Higgsed to $\mathbb{Z}_3$. The remaining gauge group is a nonabelian group of order 6, namely, $S_3$.

$S_3$ gauge theory was discussed previously in Section \ref{ssec:S3}. It features nontrivial Gukov-Witten operators $T_{[\theta]}$, $T_{[\tau]}$ labeled by two nontrivial conjugacy classes, $[\theta]$ and $[\tau]$, of size $2$ and $3$, respectively. It features Wilson lines $W_-$, $W_2$, labeled by the two nontrivial representations of $S_3$, namely the sign representation and the standard representation. These representations have dimensions 1 and 2, respectively.

Upon Higgsing the $O(2)$ gauge theory by a $q=3$ particle, one topological surface remains, namely $T^{O(2)}(\theta=2 \pi/3)$. This surface is also endable, and it descends to the $S_3$ surface $T_{[\theta]}$. The $S_3$ surface $T_{[\tau]}$ comes from surface $T^{O(2)}_{\text{disc}} $ associated with the disconnected component of the identity, and it is neither endable nor topological. The topological, endable line $W_\text{det}$ descends to the line $W_{-}$, whereas the line $W_{2}$ comes from any Wilson line $W^{O(2)}_{q'}$ with $q' \neq 0$ mod 3. Such a line is neither topological nor endable. Thus, we have the dictionary:
\begin{align}
\begin{tabular}{c|c}
O(2) & $S_3$ \\ \hline
$T^{O(2)}(2\pi/3)$ & $ T_{[\theta]}$ \\
$T^{O(2)}_{\text{disc}} $& $ T_{[\tau]}$ \\ 
$W^{O(2)}_{q'}~(q' \neq 0$ mod 3) & $W_2 $\\
$W_\text{det}$ & $W_-$
\end{tabular}
\end{align}
Note that the quantum dimensions of the topological lines and surfaces match up correctly: the size of the $O(2)$ conjugacy class with $\theta = 2 \pi/3$ is two, which is also the size of the conjugacy class $\theta$ of $S_3$. The sign rep and the det rep are both 1-dimensional, whereas the $q'$ rep and the standard rep are both 2-dimensional.

Finally, let us consider the condensation of a scalar in a general charge $q$ representation. The resulting group is the dihedral group with $2q$ elements, Dih$_q$. This group is represented as
\begin{equation}
\text{Dih}_q = \langle x, a | x^2 = a^q = e\,,~~x a x^{-1} = a^{-1}\rangle.
\end{equation}
Here, $a$ should be thought of as the generator of the $\mathbb{Z}_q$ remaining after Higgsing $U(1)$ with a charge $q$ particle, while $x$ represents charge conjugation. The topological surfaces (all of which are endable) are given by $T^{O(2)}(\theta)$ with $\theta = 2 \pi k/q$, $k=1, 2,...,\lfloor q/2 \rfloor$. For $\theta \neq \pi$, these surfaces have quantum dimension 2, and they correspond to conjugacy classes containing the elements $\{ a^k, a^{-k} \}$. If $q$ is even, then the surface $T^{O(2)}(\theta = \pi)$ is an invertible topological surface, and it corresponds to the conjugacy class of the element $a^{q/2}$, which is in the center of Dih$_q$.
 The remaining conjugacy classes of Dih$_q$ all feature elements of the form $[x a^m]$ for some $m$, which means that they come from the disconnected component of $O(2)$: such surfaces are neither endable nor topological.
 
The det Wilson line $W_{\text{det}}$ of $O(2)$ gauge theory descends to det Wilson line of Dih$_q$. This is an endable, topological line corresponding to the representation of Dih$_q$ whose kernel consists of all elements of the form $a^m$.\footnote{The existence of such a representation is guaranteed by the fact that $\{e, a, a^2, ..., a^{q-1} \}$ is a normal subgroup of Dih$_q$, and any normal subgroup is the kernel of some homomorphism. In this case, the homomorphism maps Dih$_q \rightarrow \mathbb{Z}_2$, so the nontrivial irrep of $\mathbb{Z}_2$ induces a nontrivial irrep of Dih$_q$.} The other Wilson lines of Dih$_q$ gauge theory come from Wilson lines of charge $q'$. Here, there is an identification of representations $q' \sim q' + q \sim q - q'$, which means that for $q$ odd, there are two-dimensional representations of Dih$_q$ given by $q' = 1, 2,...,(q-1)/2$, and for $q$ even there are two-dimensional representations given by $q' = 1, 2,...,q/2-1$. For $q$ even, the $q/2$ rep splits into a pair of 1-dimensional representations (as we saw for the case of $q=2$ above, in which the $q=1$ line decomposes into the irreps $e_1$ and $e_1 e_2$). Thus, when combined with the trivial rep and the one-dimensional irrep coming from the det rep, this gives a total of $(q+ 3)/2$ irreps for $q$ odd and $(q+6)/2$ irreps for $q$ even, as expected. Aside from the det Wilson line, none of these Wilson lines are topological or endable.

\paragraph{Higgsing to $U(1)$:}
We may also Higgs $O(2)$ to $U(1)$ by giving a vev to an adjoint-valued scalar field. As in \eqref{O2Wilsonsplitting}, the $O(2)$ Wilson line of charge $q$ descends to a pair of $U(1)$ Wilson lines of charge $q$ and $-q$. The spectrum of the IR $U(1)$ gauge theory is complete if and only if the spectrum of the UV $O(2)$ gauge theory is complete: namely, if there exists matter in the $q=1$ representation of $O(2)$. In four dimensions, an analogous story holds for the 't Hooft line operators.

In the absence of matter in a charge $q$ representation of $O(2)$, the non-invertible, topological Gukov-Witten operator $T^{O(2)}(\theta)$ for $\theta \in (0, \pi)$ descends to a sum of invertible, topological Gukov-Witten operators $U^{U(1)}(\theta) +U^{U(1)}(-\theta) $ \eqref{O2GWsplitting}. For $\theta = \pi$, the invertible operator $T^{O(2)}(\pi)$ descends simply to $U^{U(1)}(\pi)$. These operators cease to be topological in the presence of charged matter, and the spectrum is complete precisely when no topological Gukov-Witten operators remain.

The non-endable Gukov-Witten operator of $O(2)$ gauge theory is no longer a genuine operator of the $U(1)$ gauge theory: instead, it represents the boundary of a codimension-1 surface operator. The det Wilson line, which was topological in $O(2)$ and linked with the non-endable Gukov-Witten operator in question, is trivial in the resulting $U(1)$ gauge theory. We see once again that topological operators in the UV theory descend to topological operators in the Higgsed theory, though they may be trivial in the latter theory.

\subsection{The Coulomb Branch of a General Gauge Theory}\label{ssec:Coulomb}

 We now consider the theory on the Coulomb branch of a general gauge group $G$, obtained by giving a vev to a scalar field $\Phi$ in the adjoint representation. 

By the structure theorem in Section \ref{ssec:Structure}, we may write $G$ as
\begin{equation}
G = \frac{G_0 \rtimes R}{P}\,,
\end{equation}
where $G_0$ is connected, $R$ is a finite group whose elements act on $G_0$ either trivially or via an outer automorphism of $G_0$, and $P$ is a finite, common subgroup of $Z(G_0)$ and $R$. For a generic vev of $\Phi$, this group will be broken to
\begin{equation}
H = \frac{U(1)^r \times K}{P}\,,
\end{equation}
where $r$ is the rank of $G_0$ and $K$ is the subgroup of $R$ that acts trivially on $G_0$. 

In pure $H$ gauge theory, every Gukov-Witten operator is topological. Such an operator is labeled by an element $g$ of $U(1)^r$ and a conjugacy class $[k]$ of $K$, modulo the subgroup $P$. After Higgsing from $G$, some of these Gukov-Witten operators are no longer topological, as some Wilson lines are endable. In particular (assuming that the only matter of the $G$ gauge theory is the adjoint-valued scalar $\Phi$ that acquires a vev), Wilson lines labeled by charges in the root lattice of $G_0$ are endable. Non-endable Wilson lines are therefore labeled by charges in the $U(1)^r$ charge lattice modulo the root lattice of $G_0$. The topological Gukov-Witten operators are the ones that link trivially with the endable Wilson lines, and they are associated with elements $(z, k)$, where $z \in Z(G_0)$ and $k \in K$.

At the origin of the Coulomb branch, where the full $G$ gauge symmetry is restored, Gukov-Witten operators are given by gauge-invariant sums of Gukov-Witten operators of the theory on the Coulomb branch. By continuity, the gauge-invariant sums of topological operators described above will remain topological at the origin of the Coulomb branch, so the Gukov-Witten operators associated with elements $(z, k)$, where $z \in Z(G_0)$ and $k \in K$ will be topological in the UV theory. As we argued in Section \ref{sec:generalTopGW}, these are precisely the Gukov-Witten operators that link trivially with the Wilson line in the adjoint representation. We see that these operators are indeed topological in $G$ gauge theory with an adjoint-valued scalar field, and thus they are topological in pure $G$ gauge theory as well.

\subsection{General Story for Higgsed Gauge Theories}

After discussing these particular examples, we provide the general picture. Consider a theory with compact gauge group $G$, which is Higgsed down to $H$ by a vev $\Phi$ in an arbitrary representation $\rho$. We will consider three kinds of operators: Wilson lines, Gukov-Witten operators, and 't Hooft operators. 

\subsubsection{Wilson Lines and Gukov-Witten Operators}

To begin, we note that the spectrum of the IR theory after Higgsing will be complete if and only if the spectrum of the UV theory is complete. First, assume that the UV theory is complete, so there exists matter in every representation of $G$. Under Higgsing, the representation $\rho$ of $G$ decomposes into a direct sum of Wilson lines of representations of $H$, as dictated by group theory ``branching rules''. Any representation of the IR gauge group $H$ necessarily shows up in the branching of some representation of the UV gauge group $G$ \cite{brocker2003representations}. This means that there will exist matter in every representation of $H$, and the IR gauge theory is also complete.

Conversely, suppose that the UV gauge theory is incomplete, so that at least one Wilson line $W_\nu$ is not endable. In the Higgsed theory, the Wilson line $W_\nu$ is identified, by inserting vevs of $\Phi$ or $\Phi^\dagger$, with Wilson lines in representations contained in tensor products of $\nu$ with tensor  powers of $\rho$ and ${\bar \rho}$. We claim that $W_\nu$ is never identified with an endable operator $W_\mu$. Consider the inner products of characters:
\begin{equation}
\langle \chi_\mu, \chi_{\nu \otimes \rho^n \otimes {\bar \rho}^m}\rangle = \langle \chi_\mu, \chi_\nu \chi_\rho^n \overline{\chi_\rho^m} \rangle = \langle \chi_\mu \chi_\rho^m \overline{\chi_\rho^n} , \chi_\nu\rangle = \langle \chi_{\mu \otimes \rho^m \otimes {\bar \rho}^n}, \chi_\nu \rangle. 
  \label{eq:characteridentities}
\end{equation}
If $\mu$ is contained in the decomposition of $\nu \otimes \rho^n \otimes {\bar \rho}^m$ into irreps, then $\langle \chi_\mu, \chi_{\nu \otimes \rho^n \otimes {\bar \rho}^m}\rangle \neq 0$, and we conclude from Eq.~\ref{eq:characteridentities} that $\nu$ is likewise contained in the decomposition of $\mu \otimes \rho^m \otimes {\bar \rho}^n$ into irreps. $W_\rho$ and $W_{\bar\rho}$ are endable in the UV theory, due to the existence of $\Phi$. Hence, if $W_\mu$ is endable, so is $W_\nu$, in contradiction with our initial assumption. Thus, the non-endable line $W_\nu$ descends to a nontrivial, non-endable line operator in the IR theory, which is incomplete as well.

We have argued in Section \ref{sec:General} that electric completeness is equivalent to the absence of topological Gukov-Witten operators. Since electric completeness is preserved under Higgsing, we learn that the existence of nontrivial topological operators is similarly preserved under Higgsing. 

The analogous statement does not hold for the topological Wilson lines of a theory. A topological Wilson in the UV theory will remain topological in the IR, but it may become trivial at the IR fixed point (as in the $q=2$ line of a $\mathbb{Z}_4$ gauge theory upon Higgsing to $\mathbb{Z}_2$, or the det line of $O(2)$ upon Higgsing to $U(1)$). When a topological line becomes trivial, the non-endable Gukov-Witten operators that link with it will cease to be genuine operators of the theory, and instead they will represent the boundaries of codimension-1 operators of the theory. However, a non-topological line operator of the UV theory will generically remain non-topological unless it becomes trivial. It may also happen that a non-topological line in the UV becomes (potentially) topological in the IR, such as a Wilson line of charge $q\not\equiv 0\,\text{mod}\, N$ in a $U(1)$ theory that is Higgsed to $\mathbb{Z}_N$. Such a line is never exactly topological, since at short enough length scales it behaves as the original line in the UV theory, and consequently the corresponding Gukov-Witten operators of the IR theory must be endable. We discuss how this happens below.

\subsubsection{'t Hooft Operators and ANO vortices}

Having dealt with Wilson lines and Gukov-Witten operators in the identity component, we now turn to 't Hooft operators. For simplicity, we work in four dimensions. 
 These can be discussed efficiently in terms of the long exact sequence in homotopy associated to the fibration:
 \begin{equation}
\begin{tikzcd}H\arrow{r}&G\arrow{r}&G/H,\end{tikzcd}\end{equation}
 which is
 
 \begin{equation}
 \begin{tikzcd} \cdots\arrow{r}&\pi_{2}(G)=0\arrow{r}&\pi_{2}(G/H)\arrow{dll} \\ \pi_{1}(H)\arrow{r} & \pi_{1}(G)\arrow{r}& \pi_{1}(G/H) \arrow{dll} \\ \pi_0(H) \ar{r} & \pi_0(G) \ar{r} & \pi_0(G/H) \end{tikzcd}\label{longef}\end{equation}
$\pi_2(G)$ vanishes for any Lie group. The second entry, $\pi_2(G/H)$, classifies 't Hooft/Polyakov monopole solutions. The fact that it injects into $\pi_1(H)$ (which classifies the 't Hooft operators of $H$, up to conjugation by $\pi_0(G)$) means that 't Hooft operators of $H$ in the image of the map are endable. The map from $S^1$ to $H$ becomes contractible when embedded in $G$, so one can deform the Higgs profile smoothly, similarly to the construction in Figure \ref{f3}. The resulting configuration is a pointlike operator from the point of view of the IR theory, and it creates an 't Hooft/Polyakov monopole. 

What about  't Hooft operators of $H$ that are not in the image of the map into $\pi_1(H)$? Exactness implies that they embed into $\pi_1(G)$---they descend from 't Hooft operators of the UV theory, in the usual fashion. These operators will be non-endable if and only if their associated UV line is non-endable. However, not every element of $\pi_1(G)$ is in the image of this map. Exactness tells us that elements of $\pi_1(G)$ that do not descend to 't Hooft operators of $H$ must map to classes in $\pi_1(G/H)$.  The field $\Phi$ winds around such an 't Hooft operator,  and there is an ANO vortex at the location where it vanishes. Indeed, $\pi_1(G/H)$ classifies (the nonabelian generalization of) ANO vortices \cite{Abrikosov:1956sx, Nielsen:1973cs,Alford:1992yx}.

This shows that the spectrum of non-endable 't Hooft operators matches in the UV and IR. UV 't Hooft operators either become attached to ANO vortices in the IR theory, or they descend to genuine 't Hooft operators. Additional lines may exist in the IR, but they can end on operators that create 't Hooft/Polyakov monopoles. 

If we continue to follow the sequence in \eqref{longef}, we learn about additional operators. Classes that are not in the image of the map from $\pi_1(G)$ to $\pi_1(G/H)$ will correspond to ANO vortices that cannot end on monopoles; they are stable, solitonic vortices of the IR theory. If $H$ is not connected, it will have codimension-2 operators labeled by an element of $\pi_0(H)$: these are Gukov-Witten operators associated to conjugacy classes not in the identity component of $H$. Gukov-Witten operators associated with classes in the image of the map from $\pi_0(G)$ into $\pi_0(H)$ can end on operators that create ANO vortices, and so become endable. These link with the new potentially topological Wilson lines in the IR theory alluded to in the previous Section. The rest of the Gukov-Witten operators associated with $\pi_0(H)$ descend from classes in $\pi_0(G)$, which label the corresponding Gukov-Witten operators in the UV theory. In this way, all Gukov-Witten operators in the IR descend from those of the UV, although some become endable. However, not every UV Gukov-Witten operator desdends to a Gukov-Witten operator of $H$; those in the image of the map into $\pi_0(G/H)$ force the Higgs field to jump around them. In the deep IR description, where the Higgs field is integrated out, this means that they are attached to codimension-1 operators, and thus become non-genuine  codimension-2 operators. The fact that there are dynamical domain walls for the Higgs field means that the corresponding codimension-1 operators are actually endable in the full theory. 

\subsubsection{Embedding in $SU(n)$}

To conclude this section, we will now show that any $H$ gauge theory can be obtained from Higgsing of an $SU(n)$ gauge theory, for $n$ sufficiently large. Since $SU(n)$ is simply connected, $\pi_1(SU(n))$ and $\pi_0(SU(n))$ are trivial, and therefore $SU(n)$ gauge theories have a complete spectrum of magnetic monopoles and twist vortices. Given an arbitrary compact Lie group $H$, we may therefore obtain $H$ gauge theory with a complete spectrum of magnetic monopoles and twist vortices by Higgsing from $SU(n)$ for an appropriate $n$. While we do not claim that this is always the method by which quantum gravity produces a complete spectrum of magnetic objects, it provides a proof of principle that it is always possible to add this collection of objects even at the level of a Lagrangian field theory.

In mathematical language, the unbroken subgroup $H$ of a symmetry group $G$ when a field $\Phi$ in representation $\mathbf{R}$ gets a vev $\Phi_0$ is called the ``stabilizer subgroup'' of $\Phi_0\in \mathbf{R}$. We now prove the following:

\begin{thm} Every compact Lie group $H$ arises as the stabilizer subgroup of some representation and vev of $SU(n)$, for $n$ sufficiently large. \end{thm}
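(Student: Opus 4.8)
The plan is to proceed in two stages: first embed $H$ as a closed subgroup of some $SU(n)$, and then realize any closed subgroup of a compact Lie group as the stabilizer of a single vector in a finite-dimensional representation. For the first stage, recall (as already used in the proof of Theorem~\ref{thm2}, via Theorem~A.10 of~\cite{Harlow:2018tng}) that every compact Lie group admits a faithful finite-dimensional unitary representation, so $H$ embeds as a closed subgroup $H \hookrightarrow U(m)$ for some $m$. Composing with the homomorphism $U(m) \hookrightarrow SU(m+1)$ given by $g \mapsto \mathrm{diag}(g, (\det g)^{-1})$---which is manifestly injective and lands in $SU(m+1)$, since the determinant of the image is $\det(g)\cdot(\det g)^{-1} = 1$---realizes $H$ as a closed subgroup of $SU(n)$ with $n = m+1$. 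Thus it suffices to prove the following general lemma: if $G$ is a compact Lie group and $H \le G$ is closed, then there is a finite-dimensional representation $W$ of $G$ and a vector $w \in W$ with $\mathrm{Stab}_G(w) = H$.

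To prove the lemma I would first produce a \emph{continuous} function on $G$ whose stabilizer is exactly $H$, and then cut it down to finite dimension. Fix a bi-invariant metric on $G$ and let $f(x) = \mathrm{dist}(x, H)$. This $f$ is continuous, non-negative, and vanishes precisely on $H$. Acting by right translation, $(R_g f)(x) = f(xg)$, bi-invariance gives $f(xg) = f(x)$ for all $x$ whenever $g \in H$, while conversely $R_g f = f$ forces $f(g) = f(e) = 0$, i.e.\ $g \in H$; hence the right-translation stabilizer of $f$ is exactly $H$. By the Peter--Weyl theorem the right-translation representation on $L^2(G)$ decomposes as a Hilbert-space direct sum of finite-dimensional $G$-invariant subspaces $W_1, W_2, \dots$, and we may write $f = \sum_i f_i$ with $f_i \in W_i$. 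Since orthogonal projection onto each $W_i$ is $G$-equivariant and $f$ is $H$-invariant, each $f_i$ is $H$-invariant, so $\mathrm{Stab}(f_i) \supseteq H$; and because the $W_i$ are mutually orthogonal invariant subspaces, $R_g f = f$ is equivalent to $R_g f_i = f_i$ for all $i$, whence $\bigcap_i \mathrm{Stab}(f_i) = \mathrm{Stab}(f) = H$.

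The main obstacle is passing from this infinite intersection to a \emph{finite} one, so that a single finite-dimensional $W$ suffices. I would handle this by a stabilization-of-chains argument: set $K_N = \bigcap_{i=1}^N \mathrm{Stab}(f_i)$, a descending chain of compact Lie subgroups with $\bigcap_N K_N = H$. The dimensions $\dim K_N$ are non-increasing and bounded below, so they eventually stabilize; once the dimension is constant, each further $K_{N+1}$ contains the identity component of $K_N$ and is therefore a union of connected components of $K_N$, so the component counts $|\pi_0(K_N)|$ are non-increasing positive integers and must also stabilize. Hence $K_N = H$ for some finite $N_0$. Taking $W = \bigoplus_{i=1}^{N_0} W_i$ with the diagonal $G$-action and $w = (f_1, \dots, f_{N_0})$ gives $\mathrm{Stab}_G(w) = K_{N_0} = H$. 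Applying this lemma to $G = SU(n)$ with $H$ embedded as in the first paragraph completes the argument: $W$ is the desired representation $\mathbf{R}$ and $w$ the desired vev $\Phi_0$. Geometrically, the construction is nothing but an equivariant embedding of the homogeneous space $SU(n)/H$ into a linear representation, with $w$ the image of the basepoint.
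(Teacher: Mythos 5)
Your proof is correct, and its first stage---embedding $H$ into $U(m)$ via a faithful finite-dimensional unitary representation and then into $SU(m+1)$ via $g \mapsto \mathrm{diag}(g, (\det g)^{-1})$---is exactly the paper's. The divergence is in the second stage: the paper disposes of it by citing Corollary 2 of the Equivariant Embedding (Mostow--Palais) Theorem from Bourbaki, which asserts precisely that any closed subgroup of a Lie group is the stabilizer of a point in some representation, whereas you prove that statement from scratch in the compact case. Your argument---take $f(x) = \mathrm{dist}(x,H)$ for a bi-invariant metric, note that its stabilizer under right translation is exactly $H$, decompose $f$ by Peter--Weyl into components $f_i$ lying in finite-dimensional invariant subspaces $W_i$, then invoke the descending chain condition for compact Lie subgroups (dimension stabilizes first, then the number of connected components) to conclude that finitely many $\mathrm{Stab}(f_i)$ already intersect to $H$---is in effect a self-contained proof of the compact case of Mostow--Palais, and each step checks out: the equivalence $R_g f = f \Leftrightarrow R_g f_i = f_i$ for all $i$ follows from $G$-invariance and mutual orthogonality of the $W_i$, and the chain argument is sound because compact Lie groups have finitely many components. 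The paper's route buys brevity and a clean reference; yours buys transparency (it shows exactly where compactness enters: Peter--Weyl plus chain stabilization) and produces the representation $W = \bigoplus_{i \le N_0} W_i$ and the vev $w = (f_1, \dots, f_{N_0})$ explicitly rather than abstractly. Two small points you could make explicit, though neither is a gap: invariance $R_g f = f$ in $L^2$ is a priori only almost-everywhere equality, which upgrades to pointwise equality because both sides are continuous; and only countably many $f_i$ are nonzero (since $f \in L^2$), so the chain $K_N$ is indexed as you assume.
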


\begin{proof}Any given compact Lie group $H$ has a faithful, unitary, finite-dimensional representation, and so embeds as a closed subgroup of $U(m)$ for some $m$ (see, e.g., \cite{Harlow:2018tng}, Theorem A.8; or Proposition 1 in Appendix I of Chapter IX of  \cite{bourbaki2008lie}). By composing with the inclusion $U(m) \hookrightarrow SU(m+1)$, we may thus realize $H$ as a closed subgroup of $SU(n)$ for $n = m+1$. Now, by Corollary 2 of the Equivariant Embedding Theorem (also known as the Mostow-Palais theorem) on page 374 of \cite{bourbaki2008lie}, any closed subgroup of a Lie group arises as the stabilizer subgroup for some element and representation. \end{proof} 

\section{Chern-Simons Terms}\label{sec:CS}

Throughout this paper, we have explored the relation between the absence of (non-invertible) global symmetries and completeness of the spectrum in the context of isolated  gauge groups. However, the story slightly changes when coupling the gauge field to some other $p$-form gauge field, e.g., via some BF coupling or Chern-Simons term. In this section, we will explain to what extent the conclusions drawn in this paper are still valid in these cases.

In short, we will see that the relation between endability of all extended operators and the absence of any topological operator still holds in the presence of Chern-Simons terms. However,  we will see that the two cases below imply some refinements of the Completeness Hypothesis. In the case of a BF coupling, the spectrum may be in some sense incomplete, even if every extended operator is endable. In the case of axion electrodynamics, we will see that one must consider all topological operators at once. The reason is the mixing between electric and magnetic symmetries induced by the higher-group structure \cite{Cordova:2018cvg, Hidaka:2020iaz, Hidaka:2020izy,Brennan:2020ehu} implied by the axion coupling. This is related to a refined notion of completeness, that includes both the types of particles as well as their worldvolume degrees of freedom.

\subsection{BF Theory}

Let us begin with BF theory in four dimensions, given by the following action,
\begin{equation}
S = \int \left( -\frac{1}{2 g^2} H_3 \wedge \star H_3 -\frac{1}{2 e^2} F_2 \wedge \star F_2+ \frac{m}{2 \pi}B_2 \wedge F_2 \right).
\end{equation}
Here, $m$ is an integer. The fate of the higher form global symmetries and their interpretation in terms of Chern-Weil symmetries was studied in detail in \cite{Heidenreich:2020pkc}. In the absence of charged states, for $|m| > 1$ the $U(1)$ electric 1-form is broken to a $\mathbb{Z}_m$ symmetry, and the same occurs for the 2-form global symmetry. The theory contains Wilson line and surface operators given by
\begin{alignat}{3}
&W_{n_1}(\gamma)&&=\exp\left(in_1\oint_\gamma A_1\right) \ ,\quad &&n_1=0,1,\dots m-1,\\
&W_{n_2}(\Sigma)&&=\exp\left(in_2\oint_\Sigma B_2\right) \ ,\quad &&n_2=0,1,\dots m-1.
\end{alignat}
The $A_1$ Wilson line is endable whenever $n_1$ is a multiple of $m$, since in that case it can end on an 't Hooft local operator of $B_2$ (namely, $\mathrm{e}^{i\phi}$, where $\phi$ is the dual axion). Similarly the $B_2$ Wilson surface of charge $n_2$ a mutiple of $m$ can end on an 't Hooft line of $A_1$. They also serve as symmetry operators of the electric symmetries, since
\begin{equation}
\begin{alignedat}{5}
&U_{\kappa_1}&&=\mathrm{e}^{i\kappa_1 \int \star H}&&=\mathrm{e}^{i\kappa_1 \int \rmd_A\phi}&&=\mathrm{e}^{i\kappa_1 \int A}&&=W_{n_1=\kappa_1},\\
&U_{\kappa_2}&&=\mathrm{e}^{i\kappa_2 \int \star F}&&=\mathrm{e}^{i\kappa_2 \int \rmd_B\tilde A}&&=\mathrm{e}^{i\kappa_2 \int B}&&=W_{n=\kappa}.
\end{alignedat}
\end{equation}
where we have used the equations of motion, $\rmd \star F=H=\star \rmd_A\phi$ and $\rmd\star H=F=\star \rmd_B \tilde A$ with $\tilde A$ being the dual magnetic gauge field.
In the deep IR, this BF theory is the $\mathbb{Z}_m$ gauge theory studied in Section~\ref{sec:ZN}, and indeed, unless the spectrum is complete and all Wilson lines and surfaces can end, there are topological symmetry operators.

By contrast, the magnetic side of the story is a bit more puzzling, especially when $|m|=1$ so there is no discrete global symmetry left. The 1-form magnetic global symmetry and the 0-form global symmetry are gauged, so it seems we get an incomplete magnetic spectrum (there are no monopoles nor instantons) even if there are no global symmetries. However, there are no genuine 't Hooft operators either (they are not gauge invariant), so it holds trivially that the absence of global magnetic symmetries is equivalent to the absence of non-endable 't Hooft operators. Whether UV-complete quantum gravity theories must satisfy a stronger, less trivial notion of completeness here is an interesting question, and we direct the reader to Section 3.2 of reference \cite{Heidenreich:2020pkc} for further discussion.

\subsection{Axion Electrodynamics}

Let us now consider $U(1)$ gauge theory in the presence of a $\theta$ angle:
\begin{equation}
S = \int \left( -\frac{1}{2 g^2} F \wedge \star F + \frac{\theta}{8 \pi^2}F \wedge F \right).
\end{equation}
Here, the transformation $\theta \rightarrow \theta+ 2\pi$ does not affect correlation functions of local operators at separated points. However, it does affect correlation functions of surface and line operators: in particular, under $\theta \rightarrow \theta + 2\pi$ a purely magnetic 't Hooft line of charge $m$ picks up electric charge $m$ via the Witten effect~\cite{Witten:1979ey}, and more generally the dyonic lines transform as $L_{n, m} \rightarrow L_{n+m k,m}$ under $\theta \rightarrow \theta + 2 \pi k$. Clearly, such a transformation will affect the spectrum of topological surfaces: in a theory with a magnetic monopole of charge $1$ but no electrically-charged matter, the Gukov-Witten operator $U_{g= \mathrm{e}^{i \alpha}}$ is topological for all $\alpha \in [0, 2\pi)$, and there is a $U(1)$ electric 1-form global symmetry. On the other hand, the magnetic 1-form symmetry is completely broken. After taking $\theta \rightarrow \theta+2\pi$, however, the monopole will become a dyon of charge $(1,1)$. Shifting the $\theta$ angle has produced a new theory, dual to the first, in which the magnetic 1-form symmetry is unbroken, but the dyonic 1-form symmetry generated by the operators $U_{g,g}$ is now broken. However, the anti-diagonal combination is preserved, and the surface operators $U_{g, g^{-1}}$ remain topological. More generally, although shifts of the $\theta$ angle may affect which line operators are endable and which surface operators are topological, they do not alter the 1-form global symmetry of the theory.

We now make $\theta$ dynamical by promoting it to an axionic field,
\begin{equation}
S = \int \left( -\frac{1}{2 g^2} F \wedge \star F -\frac12 f^2\, \rmd\phi\wedge \star \rmd\phi+ \frac{\phi}{8 \pi^2}F \wedge F \right),\label{act6}
\end{equation}
where now $\phi \to \phi + 2 \pi$ is a gauge symmetry.\footnote{With these conventions, the action \eqref{act6} is invariant under the axion shift symmetry only for manifolds where $\frac{1}{8\pi^2} \int F^2\in\mathbb{Z}$. This is the case for every spin 4-manifold.} 
Notice that the electric 1-form global symmetry is explicitly broken by the axion coupling, since
\beq
\rmd \star \rmd F_2 \propto \rmd\phi\wedge F_2\neq 0.
\eeq
Since the electric 1-form symmetry is broken, one might expect that the Wilson line operator of $A_1$ would be endable in axion electrodynamics. This is not the case, and indeed the Wilson line remains non-endable. To determine endability of some extended operator, it is enough to reason locally near a supposed endpoint of the operator, so we may assume we are working on $\mathbb{R}^4$. By the equation of motion of the gauge field, we can see that
\beq
{\rm d} \left(\frac{1}{g^2}\star F-\frac{1}{4\pi^2}\phi F\right) = 0.
\eeq
The term in parentheses is not gauge invariant under $\phi \to \phi + 2 \pi$, and so does not define a consistent current for a global 1-form symmetry. Nevertheless, the integral
\beq
\int_{S^2 \subset \mathbb{R}^4} \left(\frac{1}{g^2}\star F-\frac{1}{4\pi^2}\phi F\right),
\label{not_really_an_operator}
\eeq
is actually gauge invariant, where the integral is over a topologically trivial sphere $S^2 \subset \mathbb{R}^4$ linking our Wilson line. To see this, note that the flux of $F$ through a topologically trivial sphere vanishes. Thus, when $\phi \to \phi + 2 \pi$, the integral \eqref{not_really_an_operator} shifts by $\int_{S^2} \frac{1}{4 \pi} F$, which vanishes because this $S^2$ is topologically trivial.\footnote{Note that this argument breaks down if we were to either insert an 't Hooft line parallel to the Wilson line or add dynamical monopoles, since we would no longer be able to conclude that $\int_{S^2} F$ vanishes.}  Multiplying \eqref{not_really_an_operator} by $i$ and exponentiating, we obtain a topological surface operator that can only be defined on a topologically trivial $S^2$, but which links nontrivially with the Wilson line operators in $\mathbb{R}^4$. Though this is not a proper surface operator of the theory, it is enough to perform the argument illustrated in Figure \ref{fig:toplinking} and conclude that the Wilson lines cannot end on local operators in axion electrodynamics.\footnote{By contrast, the Wilson lines can end on an operator inserted at a conical singularity in spacetime, such as the cone on $S^1 \times S^2$. These operators are not local operators in QFT, but they do represent local operators in quantum gravity, and produce gravitational solitons carrying electric charge \cite{McNamara:2021cuo}.}

Hence, the Wilson lines remain non-endable, even though the axion coupling breaks the electric 1-form symmetry. This seems to contradict the statement that the absence of topological Gukov-Witten operators is in 1-to-1 correspondence with endable Wilson lines (i.e., a complete spectrum). But there is a milder statement that still holds true. Even if the electric Gukov-Witten operators are not topological, the theory contains other topological operators,
\beq
U_{\eta}(\mathcal{M}^2)=\exp\left(\frac{i\eta}{2\pi}\int_{\mathcal{M}^2}F\right)
\ ,\quad
U_{\alpha}(\mathcal{M}^1)=\exp\left(\frac{i\alpha}{2\pi}\int_{\mathcal{M}^1}\rmd\phi\right).
\label{other_axion_EM_symmetries}
\eeq
associated to the magnetic 1-form global symmetry and the 2-form global winding symmetry of $\phi$. Hence, we have not found a theory with non-endable Wilson lines but no topological operators at all.

The operators \eqref{other_axion_EM_symmetries} stop being topological if we introduce dynamical monopoles and strings. In fact, by the Witten effect, the presence of monopoles in axion electrodynamics also induces electric charge, so all Wilson lines and 't Hooft lines become endable. Similarly, as described in \cite{Callan:1984sa}, axion strings must also carry electric charges on their worldsheet, and a small loop of axion string with nonzero electric charge behaves as a dynamical electric particle as well. Hence, in the presence of monopoles and strings, all extended operators can end, in accordance with the fact that there are no topological operators left.

It is interesting to note that the naive 't Hooft line of $A_1$ and the winding surface of $\phi$ are non-genuine, since they are not invariant under gauge transformations of $\phi$ and $A_1$, respectively. Indeed, under $\phi \to \phi + 2 \pi$, the 't Hooft line transforms into a dyonic line, and so must be attached to a surface operator encoding this anomaly. Similarly, the winding surface of $\phi$ must be attached to a 3-volume operator defined by the exponentiated integral of the Chern-Simons 3-form of $A_1$. That these operators are non-genuine is a reflection of the anomaly inflow from the bulk to the worldvolume of probe monopoles or axion strings.

In order to define genuine operators representing the insertions of heavy probe monopoles or strings, we must include the worldvolume degrees of freedom that cancel the anomaly. We may do this by inserting the anomalous partition function of worldvolume fields alongside the naive 't Hooft line or winding surface operators,
\beq
L^\text{genuine}(\mathcal{M}^1) = L^{\text{naive}}(\mathcal{M}^1) Z_\text{w.v.}(\mathcal{M}^1), \quad S^\text{genuine}(\mathcal{M}^2) = S^{\text{naive}}(\mathcal{M}^2) Z_\text{w.v.}'(\mathcal{M}^2),
\eeq
For the 't Hooft line, we may take $Z_\text{w.v.}$ to be the partition function of a particle on a circle discussed in \cite{Cordova:2019jnf, Jackiw:1975ep, Heidenreich:2020pkc}. For the winding surface of the axion, we may take $Z_\text{w.v.}'$ to be a chiral boson of unit charge \cite{Callan:1984sa}, which sits at the boundary of a Chern insulator, described by the Chern-Simons partition function (see, e.g., \cite{Witten:2015aoa} for a review).

While the insertion of the partition function of an entire QFT along a submanifold may seem a bit exotic, it nevertheless defines a perfectly good extended operator in the bulk theory. Indeed, the picture we advocate here is quite general: just like their dynamical counterparts, probe particles and branes in general carry worldvolume degrees of freedom, which are described by a worldvolume QFT. Since the probes themselves have internal dynamics, the operators representing their insertion must include these degrees of freedom when they interact with the bulk. In cases like the above, where there is nontrivial anomaly inflow, the presence of these worldvolume degrees of freedom is essential in order to cancel the anomaly. Relatedly, while the probe brane inserted by the operator should be regarded as an infinitely massive object, it may not be possible to make the worldvolume degrees of freedom on it arbitrarily heavy, in particular if they are chiral and participate in an anomaly inflow mechanism. 

We can extract several lessons from this example. First, there is no longer a one-to-one correspondence between the absence of topological operators and the completeness of the spectrum. However, a more general correlation persists, in which the endability of every extended operator is still equivalent to the absence of any topological operator. This mixing between the symmetries reflects a higher-group structure \cite{Cordova:2018cvg, Hidaka:2020iaz, Hidaka:2020izy,Brennan:2020ehu}.

Secondly, in the presence of Chern-Simons terms, we do not need to introduce all charged states to make sure all extended objects can end, but rather only the subset that allows us to populate the sites of the charge lattice. For instance, by the Witten effect in axion electrodynamics, the presence of monopoles already implies the presence of states with electric charge. Moreover, the fact that the operators that insert probe monopoles or axion strings include the partition function of worldvolume degrees of freedom tells us that the proper notion of completeness in the presence of a higher-group structure includes the presence of charged states with appropriate worldvolume excitations.

\section{Discussion}\label{sec:CONC}

\subsection{Implications for the Swampland Program}

In this paper, we set out to understand the relationship between two basic Swampland conjectures, namely the absence of global symmetries and the completeness of the spectrum in consistent theories of quantum gravity. Though these are conjectures about the nature of quantum gravity, their relationship can be studied fully within the context of quantum field theory, and this is the approach we have taken. We have shown that while the absence of invertible symmetries is insufficient to imply completeness, the absence of more general, non-invertible symmetries is in fact sufficient, at least in the examples we have considered. We strongly suspect that this pattern is completely general, and that the endability of every extended operator follows from the absence of any topological operator in a general QFT (in three dimensional TQFTs, this follows from the modularity of the tensor category of line operators \cite{Rudelius:2020orz}).

A natural question, then, is whether these non-invertible topological operators are forbidden in consistent theories of quantum gravity. While the absence of invertible global symmetries has been established quite well \cite{banks:1988yz, kallosh:1995hi, Banks:2010zn, Harlow:2018tng, Harlow:2020bee, Chen:2020ojn,Belin:2020jxr, Yonekura:2020ino}, the standard arguments involving black hole physics or holography do not directly apply to rule out non-invertible topological operators \cite{Rudelius:2020orz}. Of course, if we assume the completeness hypothesis, then by the arguments outlined in this paper we would expect the absence of any topological operator, but this is begging the question, as the original goal was to argue for the completeness hypothesis in the first place. Thus, we would like an argument against general topological operators in quantum gravity that does not assume completeness.

In fact, the absence of topological operators is a specific case of the broader statement that quantum gravity should have no nontrivial bulk operators whatsoever. This is a form of background independence, since a nontrivial bulk operator represents a way to couple the bulk gravity theory to a non-dynamical, background probe. In this form, the absence of topological operators follows from the Baby Universe Hypothesis \cite{McNamara:2020uza}, since the insertion of any such operator would necessarily create a nontrivial baby universe state on the boundary of a tubular neighborhood of its support. A similar argument from the Baby Universe Hypothesis shows directly that any operator must be endable, as every bulk operator must be nonperturbatively gauge equivalent to the identity operator, which is clearly endable. These arguments may be less than convincing, as the Baby Universe Hypothesis is less well established than the Completeness Hypothesis itself, but it shows that the emerging picture in the Swampland program is internally consistent.

A further consideration is to ask whether our analysis, taking place purely within quantum field theory, may be altered by making gravity dynamical, even without assuming any Swampland conditions. In fact, work in progress~\cite{McNamara:2021cuo} suggests that this is indeed the case, and that there are new charged objects, given by gravitational solitons, whose charges must be taken into account when addressing the question of completeness. It would be very interesting to compare these two approaches, and understand the relationship between non-invertible symmetries in quantum field theory and the possible charges of gravitational solitons in general.

\subsection{Existence of Twist Strings}
 \label{sec:cosmicstrings}

A wide variety of proposed extensions of the Standard Model postulate that the laws of physics in our universe are invariant under local, discrete symmetries. For example, many models of dark matter postulate that it is charged under a discrete (finite) symmetry, which explains its stability. In quantum gravity, we expect that such discrete symmetries are either slightly broken (and hence, that rare symmetry violating processes can occur, such as dark matter decay) or gauged. Because a finite symmetry has no massless, dynamical propagating gauge boson associated with it, there is no easy way for a low-energy observer to conclude that such a symmetry is gauged. However, it has long been known that such symmetries can be associated with twist vortices---strings, in four dimensions---with the property that a charged particle that circles around the vortex comes back to itself only up to a gauge transformation~\cite{Krauss:1988zc}. Although it is a common viewpoint that the existence of such objects is a defining feature of a gauged discrete symmetry~\cite{Witten:2017hdv}, one might wonder whether they necessarily exist as dynamical objects in the theory. We have argued that the absence of topological Wilson lines---which generate a (possibly non-invertible) $(d-2)$-form global symmetry---is equivalent to the existence of a complete spectrum of such twist vortices. This provides a stronger argument that these objects should exist in theories of quantum gravity (at least in $d \geq 4$ dimensions, where we expect $p$-form symmetries are forbidden for all $p$). 

This argument suggests that a wide variety of cosmic strings could exist in our universe, with potentially observable consequences. For example, the proton could be stable if there is a discrete gauged $\mathbb{Z}_3$ or $\mathbb{Z}_6$ symmetry under which the proton is charged~\cite{Ibanez:1991hv,Ibanez:1991pr,Dreiner:2005rd}. Our arguments suggest that such theories should admit cosmic strings with Aharonov-Bohm interactions with baryons. When the discrete gauge group is not a Higgsed remnant of a compact connected gauge group, these cosmic strings may be fundamental objects, rather than semiclassical strings that can be derived within an effective field theory. Similar remarks would apply to dark matter stabilized by a discrete symmetry (e.g., a $\mathbb{Z}_2$ or $\mathbb{Z}_4$ $R$-symmetry in a supersymmetric context). 

The existence of twist strings has an important consequence for the spontaneous breaking of a discrete symmetry. When a global discrete symmetry is spontaneously broken, stable domain walls are produced~\cite{Zeldovich:1974uw, Kibble:1976sj}. Stable domain walls produced after inflation are a cosmological disaster: having an equation of state $w = -2/3$, they redshift more slowly than radiation or matter, and rapidly dominate the energy density of the universe. Explicit symmetry breaking can destabilize a domain wall. Gauging can {\em also} destabilize the domain wall, because domain walls can end on twist strings. This has two consequences. The first is that twist strings are confined to the ends of domain walls, and can experience a force that attracts them toward each other to annihilate. The second is that the nucleation of a loop of twist string inside a domain wall can create a hole in the wall, which can be eaten up by the appearance of such holes, although this process is highly suppressed if the tension of the twist string is too large~\cite{Kibble:1982dd}. Thus, the physics of twist strings potentially leads to major differences in the cosmology of models with gauged discrete symmetries and those with global discrete symmetries.

Gauged discrete symmetries that have nontrivial interplay with continuous gauge symmetries are also of interest. We have discussed the case of $O(2)$ gauge theory, in which the $\mathbb{Z}_2$ charge conjugation symmetry does not commute with the $U(1)$ electromagnetic symmetry, in some detail. This leads to the fascinating and well-studied physics of Alice strings~\cite{Schwarz:1982ec,Alford:1990mk,Preskill:1990bm}, which can convert electrons into positrons. Recently, a variety of extensions of the Standard Model have been based on the idea of discrete symmetries, like $\mathbb{Z}_N$, which act to {\em permute} copies of a gauge theory. For example, the Twin Higgs model postulates a mirror copy of the Standard Model, related by a $\mathbb{Z}_2$ exchange symmetry, which ameliorates quantum corrections to the Higgs boson mass~\cite{Chacko:2005pe}. The original incarnation of this model, and much of the subsequent work, have assumed a small explicit breaking of this symmetry, but variations of the model assume an exact $\mathbb{Z}_2$ symmetry that is spontaneously broken~\cite{Geller:2014kta,Beauchesne:2015lva}. We expect that such an exact symmetry would be gauged, and accompanied by cosmic twin strings, such that circulating (for instance) a Standard Model gluon around a twin string would convert it to a twin gluon. (A similar picture should apply to the two $E_8$ factors of the $E_8 \times E_8$ heterotic string, which are exchanged by a gauged $\mathbb{Z}_2$.) Successful Twin Higgs phenomenology requires that the $\mathbb{Z}_2$ symmetry be spontaneously broken, meaning that the twin strings would arise as boundaries of twin domain walls across which the roles of the Standard Model and its twin sector are exchanged. The destabilization of twin domain walls by twin strings could play an interesting role in the cosmological history of such models.

\subsection{Concluding Remarks}

The concept of symmetry has been a dominant one in the theoretical physics of the last hundred years. Recently, ever-more general symmetries, including $p$-form global symmetries and higher-group global symmetries, have come to play an increasingly central role in both condensed matter theory and high energy theory. Although the concept of a non-invertible symmetry is not new, here we have seen that it has a major role to play through its equivalence with an incomplete spectrum of charged objects in a theory.

Conjectures about universal properties of quantum gravity have proliferated in recent years, a phenomenon that has attracted some criticism from observers. Not all of these conjectures are independent, however, and we can make progress by paring them down to a smaller set of core ideas. Here we have shown that two of the oldest such conjectures---the absence of global symmetries and the Completeness Hypothesis---are both subsumed by the absence of topological operators in quantum gravity. That, in turn, is merely one facet of a much older observation that quantum gravity does not admit true local operators at all. There is more progress to be made in taming the growth of Swampland conjectures by untangling their connections to common roots. The simple, radical ideas about quantum gravity that unify the conjectures, and the ramifications of those ideas for particle physics and cosmology, will continue to bear fruit in the years to come.

\section*{Acknowledgments}

We thank Loren Spice for useful discussions of the structure of disconnected compact Lie groups on MathOverflow \cite{378141,378220,378257}. We thank Sergio Cecotti, Meng Cheng, Daniel Harlow, Po-Shen Hsin, Shu-Heng Shao, Ryan Thorngren, Cumrun Vafa, Qing-Rui Wang, Xiao-Gang Wen, and Xueda Wen for useful discussions. We also thank Shu-Heng Shao for comments on the manuscript. MR thanks Thomas Dumitrescu for explaining to him the connection between completeness and absence of generalized global symmetries in $U(1)$ theories, some years ago.
BH is supported by National Science Foundation grants PHY-1914934 and PHY-2112800.
JM is supported by the National Science Foundation Graduate Research Fellowship Program under Grant No.~DGE1745303.
The research of MM and IV was supported by a grant from the Simons Foundation (602883, CV). 
MR is supported in part by the DOE Grant DE-SC0013607, the NASA Grant 80NSSC20K0506, and the Alfred P.~Sloan Foundation Grant No.~G-2019-12504.
The work of TR at the Institute for Advanced Study was supported by the Roger Dashen Membership and by
NSF grant PHY-1911298. The work of TR at the University of California, Berkeley, was supported by NSF grant  PHY1820912, the Simons Foundation, and the Berkeley Center for Theoretical Physics.
We acknowledge hospitality from several institutions where portions of this work were completed: the Amherst Center for Fundamental Interactions at UMass Amherst, site of the workshop ``Theoretical Tests of the Swampland''; the Aspen Center for Physics, which is supported by National Science Foundation grant PHY-1607611; the 2019 Simons Summer Workshop, at the Simons Center for Geometry and Physics at Stony Brook University; and the KITP at UC Santa Barbara, supported in part by the National Science Foundation under Grant No.~NSF PHY-1748958.

\bibliography{ref-v2}

\providecommand{\href}[2]{#2}\begingroup\raggedright\begin{thebibliography}{10}

\bibitem{Gaiotto:2014kfa}
D.~Gaiotto, A.~Kapustin, N.~Seiberg, and B.~Willett, ``{Generalized Global
  Symmetries},'' \href{http://dx.doi.org/10.1007/JHEP02(2015)172}{{\em JHEP}
  {\bfseries 02} (2015) 172},
\href{http://arxiv.org/abs/1412.5148}{{\ttfamily arXiv:1412.5148 [hep-th]}}.

\bibitem{polchinski:2003bq}
J.~Polchinski, ``{Monopoles, duality, and string theory},''
  \href{http://dx.doi.org/10.1142/S0217751X0401866X}{{\em Int.J.Mod.Phys.}
  {\bfseries A19S1} (2004) 145--156},
\href{http://arxiv.org/abs/hep-th/0304042}{{\ttfamily arXiv:hep-th/0304042
  [hep-th]}}.

\bibitem{banks:1988yz}
T.~Banks and L.~J. Dixon, ``{Constraints on String Vacua with Space-Time
  Supersymmetry},''
\href{http://dx.doi.org/10.1016/0550-3213(88)90523-8}{{\em Nucl. Phys.}
  {\bfseries B307} (1988) 93--108}.

\bibitem{kallosh:1995hi}
R.~Kallosh, A.~D. Linde, D.~A. Linde, and L.~Susskind, ``{Gravity and global
  symmetries},'' \href{http://dx.doi.org/10.1103/PhysRevD.52.912}{{\em
  Phys.Rev.} {\bfseries D52} (1995) 912--935},
\href{http://arxiv.org/abs/hep-th/9502069}{{\ttfamily arXiv:hep-th/9502069
  [hep-th]}}.

\bibitem{Banks:2010zn}
T.~Banks and N.~Seiberg, ``{Symmetries and Strings in Field Theory and
  Gravity},'' \href{http://dx.doi.org/10.1103/PhysRevD.83.084019}{{\em Phys.
  Rev.} {\bfseries D83} (2011) 084019},
\href{http://arxiv.org/abs/1011.5120}{{\ttfamily arXiv:1011.5120 [hep-th]}}.

\bibitem{Harlow:2018tng}
D.~Harlow and H.~Ooguri, ``{Symmetries in quantum field theory and quantum
  gravity},'' \href{http://dx.doi.org/10.1007/s00220-021-04040-y}{{\em Commun.
  Math. Phys.} {\bfseries 383} no.~3, (2021) 1669--1804},
  \href{http://arxiv.org/abs/1810.05338}{{\ttfamily arXiv:1810.05338
  [hep-th]}}.

\bibitem{Harlow:2020bee}
D.~Harlow and E.~Shaghoulian, ``{Global symmetry, Euclidean gravity, and the
  black hole information problem},''
  \href{http://dx.doi.org/10.1007/JHEP04(2021)175}{{\em JHEP} {\bfseries 04}
  (2021) 175}, \href{http://arxiv.org/abs/2010.10539}{{\ttfamily
  arXiv:2010.10539 [hep-th]}}.

\bibitem{Chen:2020ojn}
Y.~Chen and H.~W. Lin, ``{Signatures of global symmetry violation in relative
  entropies and replica wormholes},''
  \href{http://dx.doi.org/10.1007/JHEP03(2021)040}{{\em JHEP} {\bfseries 03}
  (2021) 040}, \href{http://arxiv.org/abs/2011.06005}{{\ttfamily
  arXiv:2011.06005 [hep-th]}}.

\bibitem{Belin:2020jxr}
A.~Belin, J.~De~Boer, P.~Nayak, and J.~Sonner, ``{Charged Eigenstate
  Thermalization, Euclidean Wormholes and Global Symmetries in Quantum
  Gravity},'' \href{http://arxiv.org/abs/2012.07875}{{\ttfamily
  arXiv:2012.07875 [hep-th]}}.

\bibitem{Yonekura:2020ino}
K.~Yonekura, ``{Topological violation of global symmetries in quantum
  gravity},'' \href{http://arxiv.org/abs/2011.11868}{{\ttfamily
  arXiv:2011.11868 [hep-th]}}.

\bibitem{Casini:2020rgj}
H.~Casini, M.~Huerta, J.~M. Magan, and D.~Pontello, ``{Entropic order
  parameters for the phases of QFT},''
  \href{http://dx.doi.org/10.1007/JHEP04(2021)277}{{\em JHEP} {\bfseries 04}
  (2021) 277}, \href{http://arxiv.org/abs/2008.11748}{{\ttfamily
  arXiv:2008.11748 [hep-th]}}.

\bibitem{Rudelius:2020orz}
T.~Rudelius and S.-H. Shao, ``{Topological Operators and Completeness of
  Spectrum in Discrete Gauge Theories},''
  \href{http://dx.doi.org/10.1007/JHEP12(2020)172}{{\em JHEP} {\bfseries 12}
  (2020) 172}, \href{http://arxiv.org/abs/2006.10052}{{\ttfamily
  arXiv:2006.10052 [hep-th]}}.

\bibitem{Gukov_2006}
S.~Gukov and E.~Witten, ``Gauge theory, ramification, and the geometric
  langlands program,''
  \href{http://dx.doi.org/10.4310/cdm.2006.v2006.n1.a2}{{\em Current
  Developments in Mathematics} {\bfseries 2006} no.~1, (2006) 35--180}.

\bibitem{Gukov_2010}
S.~Gukov and E.~Witten, ``Rigid surface operators,''
  \href{http://dx.doi.org/10.4310/atmp.2010.v14.n1.a3}{{\em Advances in
  Theoretical and Mathematical Physics} {\bfseries 14} no.~1, (2010) 87--178}.

\bibitem{Cordova:2018cvg}
C.~C\'ordova, T.~T. Dumitrescu, and K.~Intriligator, ``{Exploring 2-Group
  Global Symmetries},'' \href{http://dx.doi.org/10.1007/JHEP02(2019)184}{{\em
  JHEP} {\bfseries 02} (2019) 184},
\href{http://arxiv.org/abs/1802.04790}{{\ttfamily arXiv:1802.04790 [hep-th]}}.

\bibitem{Hidaka:2020iaz}
Y.~Hidaka, M.~Nitta, and R.~Yokokura, ``{Higher-form symmetries and 3-group in
  axion electrodynamics},''
  \href{http://dx.doi.org/10.1016/j.physletb.2020.135672}{{\em Phys. Lett. B}
  {\bfseries 808} (2020) 135672},
  \href{http://arxiv.org/abs/2006.12532}{{\ttfamily arXiv:2006.12532
  [hep-th]}}.

\bibitem{Hidaka:2020izy}
Y.~Hidaka, M.~Nitta, and R.~Yokokura, ``{Global 3-group symmetry and 't Hooft
  anomalies in axion electrodynamics},''
  \href{http://dx.doi.org/10.1007/JHEP01(2021)173}{{\em JHEP} {\bfseries 01}
  (2021) 173}, \href{http://arxiv.org/abs/2009.14368}{{\ttfamily
  arXiv:2009.14368 [hep-th]}}.

\bibitem{Verlinde:1988sn}
E.~P. Verlinde, ``{Fusion Rules and Modular Transformations in 2D Conformal
  Field Theory},'' \href{http://dx.doi.org/10.1016/0550-3213(88)90603-7}{{\em
  Nucl. Phys. B} {\bfseries 300} (1988) 360--376}.

\bibitem{Moore:1988qv}
G.~W. Moore and N.~Seiberg, ``{Classical and Quantum Conformal Field Theory},''
  \href{http://dx.doi.org/10.1007/BF01238857}{{\em Commun. Math. Phys.}
  {\bfseries 123} (1989) 177}.

\bibitem{Frohlich:2006ch}
J.~Frohlich, J.~Fuchs, I.~Runkel, and C.~Schweigert, ``{Duality and defects in
  rational conformal field theory},''
  \href{http://dx.doi.org/10.1016/j.nuclphysb.2006.11.017}{{\em Nucl. Phys. B}
  {\bfseries 763} (2007) 354--430},
  \href{http://arxiv.org/abs/hep-th/0607247}{{\ttfamily arXiv:hep-th/0607247}}.

\bibitem{Davydov:2010rm}
A.~Davydov, L.~Kong, and I.~Runkel, ``{Invertible Defects and Isomorphisms of
  Rational CFTs},'' \href{http://dx.doi.org/10.4310/ATMP.2011.v15.n1.a2}{{\em
  Adv. Theor. Math. Phys.} {\bfseries 15} no.~1, (2011) 43--69},
  \href{http://arxiv.org/abs/1004.4725}{{\ttfamily arXiv:1004.4725 [hep-th]}}.

\bibitem{Bhardwaj:2017xup}
L.~Bhardwaj and Y.~Tachikawa, ``{On finite symmetries and their gauging in two
  dimensions},'' \href{http://dx.doi.org/10.1007/JHEP03(2018)189}{{\em JHEP}
  {\bfseries 03} (2018) 189}, \href{http://arxiv.org/abs/1704.02330}{{\ttfamily
  arXiv:1704.02330 [hep-th]}}.

\bibitem{Chang:2018iay}
C.-M. Chang, Y.-H. Lin, S.-H. Shao, Y.~Wang, and X.~Yin, ``{Topological Defect
  Lines and Renormalization Group Flows in Two Dimensions},''
  \href{http://dx.doi.org/10.1007/JHEP01(2019)026}{{\em JHEP} {\bfseries 01}
  (2019) 026}, \href{http://arxiv.org/abs/1802.04445}{{\ttfamily
  arXiv:1802.04445 [hep-th]}}.

\bibitem{Ji:2019jhk}
W.~Ji and X.-G. Wen, ``{Categorical symmetry and noninvertible anomaly in
  symmetry-breaking and topological phase transitions},''
  \href{http://dx.doi.org/10.1103/PhysRevResearch.2.033417}{{\em Phys. Rev.
  Res.} {\bfseries 2} no.~3, (2020) 033417},
  \href{http://arxiv.org/abs/1912.13492}{{\ttfamily arXiv:1912.13492
  [cond-mat.str-el]}}.

\bibitem{Kong:2020jne}
L.~Kong, T.~Lan, X.-G. Wen, Z.-H. Zhang, and H.~Zheng, ``{Classification of
  topological phases with finite internal symmetries in all dimensions},''
  \href{http://dx.doi.org/10.1007/JHEP09(2020)093}{{\em JHEP} {\bfseries 09}
  (2020) 093}, \href{http://arxiv.org/abs/2003.08898}{{\ttfamily
  arXiv:2003.08898 [math-ph]}}.

\bibitem{Kong_2020}
L.~Kong, T.~Lan, X.-G. Wen, Z.-H. Zhang, and H.~Zheng, ``Algebraic higher
  symmetry and categorical symmetry: A holographic and entanglement view of
  symmetry,'' \href{http://dx.doi.org/10.1103/physrevresearch.2.043086}{{\em
  Physical Review Research} {\bfseries 2} no.~4, (Oct, 2020) }.
  \url{http://dx.doi.org/10.1103/PhysRevResearch.2.043086}.

\bibitem{Komargodski:2020mxz}
Z.~Komargodski, K.~Ohmori, K.~Roumpedakis, and S.~Seifnashri, ``{Symmetries and
  strings of adjoint QCD$_{2}$},''
  \href{http://dx.doi.org/10.1007/JHEP03(2021)103}{{\em JHEP} {\bfseries 03}
  (2021) 103}, \href{http://arxiv.org/abs/2008.07567}{{\ttfamily
  arXiv:2008.07567 [hep-th]}}.

\bibitem{Nayak:2008zza}
C.~Nayak, S.~H. Simon, A.~Stern, M.~Freedman, and S.~Das~Sarma, ``{Non-Abelian
  anyons and topological quantum computation},''
  \href{http://dx.doi.org/10.1103/RevModPhys.80.1083}{{\em Rev. Mod. Phys.}
  {\bfseries 80} (2008) 1083--1159},
  \href{http://arxiv.org/abs/0707.1889}{{\ttfamily arXiv:0707.1889
  [cond-mat.str-el]}}.

\bibitem{Dijkgraaf:1989hb}
R.~Dijkgraaf, C.~Vafa, E.~P. Verlinde, and H.~L. Verlinde, ``{The Operator
  Algebra of Orbifold Models},''
  \href{http://dx.doi.org/10.1007/BF01238812}{{\em Commun. Math. Phys.}
  {\bfseries 123} (1989) 485}.

\bibitem{Alford:1992yx}
M.~G. Alford, K.-M. Lee, J.~March-Russell, and J.~Preskill, ``{Quantum field
  theory of nonAbelian strings and vortices},''
  \href{http://dx.doi.org/10.1016/0550-3213(92)90468-Q}{{\em Nucl. Phys. B}
  {\bfseries 384} (1992) 251--317},
  \href{http://arxiv.org/abs/hep-th/9112038}{{\ttfamily arXiv:hep-th/9112038}}.

\bibitem{AmbroseSinger}
W.~Ambrose and I.~M. Singer, ``A theorem on holonomy,'' {\em Transactions of
  the American Mathematical Society} {\bfseries 75} no.~3, (1953) 428--443.
  \url{http://www.jstor.org/stable/1990721}.

\bibitem{tHooft:1977nqb}
G.~'t~Hooft, ``{On the Phase Transition Towards Permanent Quark Confinement},''
  \href{http://dx.doi.org/10.1016/0550-3213(78)90153-0}{{\em Nucl. Phys. B}
  {\bfseries 138} (1978) 1--25}.

\bibitem{Kapustin:2005py}
A.~Kapustin, ``{Wilson-'t Hooft operators in four-dimensional gauge theories
  and S-duality},'' \href{http://dx.doi.org/10.1103/PhysRevD.74.025005}{{\em
  Phys. Rev. D} {\bfseries 74} (2006) 025005},
  \href{http://arxiv.org/abs/hep-th/0501015}{{\ttfamily arXiv:hep-th/0501015}}.

\bibitem{Kapustin:2006pk}
A.~Kapustin and E.~Witten, ``{Electric-Magnetic Duality And The Geometric
  Langlands Program},''
  \href{http://dx.doi.org/10.4310/CNTP.2007.v1.n1.a1}{{\em Commun. Num. Theor.
  Phys.} {\bfseries 1} (2007) 1--236},
  \href{http://arxiv.org/abs/hep-th/0604151}{{\ttfamily arXiv:hep-th/0604151}}.

\bibitem{Goddard:1976qe}
P.~Goddard, J.~Nuyts, and D.~I. Olive, ``{Gauge Theories and Magnetic
  Charge},'' \href{http://dx.doi.org/10.1016/0550-3213(77)90221-8}{{\em Nucl.
  Phys. B} {\bfseries 125} (1977) 1--28}.

\bibitem{Aharony:2013hda}
O.~Aharony, N.~Seiberg, and Y.~Tachikawa, ``{Reading between the lines of
  four-dimensional gauge theories},''
  \href{http://dx.doi.org/10.1007/JHEP08(2013)115}{{\em JHEP} {\bfseries 08}
  (2013) 115}, \href{http://arxiv.org/abs/1305.0318}{{\ttfamily arXiv:1305.0318
  [hep-th]}}.

\bibitem{fulton2013representation}
W.~Fulton and J.~Harris, {\em {Representation Theory: A First Course}},
  vol.~129.
\newblock Springer Science \& Business Media, 2013.

\bibitem{Yamatsu:2015npn}
N.~Yamatsu, ``{Finite-Dimensional Lie Algebras and Their Representations for
  Unified Model Building},'' \href{http://arxiv.org/abs/1511.08771}{{\ttfamily
  arXiv:1511.08771 [hep-ph]}}.

\bibitem{Krauss:1988zc}
L.~M. Krauss and F.~Wilczek, ``{Discrete Gauge Symmetry in Continuum
  Theories},'' \href{http://dx.doi.org/10.1103/PhysRevLett.62.1221}{{\em Phys.\
  Rev.\ Lett.} {\bfseries 62} (1989) 1221}.

\bibitem{Alford:1989ch}
M.~G. Alford, J.~March-Russell, and F.~Wilczek, ``{Discrete Quantum Hair on
  Black Holes and the Nonabelian {Aharonov-Bohm} Effect},''
  \href{http://dx.doi.org/10.1016/0550-3213(90)90512-C}{{\em Nucl. Phys. B}
  {\bfseries 337} (1990) 695--708}.

\bibitem{Preskill:1990bm}
J.~Preskill and L.~M. Krauss, ``{Local Discrete Symmetry and Quantum Mechanical
  Hair},'' \href{http://dx.doi.org/10.1016/0550-3213(90)90262-C}{{\em Nucl.
  Phys. B} {\bfseries 341} (1990) 50--100}.

\bibitem{muger2003structure}
M.~M{\"u}ger, ``{On the structure of modular categories},''
  \href{http://dx.doi.org/10.1112/S0024611503014187}{{\em Proceedings of the
  London Mathematical Society} {\bfseries 87} no.~2, (2003) 291--308}.

\bibitem{Nguyen:2021yld}
M.~Nguyen, Y.~Tanizaki, and M.~\"Unsal, ``{Semi-Abelian gauge theories,
  non-invertible symmetries, and string tensions beyond $N$-ality},''
  \href{http://dx.doi.org/10.1007/JHEP03(2021)238}{{\em JHEP} {\bfseries 03}
  (2021) 238}, \href{http://arxiv.org/abs/2101.02227}{{\ttfamily
  arXiv:2101.02227 [hep-th]}}.

\bibitem{Kiskis:1978ed}
J.~E. Kiskis, ``{Disconnected Gauge Groups and the Global Violation of Charge
  Conservation},'' \href{http://dx.doi.org/10.1103/PhysRevD.17.3196}{{\em Phys.
  Rev. D} {\bfseries 17} (1978) 3196}.

\bibitem{Schwarz:1982ec}
A.~S. Schwarz, ``{Field Theories with No Local Conservation of the Electric
  Charge},'' \href{http://dx.doi.org/10.1016/0550-3213(82)90190-0}{{\em Nucl.
  Phys. B} {\bfseries 208} (1982) 141--158}.

\bibitem{Benini:2018reh}
F.~Benini, C.~C\'ordova, and P.-S. Hsin, ``{On 2-Group Global Symmetries and
  their Anomalies},'' \href{http://dx.doi.org/10.1007/JHEP03(2019)118}{{\em
  JHEP} {\bfseries 03} (2019) 118},
  \href{http://arxiv.org/abs/1803.09336}{{\ttfamily arXiv:1803.09336
  [hep-th]}}.

\bibitem{Moradi_2015}
H.~Moradi and X.-G. Wen, ``Universal topological data for gapped quantum
  liquids in three dimensions and fusion algebra for non-abelian string
  excitations,'' \href{http://dx.doi.org/10.1103/physrevb.91.075114}{{\em
  Physical Review B} {\bfseries 91} no.~7, (Feb, 2015) }.

\bibitem{Alford:1990mk}
M.~G. Alford, K.~Benson, S.~R. Coleman, J.~March-Russell, and F.~Wilczek,
  ``{The Interactions and Excitations of Nonabelian Vortices},''
  \href{http://dx.doi.org/10.1103/PhysRevLett.64.1632}{{\em Phys. Rev. Lett.}
  {\bfseries 64} (1990) 1632}. [Erratum: Phys.Rev.Lett. 65, 668 (1990)].

\bibitem{Dine:1992ya}
M.~Dine, R.~G. Leigh, and D.~A. MacIntire, ``{Of CP and other gauge symmetries
  in string theory},''
  \href{http://dx.doi.org/10.1103/PhysRevLett.69.2030}{{\em Phys. Rev. Lett.}
  {\bfseries 69} (1992) 2030--2032},
  \href{http://arxiv.org/abs/hep-th/9205011}{{\ttfamily arXiv:hep-th/9205011}}.

\bibitem{378257}
\href{https://mathoverflow.net/users/169795/ben-heidenreich}{B.~Heidenreich},
  ``{Improved classification of compact Lie groups}.'' {MathOverflow}.
\newblock
  \href{https://mathoverflow.net/q/378257}{https://mathoverflow.net/q/378257}
  (version: 2020-12-08).

\bibitem{378141}
\href{https://mathoverflow.net/users/2383/lspice}{L.~Spice}, ``{Classification
  of (not necessarily connected) compact Lie groups}.'' {MathOverflow}.
\newblock
  \href{https://mathoverflow.net/q/378141}{https://mathoverflow.net/q/378141}
  (version: 2020-12-05).

\bibitem{378220}
\href{https://mathoverflow.net/users/2383/lspice}{L.~Spice}, ``{Does Aut(G)
  $\to$ Out(G) always split for a compact, connected Lie group G?}.''
  {MathOverflow}.
\newblock
  \href{https://mathoverflow.net/q/378220}{https://mathoverflow.net/q/378220}
  (version: 2020-12-06).

\bibitem{150949}
{\href{https://mathoverflow.net/users/3970/khalid-bou-rabee}{K.~Bou-Rabee}},
  ``{In any Lie group with finitely many connected components, does there exist
  a finite subgroup which meets every component?}.'' {MathOverflow}.
\newblock
  \href{https://mathoverflow.net/q/150949}{https://mathoverflow.net/q/150949}
  (version: 2013-12-05).

\bibitem{Polyakov:1974ek}
A.~M. Polyakov, ``{Particle Spectrum in the Quantum Field Theory},'' {\em JETP
  Lett.} {\bfseries 20} (1974) 194--195.

\bibitem{tHooft:1974kcl}
G.~'t~Hooft, ``{Magnetic Monopoles in Unified Gauge Theories},''
  \href{http://dx.doi.org/10.1016/0550-3213(74)90486-6}{{\em Nucl. Phys. B}
  {\bfseries 79} (1974) 276--284}.

\bibitem{Abrikosov:1956sx}
A.~A. Abrikosov, ``{On the Magnetic properties of superconductors of the second
  group},'' {\em Sov. Phys. JETP} {\bfseries 5} (1957) 1174--1182.
[Zh. Eksp. Teor. Fiz.32,1442(1957)].

\bibitem{Nielsen:1973cs}
H.~B. Nielsen and P.~Olesen, ``{Vortex Line Models for Dual Strings},''
\href{http://dx.doi.org/10.1016/0550-3213(73)90350-7}{{\em Nucl. Phys.}
  {\bfseries B61} (1973) 45--61}.

\bibitem{brocker2003representations}
T.~Br{\"o}cker and T.~Dieck, {\em Representations of Compact Lie Groups}.
\newblock Graduate Texts in Mathematics. Springer Berlin Heidelberg, 2003.
\newblock \url{https://books.google.com/books?id=AfBzWL5bIIQC}.

\bibitem{bourbaki2008lie}
N.~Bourbaki, {\em {Lie Groups and Lie Algebras: Chapters 7-9}}, vol.~3.
\newblock Springer Science \& Business Media, 2008.

\bibitem{Brennan:2020ehu}
T.~D. Brennan and C.~Cordova, ``{Axions, Higher-Groups, and Emergent
  Symmetry},'' \href{http://arxiv.org/abs/2011.09600}{{\ttfamily
  arXiv:2011.09600 [hep-th]}}.

\bibitem{Heidenreich:2020pkc}
B.~Heidenreich, J.~McNamara, M.~Montero, M.~Reece, T.~Rudelius, and
  I.~Valenzuela, ``{Chern-Weil Global Symmetries and How Quantum Gravity Avoids
  Them},'' \href{http://arxiv.org/abs/2012.00009}{{\ttfamily arXiv:2012.00009
  [hep-th]}}.

\bibitem{Witten:1979ey}
E.~Witten, ``{Dyons of Charge e theta/2 pi},''
  \href{http://dx.doi.org/10.1016/0370-2693(79)90838-4}{{\em Phys. Lett. B}
  {\bfseries 86} (1979) 283--287}.

\bibitem{McNamara:2021cuo}
J.~McNamara, ``{Gravitational Solitons and Completeness},''
  \href{http://arxiv.org/abs/2108.02228}{{\ttfamily arXiv:2108.02228
  [hep-th]}}.

\bibitem{Callan:1984sa}
C.~G. Callan, Jr. and J.~A. Harvey, ``{Anomalies and Fermion Zero Modes on
  Strings and Domain Walls},''
  \href{http://dx.doi.org/10.1016/0550-3213(85)90489-4}{{\em Nucl. Phys. B}
  {\bfseries 250} (1985) 427--436}.

\bibitem{Cordova:2019jnf}
C.~C\'ordova, D.~S. Freed, H.~T. Lam, and N.~Seiberg, ``{Anomalies in the Space
  of Coupling Constants and Their Dynamical Applications I},''
  \href{http://dx.doi.org/10.21468/SciPostPhys.8.1.001}{{\em SciPost Phys.}
  {\bfseries 8} no.~1, (2020) 001},
  \href{http://arxiv.org/abs/1905.09315}{{\ttfamily arXiv:1905.09315
  [hep-th]}}.

\bibitem{Jackiw:1975ep}
R.~Jackiw, ``{Charge and Mass Spectrum of Quantum Solitons},'' {\em {Gauge
  Theories and Modern Field Theory. Proceedings: Northeastern University,
  Boston, Sep 26-27, 1975 {\rm (R.~L.~Arnowitt and P.~Nath, eds.), MIT
  Press,}}} (1976) 377--401.

\bibitem{Witten:2015aoa}
E.~Witten, ``{Three lectures on topological phases of matter},''
  \href{http://dx.doi.org/10.1393/ncr/i2016-10125-3}{{\em Riv. Nuovo Cim.}
  {\bfseries 39} no.~7, (2016) 313--370},
  \href{http://arxiv.org/abs/1510.07698}{{\ttfamily arXiv:1510.07698
  [cond-mat.mes-hall]}}.

\bibitem{McNamara:2020uza}
J.~McNamara and C.~Vafa, ``{Baby Universes, Holography, and the Swampland},''
  \href{http://arxiv.org/abs/2004.06738}{{\ttfamily arXiv:2004.06738
  [hep-th]}}.

\bibitem{Witten:2017hdv}
E.~Witten, ``{Symmetry and Emergence},''
  \href{http://dx.doi.org/10.1038/nphys4348}{{\em Nature Phys.} {\bfseries 14}
  no.~2, (2018) 116--119}, \href{http://arxiv.org/abs/1710.01791}{{\ttfamily
  arXiv:1710.01791 [hep-th]}}.

\bibitem{Ibanez:1991hv}
L.~E. Ibanez and G.~G. Ross, ``{Discrete gauge symmetry anomalies},''
\href{http://dx.doi.org/10.1016/0370-2693(91)91614-2}{{\em Phys. Lett.}
  {\bfseries B260} (1991) 291--295}.

\bibitem{Ibanez:1991pr}
L.~E. Ibanez and G.~G. Ross, ``{Discrete gauge symmetries and the origin of
  baryon and lepton number conservation in supersymmetric versions of the
  standard model},'' \href{http://dx.doi.org/10.1016/0550-3213(92)90195-H}{{\em
  Nucl. Phys. B} {\bfseries 368} (1992) 3--37}.

\bibitem{Dreiner:2005rd}
H.~K. Dreiner, C.~Luhn, and M.~Thormeier, ``{What is the discrete gauge
  symmetry of the MSSM?},''
  \href{http://dx.doi.org/10.1103/PhysRevD.73.075007}{{\em Phys. Rev. D}
  {\bfseries 73} (2006) 075007},
  \href{http://arxiv.org/abs/hep-ph/0512163}{{\ttfamily arXiv:hep-ph/0512163}}.

\bibitem{Zeldovich:1974uw}
Y.~B. Zeldovich, I.~Y. Kobzarev, and L.~B. Okun, ``{Cosmological Consequences
  of the Spontaneous Breakdown of Discrete Symmetry},'' {\em Zh. Eksp. Teor.
  Fiz.} {\bfseries 67} (1974) 3--11.

\bibitem{Kibble:1976sj}
T.~W.~B. Kibble, ``{Topology of Cosmic Domains and Strings},''
  \href{http://dx.doi.org/10.1088/0305-4470/9/8/029}{{\em J. Phys. A}
  {\bfseries 9} (1976) 1387--1398}.

\bibitem{Kibble:1982dd}
T.~W.~B. Kibble, G.~Lazarides, and Q.~Shafi, ``{Walls Bounded by Strings},''
  \href{http://dx.doi.org/10.1103/PhysRevD.26.435}{{\em Phys. Rev. D}
  {\bfseries 26} (1982) 435}.

\bibitem{Chacko:2005pe}
Z.~Chacko, H.-S. Goh, and R.~Harnik, ``{The Twin Higgs: Natural electroweak
  breaking from mirror symmetry},''
  \href{http://dx.doi.org/10.1103/PhysRevLett.96.231802}{{\em Phys. Rev. Lett.}
  {\bfseries 96} (2006) 231802},
  \href{http://arxiv.org/abs/hep-ph/0506256}{{\ttfamily arXiv:hep-ph/0506256}}.

\bibitem{Geller:2014kta}
M.~Geller and O.~Telem, ``{Holographic Twin Higgs Model},''
  \href{http://dx.doi.org/10.1103/PhysRevLett.114.191801}{{\em Phys. Rev.
  Lett.} {\bfseries 114} (2015) 191801},
  \href{http://arxiv.org/abs/1411.2974}{{\ttfamily arXiv:1411.2974 [hep-ph]}}.

\bibitem{Beauchesne:2015lva}
H.~Beauchesne, K.~Earl, and T.~Gr\'egoire, ``{The spontaneous $\mathbb{Z}_2$
  breaking Twin Higgs},'' \href{http://dx.doi.org/10.1007/JHEP01(2016)130}{{\em
  JHEP} {\bfseries 01} (2016) 130},
  \href{http://arxiv.org/abs/1510.06069}{{\ttfamily arXiv:1510.06069
  [hep-ph]}}.

\end{thebibliography}\endgroup
\bibliographystyle{utphys}
\end{document}